\title{Intermittent Control in Man and Machine}
\author{Peter Gawthrop}
\affil{Systems Biology Laboratory,
   Melbourne School of Engineering,\authorcr
   University of Melbourne,
   Victoria 3010, Australia.}
\author{Henrik Gollee}
\affil{School of Engineering, University of Glasgow, Glasgow, UK}
\author{Ian Loram}
\affil{School of Healthcare Science, Manchester Metropolitan
  University, \authorcr Manchester, UK}
\begin{document}
\pagenumbering{roman}
\maketitle 
\begin{abstract}
  Intermittent control has a long history in the physiological
  literature and there is strong experimental evidence that some human
  control systems are intermittent. Intermittent control has also
  appeared in various forms in the engineering literature. This
  article discusses a particular mathematical model of Event-driven
  Intermittent Control which brings together engineering and
  physiological insights and builds on and extends previous work in
  this area. Illustrative examples of the properties of Intermittent
  Control in a physiological context are given together with suggestions for
  future research directions in both physiology and engineering.
\end{abstract}

\newpage
\tableofcontents\newpage
\listoffigures
\newpage
\pagenumbering{arabic}
\section{Introduction}
\label{sec:intro}
Conventional sampled-data control uses a \emph{zero-order} hold, which
produces a piecewise-constant control signal \citep{FraPowEma94}, and
can be used to give a sampled-data implementation which approximates a
previously-designed continuous-time controller.  
In contrast to conventional sampled data control, intermittent control
\citep{GawWan07}, explicitly embeds the underlying continuous-time
closed-loop system in a \emph{generalised hold}.  A number of version
of the generalised hold are available; this chapter focuses on the
\emph{system-matched} hold \citep{GawWan11} which expicitly generates
an open-loop intersample control trajectory based on the underlying
continuous-time closed-loop control system. 
Other versions of the generalised hold include Laguerre function based
holds \citep{GawWan07} and a ``tapping'' hold \citep{GawGol12}.

There are at three areas where intermittent control has been used:
\begin{enumerate}
\item continuous-time model-based predictive control where
  the intermittency is associated with on-line optimisation
  \citep{RonArsGaw99,GawWan09,GawWan10},
\item event-driven control systems where the intersample
  interval is time varying and determined by the event times
  \citep{GawWan09a,GawWan11},
\item and physiological control systems which, in some cases, have an
  event-driven intermittent character \citep{LorLak02,GawLorLakGol11}.
  This intermittency may be due to the ``computation'' in the central
  nervous system. Although this  Chapter is orientated towards physiological
  control systems, but we believe that it is more widely applicable.
\end{enumerate}
%

Intermittent control has a long history in the
physiological literature including
\citep{Cra47a,Cra47b,Vin48,NavSta68,NeiNei88,MiaWeiSte93a,BhuSha99,LorLak02,LorGolLakGaw10,GawLorLakGol11}.
There is strong experimental evidence that some human control systems
are intermittent
\citep{Cra47a,Vin48,NavSta68,BotCasMorSan05,LorKamGolGaw12,KamGawGolLor13}
and it has been suggested that this intermittency arises in the
central nervous system (CNS) \citep{KamGawGolLakLor13}.  For this
reason, computational models of intermittent control are important
and, as discussed below, a number of versions with various
characteristics have appeared in the literature.
Intermittent control has also appeared in various forms in the
engineering literature including
\citep{RonArsGaw99,ZhiMid03,MonAnt03,Ins06,Ast08,GawWan07,GawWan09a,GawNeiWag12}.

Intermittent control action may be initiated at regular intervals determined by a
clock, or at irregular intervals determined by events; an event is
typically triggered by an error signal crossing a threshold.
Clock-driven control is discussed by \citet{NeiNei88} and
\citet{GawWan07} and analysed in the frequency domain by \citet{Gaw09}.
Event-driven control is used by
\citet{BotCasMorSan05,BotYasNimCasMor08}, \citet{Ast08},
\citet{AsaTasNomCasMor09}, \citet{GawWan09a} and
\citet{KowGleBro12}. 
\citet[Section 4]{GawLorLakGol11} discuss event-driven control but with a
lower limit $\Delta_{min}$ on the time interval between events; this
gives a range of behaviours including continuous, timed and
event-driven control. Thus, for example, threshold based event-driven
control becomes effectively clock driven with interval $\Delta_{min}$
if the threshold is small compared to errors caused by relatively
large disturbances.
There is evidence that human control systems are, in fact, event
driven \citep{NavSta68,LorKamGolGaw12,KamGawGolLakLor13,LorKamLakGolGaw14}.
For this reason, this Chapter focuses on event-driven control.

As mentioned previously, intermittent control is based on an
\emph{underlying continuous-time design method}; in particular the
classical state-space approach is the basis of the intermittent
control of \citet{GawLorLakGol11}. There are two relevant versions of
this approach: state feedback and output feedback.
State-feedback control requires that the current system state (for
example angular position and velocity of an inverted pendulum) is
available for feedback. In contrast, output feedback requires a
measurement of the system output (for example angular position of an
inverted pendulum). The classical approach to output feedback in a
state-space context \citep{KwaSiv72,GooGraSal01} is to use an observer
(or the optimal version, a Kalman filter) to deduce the state from
the system output.

Human control systems are associated with time-delays. In engineering
terms, it is well-known that a predictor can be used to overcome time
delay \citep{Smi59,Kle69,Gaw82b}.
As discussed by many authors
\citep{KleBarLev70,BarKleLev70,McR80,MiaWeiWolSte93b,WolMiaKaw98,BhuSha99,KooJacKooHel01,GawLakLor08,GawLorLak09,GawLorLakGol11,LorKamGolGaw12},
it is plausible that physiological control systems have built in
model-based prediction. Following \citet{GawLorLakGol11} this chapter bases intermittent
controller on an underlying predictive design.

The use of networked control systems leads to the ``sampling period
jitter problem'' \citep{Sal07} where uncertainties in transmission
time lead to unpredictable non-uniform sampling and stability issues
\citep{CloWouHeeNij09}. A number of authors have suggested that
performance may be improved by replacing the standard zero-order hold
by a generalised hold \citep{Sal05,Sal07} or using a dynamical model
of the system between samples \citep{ZhiMid03,MonAnt03}.
Similarly, event-driven control \citep{HeeSanBos08,Ast08}, where
sampling is determined by events rather than a clock, also leads to
unpredictable non-uniform sampling. Hence strategies for event-driven
control would be expected to be similar to strategies for networked
control. One particular form of event-driven control where events
correspond to the system state moving beyond a fixed boundary has been
called Lebesgue sampling in contrast to the so-called Riemann sampling
of fixed-interval sampling \citep{AstBer02,AstBer03}.
In particular, \citet{Ast08} uses a  ``control signal generator'':
essentially a dynamical model of the system between samples as
advocated by \citet{ZhiMid03} for the networked control case.

As discussed previously, intermittent control has an interpretation
which contains a generalised hold \citep{GawWan07}.  One particular
form of hold is based on the closed-loop system dynamics of an
underlying continuous control design: this will be called the
\emph{System-Matched Hold} (SMH) in this Chapter. Insofar as this
special case of intermittent control uses a dynamical model of the
controlled system to generate the (open-loop) control between sample
intervals, it is related to the strategies of both \citet{ZhiMid03}
and \citet{Ast08}. However, as shown in this Chapter, intermittent
control provides a framework within which to analyse and design a
range of control systems with unpredictable non-uniform sampling
possibly arising from an event-driven design.
In particular, it is shown by \citet{GawWan11} that the SMH-based
intermittent controller is associated with a separation principle
similar to that of the underlying continuous-time controller, which
states that the closed-loop poles of the intermittent control system
consist of the control system poles and the observer system poles, and
the interpolation using the system matched hold does not lead to the
changes of closed-loop poles.
As discussed by \citet{GawWan11}, this separation principle is only
valid when using the SMH. For example, intermittent control based on
the standard zero-order hold (ZOH) does not lead to such a separation
principle and therefore closed-loop stability is compromised when the
sample interval is not fixed.

Human movement is characterised by low-dimensional goals achieved
using high-dimensional muscle input \citep{ShaWis05}; in control system
terms the system has redundant actuators. As pointed out by
\citet{Lat12}, the abundance of actuators is an advantage rather than
a problem. One approach to redundancy is by using the concept of
\emph{synergies} \citep{NeiNei05}: groups of muscles which act in
concert to give a desired action. It has been shown that such
synergies arise naturally in the context of optimal control
\citep{Tod04,TodJor02} and experimental work has verified the
existence of synergies \emph{in vivo} \citep{Tin07,SafTin12}.
Synergies may be arranged in hierarchies. For example, in the context
of posture, there is a natural three-level hierarchy with increasing
dimension comprising task space, joint space and muscle space. Thus,
for example, a balanced posture could be a task requirement
achievable by a range of possible joint torques each of which in turn
corresponds to a range of possible muscle activation. This chapter
focuses on the  task space -- joint space hierarchy previously
examined in the context of robotics \citep{Kha87}.

In a similar way, humans have an abundance of measurements available;
in control system terms the system has redundant sensors. As discussed
by \citet{KooJacKooGro99} and \citet{KooJacKooHel01}, such sensors are
utilised with appropriate sensor integration. In control system
terms, sensor redundancy can be incorporated into state-space control
using observers or Kalman-Bucy filters \citep{KwaSiv72,GooGraSal01};
this is the dual of the optimal control problem.
Again sensors can be arranged in a hierarchical fashion.
Hence optimal control and filtering provides the basis for a
continuous-time control system that simultaneously applies sensor
fusion to utilise sensor redundancy and optimal control to utilise
actuator redundancy. 

For these reasons, this Chapter extends the single-input single-output
intermittent controller of \citet{GawLorLakGol11} to the multivariable
case. As the formulation of \citet{GawLorLakGol11} is set in the
state-space, this extension is quite straightforward. Crucially, the
generalised hold, and in particular the system matched hold (SMH),
remains as the heart of multivariable intermittent control.

The particular mathematical model of intermittent control proposed by
\citet{GawLorLakGol11} combines event-driven control action based on
estimates of the controlled system state (position, velocity etc.)
obtained using a standard continuous-time state observer with
continuous measurement of the system outputs. This model of
intermittent control can be summarised as ``continuous attention with
intermittent action''. However, the state estimate is only used at the
event-driven sample time; hence, it would seem that it is not necessary
for the state observer to monitor the controlled system all of the
time. Moreover, the experimental results of \citet{Osb13} suggest that
humans can perform well even when vision is intermittently occluded.
This Chapter proposes an intermittent control model where a
continuous-time observer monitors the controlled system
intermittently: the periods of monitoring the system measurements are
interleaved with periods where the measurement is \emph{occluded}.  
%
%
This model of intermittent control can be summarised as ``intermittent
attention with intermittent action''.

This chapter has two main parts:
\begin{itemize}
\item Sections
\ref{sec:continuous}--~\ref{sec:ex:comp} which give basic ideas about
intermittent control and
\item Sections \ref{sec:constrained}--~\ref{sec:ex:reach} which
  explore more advanced topics and applications.
\end{itemize}

\section{Continuous control}
\label{sec:continuous}
\begin{figure}
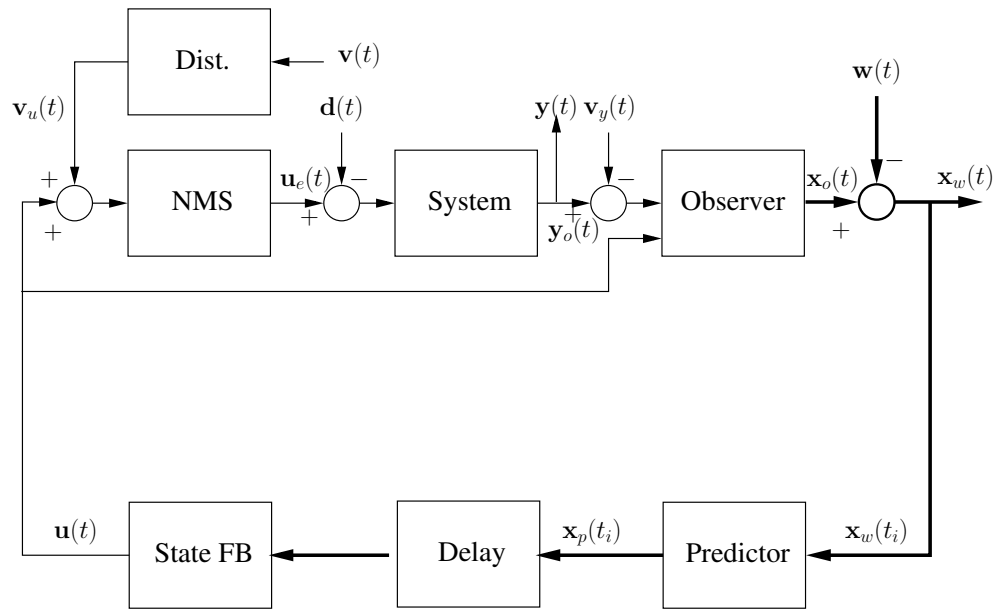

  \centering
  \Fig{MSS_arch}{\Figsize}
  \caption[The Observer, Predictor, State-feedback (OPF) model.]  {The
    Observer, Predictor, State-feedback (OPF) model. The block
    labelled ``NMS'' is a linear model of the neuro-muscular dynamics
    with input $u(t)$; in the engineering context, this would
    represent actuator dynamics. ``System'' is the linear external
    controlled system driven by the externally observed control signal
    $u_e$ and disturbance $d$, and with output $y$ and associated
    measurement noise $v_y$.  The input disturbance $v_u$ is modelled
    as the output of the block labelled ``Dist.''  and driven by the
    external signal $v$.  The block labelled ``Delay'' is a pure
    time-delay of $\del$ which accounts for the various delays in the
    human controller. The block labelled ``Observer'' gives an
    estimate $\xo$ of the state $\xx$ of the composite ``NMS'' and
    ``System'' (and, optionally, the ``Dist.'') blocks. The predictor
    provides an estimate of the future state error $\xop(t)$ the
    delayed version of which is multiplied by the feedback gain vector
    $\kk$ (block ``State FB'') to give the feedback control signal
    $u$. This figure is based on \citet[Fig. 1]{GawLorLakGol11} which
    is in turn based on \citet[Fig. 2]{Kle69} }
   \label{fig:MSS_arch}
\end{figure}

Intermittent control is based on an \emph{underlying design method}
which, in this Chapter, is taken to be conventional state-space based
observer/state-feedback control \citep{KwaSiv72,GooGraSal01} with the
addition of a state predictor \citep{Ful68,Kle69,SagMel71,Gaw76a}.
Other control design approaches have been used in this context
including pole-placement \citep{GawRon02} and cascade control
\citep{GawLeeHalODw13}. It is also noted that many control designs can
be embedded in LQ design \citep{Mac07,FooWey11} and thence used as a
basis for intermittent control \citep{GawWan10}.

\citet{GawLorLakGol11} consider a single-input single-output
formulation of intermittent control; this Chapter considers a
multi-input multi-output formulation.  As in the single-input
single-output case, this Chapter considers linear time invariant
systems with an $n\times 1$ vector state $\xx$. As discussed by
\citet{GawLorLakGol11}, the system, neuro-muscular (NMS) and
disturbances can be combined into a state-space model. For simplicity,
the measurement noise signal $v_y$ will be omitted in this Chapter
except where needed.
In contrast, however, this Chapter is based on a multiple input,
multiple output formulation.  Thus the corresponding state-space system
has multiple outputs represented by the $n_y\times 1$ vector $\yy$ and
$n_o\times 1$ vector $\yy_o$, multiple control inputs represented by
the $n_u\times 1$ vector $\uu$ and multiple unknown disturbance inputs
represented by the $n_u\times 1$ vector $\ddd$ where:
\begin{equation}
  \label{eq:sys}
  \begin{cases}
      \ddt{{\xx}}(t) &= \A {\xx}(t) + \B \uu(t) + \B_d \ddd(t)\\
      \yy(t) &= \C {\xx}(t)\\
      \yy_o(t) &= \C_o {\xx}(t)
  \end{cases}
\end{equation}
$\A$ is an $n \times n$ matrix, $\B$ and $\B_d$ are a $n \times
n_u$ matrices, $\C$ is a $n_y \times n$ matrix and $\C_o$ is a $n_o
\times n$ matrix. The $n \times 1$ column vector $\xx$ is the system
state.
In the multivariable context, there is a distinction between the $n_y
\times n$ task vector $\yy$ and the $n_o \times n$ observed vector
$\yy_o$: the former corresponds to control objectives whereas the
latter corresponds to system sensors and so provides information to
the observer.
Equation (\ref{eq:sys}) is identical to \citet[Equation
(5)]{GawLorLakGol11} except that the scalar output $y$ is replaced by
the vector outputs $\yy$ and $\yy_o$, the scalar input $u$ is replaced by
the vector input $\uu$ and the scalar input disturbance $d$ is replaced by
the vector input disturbance $\ddd$.
Following standard practice \citep{KwaSiv72,GooGraSal01}, it is
assumed that $\A$ and $\B$ are such that the system (\ref{eq:sys}) is
\emph{controllable} with respect to $\uu$ and that $\A$ and $\C_o$ are
such that the system (\ref{eq:sys}) is \emph{observable} with respect
to $\yy_o$.

As described previously \citep{GawLorLakGol11}, Equation
(\ref{eq:sys}) subsumes a number of subsystems including the
neuromuscular (actuator dynamics in the engineering
  context) and disturbance subsystems of Figure
\ref{fig:MSS_arch}.

\subsection{Observer design and sensor fusion}
\label{sec:obsdes}
The system states $\xx$ of Equation (\ref{eq:sys}) are rarely
available directly due to sensor placement or sensor noise. As
discussed in the textbooks \citep{KwaSiv72,GooGraSal01}, an
\emph{observer} can be designed based on the system model
(\ref{eq:sys}) to approximately deduce the system states $\xx$ from
the measured signals encapsulated in the vector $\yy_o$.  
In particular, the observer is given by:
\begin{align}
  \ddt{{\xo}}(t)  &= \A_o {\xo}(t) + \B \uu(t) + \LL [\yy_o(t) - \vv_y(t) ] \label{eq:obs}\\
  \text{where } \A_o &= \A - \LL\C_o \label{eq:A_o}
\end{align}
where the signal $\vv_y(t)$ is the measurement noise. The $n \times
n_o$ matrix $\LL$ is the \emph{observer gain matrix}. As discussed by,
for example \citet{KwaSiv72} and \citet{GooGraSal01}, it is
straightforward to design $\LL$ using a number of approaches including
pole-placement and the linear-quadratic optimisation approach. The
latter is used here and thus
\begin{equation}
  \label{eq:L_0}
  \LL = \LL_0
\end{equation}
where $\LL_0$ is the observer gain matrix obtained using
linear-quadratic optimisation.

The observer deduces system states from the $n_o$ observed signals
contained in $\yy_o$; it is thus a particular form of sensor fusion
with properties determined by the $n \times n_y$ matrix  $\LL$.

As discussed by \citet{GawLorLakGol11}, because the system
(\ref{eq:sys}) contains the disturbance dynamics of Figures
\ref{fig:MSS_arch} and \ref{fig:MIC_arch}, the corresponding observer
deduces not only the state of the blocks labeled ``System'' and
``NMS'' in Figures
\ref{fig:MSS_arch} and \ref{fig:MIC_arch}, but also the state of block
labelled ``Dist.''; thus it acts as a \emph{disturbance
  observer} \citep[Chap. 14]{GooGraSal01}. A simple example appears in
Section \ref{sec:ex:elementary}.

\subsection{Prediction}
\label{sec:pred}
Systems and controllers may contain pure time delays. Time delays are
traditionally overcome using a \emph{predictor}. The predictor of
\citet{Smi59} (discussed by \citet{Ast77}) was an early attempt at
predictor design which, however, cannot be used when the controlled
system is unstable. State-space based predictors have been developed
and used by a number of authors including \citet{Ful68},
\citet{Kle69},  \citet{SagMel71} and  \citet{Gaw76a}. 

In particular, following \citet{Kle69}, a state \emph{predictor} is
given by:
\begin{equation}
  \label{eq:pred}
  \xop(t+\del) = e^{\A\del}\xo(t) + \int_0^{\del}
  e^{\A \tpr}\B \uu(t-\tpr) d \tpr
\end{equation}
Again, apart from the scalar $u$ being replaced by the vector $\uu$
and $\B$ becoming an $n_u\times n$ matrix, Equation (\ref{eq:pred}) is
the same as in the single input ($n_u=1$) case.

\subsection{Controller design and motor synergies}
\label{sec:condes}
As described in the textbooks, for example \citet{KwaSiv72} and
\citet{GooGraSal01}, the LQ controller problem involves minimisation
of:
\begin{equation}
  \label{eq:LQ}
  \int_0^{t_1} \xx^T(t)\Q_c\xx(t) + \uu^T(t)\R_c\uu(t) \; dt
\end{equation}
and letting $t_1 \rightarrow \infty$. $\Q_c$ is the $n\times n$
state-weighting matrix and $\R_c$ is the $n_u \times n_u$
control-weighting matrix.  $\Q_c$ and $\R_c$ are used as design
parameters in the rest of this Chapter. As discussed previously
\citep{GawLorLakGol11}, the resultant state-feedback gain $\kk$
($n\times n_u$) may be
combined with the predictor equation (\ref{eq:pred}) to give the
control signal $\uu$
\begin{align}
  \uu(t) &=  \kk \xw(t) \label{eq:u}\\
  \text{where } \xw &= \xop(t) - \xss w(t) \label{eq:x_w}
\end{align}
As discussed by \citet{Kle69}, the use of the state predictor gives a
closed-loop system with no feedback delay and dynamics determined by
the delay-free closed loop system matrix $\A_c$ given by:
\begin{equation}
  \label{eq:A_c}
  \A_c = \A - \B\kk
\end{equation}

As mentioned by \citet{TodJor02} and \citet{Tod04}, control synergies
arise naturally from optimal control and are defined by the elements
the $n_u \times n$ matrix $\kk$.

A key result of state-space design in the delay free case is the
\emph{separation principle} (see \citet[Section 5.3]{KwaSiv72} and
\citet[Section 18.4]{GooGraSal01}) whereby the observer and controller can
be design separately. 

\subsection{Steady-State design}
\label{sec:steady-state}
As discussed in the single-input, single output case by
\citet{GawLorLakGol11}, there are many ways to include the setpoint in
the feedback controller and one way is to compute the steady-state
state $\xss$ and control signal $\uss$ corresponding to the
equilibrium of the ODE (\ref{eq:sys}):
\begin{align}
  \ddt{{\xx}} &= \Z_{n\times1}   \label{eq:equilibrium}\\
  \yss &= \C \xss\label{eq:equilibrium_y}
\end{align}
corresponding to a given constant value of output $\yss$. As discussed
by \citet{GawLorLakGol11}, the scalars $\xss$ and $\uss$ are uniquely
determined by $\yss$.
In contrast, the multivariable case has additional flexibility; this
section takes advantage of this flexibility by extending the
equilibrium design in various ways.

In particular, Equation (\ref{eq:equilibrium_y}) is replaced by
\begin{equation}
  \label{eq:y_ss}
  \yss = \C_{ss} \xss
\end{equation}
where $\yss$ is a constant $n_{ss}\times m_{ss}$ matrix,
$\xss$ is a constant $n \times m_{ss}$ matrix,
and $\C_{ss}$ is an $n_{ss}\times m_{ss}$ matrix.

%
Typically, the equilibrium space defined by $\yss$ corresponds to
the task space so that, with reference to Equation (\ref{eq:sys}),
each column of $\yss$ is a steady-state value of $\yy$ (for
example, $\yss = \I_{n_y\times n_y}$) and $\C_{ss} = \C$.
Further, assume that the disturbance $\ddd(t)$ of (\ref{eq:sys}) has
$m_{ss}$ alternative constant values which form the columns of the
$n_u \times m_{ss}$ matrix $\dss$.

Substituting the steady-state condition of Equation
(\ref{eq:equilibrium}) into Equation (\ref{eq:sys}) and combining with
Equation (\ref{eq:y_ss}) gives:
\begin{align}
  \Ss
  \begin{bmatrix}
    \xss \\ \uss
  \end{bmatrix}
  &=
  \begin{bmatrix}
    -\B_d \dss\\yss
  \end{bmatrix}  \label{eq:x_ss}\\
  \text{where } \Ss &=  
  \begin{bmatrix}
    \A & \B\\
    \C_{ss} & \Z_{n_{ss} \times n_u}
  \end{bmatrix} \label{eq:Ss}
\end{align}
The matrix $\Ss$, has $n+n_{ss}$ rows $n+n_{u}$ columns, thus there
are three possibilities:
%
%
\begin{description}
\item [$n_{ss}=n_u$] If $\Ss$ is full rank, Equation (\ref{eq:x_ss})
  has a unique solution for ${\xx}_{ss}$ and ${\uu}_{ss}$.
\item [$n_{ss}<n_u$] Equation (\ref{eq:x_ss}) has many solutions
  corresponding to a low dimensional manifold in a high dimensional
  space. A particular solution may be chosen to satisfy an additional
  criterion such as a minimum norm solution. An example is given in
  Section~\ref{sec:track}.
\item [$n_{ss}>n_u$] Equation (\ref{eq:x_ss}) is over-determined; a
  least-squares solution is possible. This case is considered in more
  detail in Section \ref{sec:constrained} and an example is given in
  Section~\ref{sec:dist}.
\end{description}

Having obtained a solution for $\xss$, each of the $m_{ss}$ columns of
the $n \times m_{ss}$ steady-state matrix $\xss$ can be associated
with an element of a $m_{ss} \times 1$ \emph{weighting vector}
$\ww(t)$.  The error signal $\xow(t)$ is then defined as as the
difference between the estimated state $\xo(t)$ and the weighted
columns of $\xss$ as:
\begin{align}
  \xow(t) &= \xo(t)- \xss\ww(t) \label{eq:xow}
\end{align}
Following \citet{GawLorLakGol11}, $\xow(t)$ replaces $\xo$ in the
predictor equation (\ref{eq:pred}) and the state feedback controller
remains Equation (\ref{eq:u}).

\paragraph{Remarks.}
\begin{enumerate}
\item In the single input case ($n_u=1$) setting $\yss=1$ and
  $\dss=0$ gives the same formulation as given by
  \citet{GawLorLakGol11} and $\ww(t)$ is the setpoint.
\item Disturbances may be unknown. Thus using this approach requires
  disturbances to be estimated in some way.
\item Setpoint tracking is considered in Section \ref{sec:track}.
\item The effect of a constant disturbance is considered in Section
  \ref{sec:dist}.
\item Constrained solutions are considered in Section
  \ref{sec:constrained_steady-state}.
\end{enumerate}

\section{Intermittent control}
\label{sec:ic}
\begin{figure}
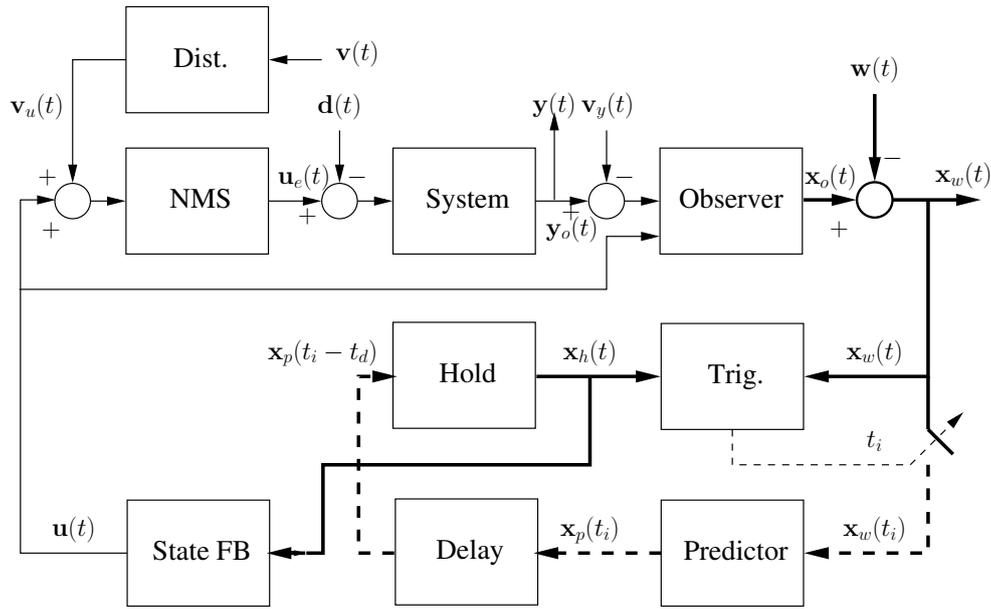

  \centering
  \Fig{MIC_arch}{\Figsize}
    \caption[Intermittent control.]{Intermittent control. This diagram has blocks in common
    with those of the OPF of Figure \ref{fig:MSS_arch}: ``NMS'', ``Dist.'',
    ``System'', ``Observer'', ``Predictor'' and ``State FB'' which
    have the same function; the continuous-time ``Predictor'' block of
    Figure \ref{fig:MSS_arch} is replaced by the much simpler intermittent
    version here. There are three new elements: a sampling element
    which samples $\xow$ at discrete times $t_i$; the block labelled
    ``Hold'', the system-matched hold, which provides the
    continuous-time input to the ``State FB'' block and and the event
    detector block labelled ``Trig.''  which provides the trigger for
    the sampling times $t_i$. The dashed lines represent sampled
    signals defined only at the sample instants $t_i$.
    This figure is based on \citet[Fig. 2]{GawLorLakGol11}.  }
  \label{fig:MIC_arch}
\end{figure}
Intermittent control is based on the underlying continuous-time design
of Section \ref{sec:continuous}. The purpose is to allow control
computation to be performed intermittently at discrete time points
-- which may be determined by time (clock-driven) or the system state
(event-driven) -- whilst retaining much of the continuous-time
behaviour. 

A disadvantage of traditional clock-driven discrete-time control
\citep{FraPow80,Kuo80} based on the zero-order hold is that the control needs
to be redesigned for each sample interval. This also means that the
zero-order hold approach is inappropriate for event-driven
control. The intermittent approach avoids these issues by replacing
the zero-order hold by the \emph{system-matched hold} (SMH).  Because
the SMH is based on the system state, it turns out that it does not
depend on the number of system inputs $n_u$ or outputs $n_y$ and
therefore the SMH described by \citet{GawLorLakGol11} in the single
input $n_u=1$, single output context $n_y=1$ context carries over to
the multi-input $n_u>1$ and multi-output $n_y>1$ case.

This section is a tutorial introduction to the SMH based intermittent
controller in both clock-driven and event-driven cases. Section
\ref{sec:ic_time} looks at the various time-frames involved, Section
\ref{sec:ic_hold} describes the system-matched hold (SMH) and Sections
\ref{sec:ic_obs}~--~\ref{sec:ic_sfb} look at the
observer, predictor and feedback control, developed in the
continuous-time context in Section \ref{sec:continuous}, in the
intermittent context. Section \ref{sec:ic_event} looks at the event
detector used for the event-driven version of intermittent control.


\subsection{Time  frames}
\label{sec:ic_time}
As discussed by \citet{GawLorLakGol11}, intermittent control makes use
of three time frames:
\begin{enumerate}
\item \textbf{continuous-time}, within  which the controlled system
  \eqref{eq:sys} evolves, which is denoted by $t$.
\item \textbf{discrete-time} points at which feedback occurs indexed by
  $i$. Thus, for example, the discrete-time time instants are denoted
  $t_i$ and the corresponding estimated state is $\xo_i=\xo(t_i)$.
  The $i$th
  \textbf{intermittent interval} $\Delta_{ol}=\Delta_i$\footnote{Within
  this chapter, we will use $\Delta_{ol}$ to refer to the generic
  concept of intermittent interval and $\Delta_i$ to refer to the
  length of the $i$th interval} is defined as
  \begin{equation}
    \label{eq:Delta_i}
    \Delta_{ol} = \Delta_i = t_{i+1}-t_i
  \end{equation}
This Chapter distinguishes between event times $t_i$ and the
corresponding sample times $t^s_i$. In particular, the model of
\citet{GawLorLakGol11} is extended so that sampling occurs a fixed time
$\Delta_s$ after an event at time $t_i$ thus:
\begin{equation}
  \label{eq:t_s_i}
  t^s_i = t_i + \Delta_s
\end{equation}
$\Delta_s$ is called the \emph{sampling delay} in the sequel.
\item \textbf{intermittent-time} is a continuous-time variable, denoted
  by $\tau$, restarting at each intermittent interval. Thus, within the
  $i$th intermittent interval:
  \begin{equation}
    \label{eq:tau}
    \tau = t-t_i
  \end{equation}
  Similarly, define the intermittent time $\tau^s$ after a sample by:
  \begin{equation}
    \label{eq:taus}
    \tau^s = t-t^s_i
  \end{equation}

A lower bound $\Delta_{min}$ is imposed on each
intermittent interval $\Delta_i>0$ (\ref{eq:Delta_i}):
\begin{equation}
  \label{eq:PRP}
  \Delta_i > \Delta_{min} >0
\end{equation}
As discussed by \citet{GawLorLakGol11} and in Section \ref{sec:ex:prp},
$\Delta_{min}$ is related to the the Psychological Refractory Period
(PRP) of \citet{Tel31} as discussed by \citet{Vin48} to explain the
human response to double stimuli.
As well as corresponding to the PRP explanation, the lower bound of
(\ref{eq:PRP}) has two implementation advantages.  
Firstly, as discussed by \citet{RonArsGaw99}, the time taken to
compute the control signal (and possibly other competing tasks) can be
up to $\Delta_{min}$. It thus provides a model for a single processor
bottleneck.
Secondly, as discussed by \citet{GawLorLakGol11}, the predictor
equations are particularly simple if the system time-delay
$\del\le\Delta_{min}$.
\end{enumerate}

\subsection{System-matched hold}
\label{sec:ic_hold}
The system-matched hold (SMH) is the key component of the intermittent
control. As described by \citet[Equation (23)]{GawLorLakGol11}, the
SMH state $\xh$ evolves in the \emph{intermittent} time frame $\tau$
as
    \begin{align}
    \ddtau{{\xh}}(\tau) &= \A_h \xh (\tau)\label{eq:hold}\\
    \text{where } \A_h &= \A_c\\
    \xh(0) &= \xp(t^s_i-\del) \label{eq:x_w_smh}
  \end{align}
where $\A_c$ is the closed-loop system matrix (\ref{eq:A_c}) and
$\xop$ is given by the predictor equation (\ref{eq:pred}). 
The hold state $\xh$ replaces the predictor state $\xop$ in the
controller equation (\ref{eq:u}).
Other holds (where $\A_h \ne \A_c$) are possible
\citep{GawWan07,GawGol12}.


The intermittent controller generates an open loop control signal
based on the hold state $\xh$ (\ref{eq:hold}). At the intermittent
sample times $t_i$, the hold state is reset to the estimated system
state $\xow$ generated by the observer (\ref{eq:obs}); thus feedback
occurs at the intermittent sample times $t_i$. The sample times are
constrained by (\ref{eq:PRP}) to be at least $\Delta_{min}$
apart. But, in addition to this constraint, feedback only takes place
when it is needed; the event detector discussed in Section
\ref{sec:ic_event} provides this information.

\subsection{Intermittent observer}
\label{sec:ic_obs}
\begin{figure}[htbp]
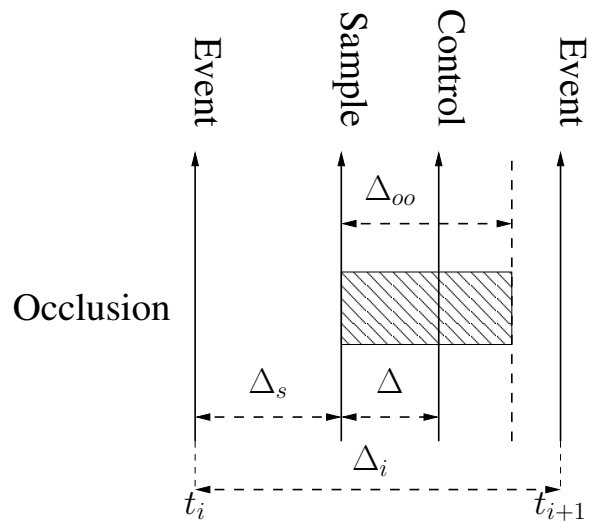

  \centering
  \Fig{timing}{0.5}
  \caption[Self-occlusion]{Self-occlusion. Following an event, the
    observer is sampled at a time $\Delta_s$ and a new control
    trajectory is generated. The observer is then occluded for a
    further time $\Delta_{oo}=\Delta_{o}$ where $\Delta_{o}$ is the
    internal occlusion interval. Following that time, the
    observer is operational and an event can be detected
    (\ref{eq:ED_e}). The actual time between events $\Delta_i >
    \Delta_s + \Delta_{oo}$}
  \label{fig:self-occlusion}
\end{figure}
The intermittent controller of \citet{GawLorLakGol11} uses continuous
observation however, motivated by the occlusion experiments of
\citet{Osb13}, this chapter looks a intermittent observation.

As discussed in Section \ref{sec:ic_hold}, the predictor state $\xp$
is only sampled at discrete-times $t_i$. Further, from Equation
(\ref{eq:pred}), $\xp$ is a function of $\xo$ at these times. Thus the
only the observer performance at the discrete-times $t_i$ is
important. With this in mind, this Chapter proposes, in the context of
intermittent control, that the continuous observer is replaced by an
intermittent observer where periods of monitoring the system
measurements are interleaved with periods where the measurement is
\emph{occluded}. In particular, and with reference to Figure
\ref{fig:self-occlusion}, this Chapter examines the situation where
observation is occluded for a time $\Delta_{oo}$ following sampling.
Such occlusion is equivalent to setting the observer gain $L=0$ in
Equation (\ref{eq:obs}).  Setting $L=0$ has two consequences: the
measured signal $y$ is ignored and the observer state evolves as the
disturbance-free system.

With reference to Equation (\ref{eq:x_w_smh}); the intermittent
controller only makes use of the state estimate at the discrete time
points at $t=t^s_i$ (\ref{eq:t_s_i}); moreover, in the event-driven
case, the observer state estimate is used in Equation (\ref{eq:ED_e})
to determine the event times $t_i$ and thus $t^s_i$.
Hence, a good state estimate immediately after an sample at time $t^s_i$
is not required and so one would expect that occlusion ($L=0$) would
have little effect immediately after $t=t^s_i$. For this reason, define
the occlusion time.  $\Delta_{oo}$ as the time after $t=t^s_i$ for
which the observer is open-loop $L=0$. That is, the constant observer
gain  is replaced by the time varying
observer gain:
\begin{equation}
  \label{eq:L}
  L(t) =
  \begin{cases}
    0 & \tau^s <\Delta_{oo}\\
    L_o & \tau^s \ge \Delta_{oo}
  \end{cases}
\end{equation}
where $L_o$ is the observer gain designed using standard techniques
\citep{KwaSiv72,GooGraSal01} and the intermittent time $\tau^s$ is given
by (\ref{eq:taus}). 

\subsection{Intermittent predictor}
\label{sec:ic_pred}
The continuous-time predictor of equation (\ref{eq:pred}) contains a
convolution integral which, in general, must be approximated for
real-time purposes and therefore has a speed-accuracy trade-off. This
section shows that the use of intermittent control, together with the
hold of Section \ref{sec:ic_hold}, means that equation
(\ref{eq:pred}) can be replaced by a simple exact formula.

Equation (\ref{eq:pred}) is the solution of the differential equation
(in the intermittent time $\tau$ (\ref{eq:tau}) time frame)
\begin{equation}
  \label{eq:pred_de}
    \begin{cases}
    \ddtau{{\xp}}(\tau) &= \A \xp (\tau) + \B u(\tau)\\
    \xp(0) &= \xw(t^s_i)
  \end{cases}
\end{equation}
evaluated at time $\tau^s=\del$ where $\tau^s_i$ is given by Equation
(\ref{eq:t_s_i}). However, the control signal $\uu$ is not arbitrary but
rather given by the hold equation (\ref{eq:hold}).  Combining
equations (\ref{eq:hold}) and (\ref{eq:pred_de}) gives
\begin{equation}
  \label{eq:X}
  \begin{cases}
    \ddtau{\X}(\tau) &= \A_{ph} \X(\tau) \\
    \X(0) &= \X_i
  \end{cases} 
\end{equation}
where
\begin{align}
  \X(\tau) &=
  \begin{pmatrix}
    \xp(\tau)\\\xh(\tau)
  \end{pmatrix}\\
  \X_i &=
  \begin{pmatrix}
    \xw(t_i)\\\xp(t_i-\del)
  \end{pmatrix}\\
  \text{and } \A_{ph} &=
  \begin{pmatrix}
    \A & -\B \kk\\
    \Z_{n \times n} & \A_h
  \end{pmatrix}\label{eq:A_xu}
\end{align}
where $\Z$ is a zero matrix of the indicated dimensions and the hold
matrix $\A_h$ can be $\A_c$ (SMH) or $\Z$ (ZOH).

The equation (\ref{eq:X}) has an explicit solution at time $\tau=\del$
given by:
\begin{equation}
  \label{eq:X_explicit}
  \X(\del) = e^{\A_{ph}\del} \X_i
\end{equation}
The prediction $\xp$ can be extracted from (\ref{eq:X_explicit}) to
give:
\begin{equation}
  \label{eq:x_p_i}
  \xp(t_i) =  \E_{pp} \xw(t_i) +  \E_{ph} \xh(t_i)
\end{equation}
where the $n\times n$ matrices $\E_{pp}$ and $\E_{ph}$ are partitions
of the $2n\times 2n$ matrix $\E$:
\begin{align}
    \E &= \begin{pmatrix}
    \E_{pp} &  \E_{ph}\\
    \E_{hp} &  \E_{hh}
  \end{pmatrix}\\
   \text{where } \E &= e^{\A_{ph}\del}\label{eq:E}
\end{align}

The intermittent predictor (\ref{eq:x_p_i}) replaces the
continuous-time predictor (\ref{eq:pred}); there is no convolution
involved and the matrices $\E_{pp}$ and $\E_{ph}$ can be computed
off-line and so do not impose a computational burden in real-time.

\subsection{State feedback}
\label{sec:ic_sfb}
The ``state-feedback'' block of Figure \ref{fig:MIC_arch} is
implemented as:
\begin{equation}
  \label{eq:sfb_smh}
  \uu(t) = -\kk \xx_h(t)
\end{equation}
This is similar to the conventional state feedback of Figure
\ref{fig:MSS_arch} given by Equation (\ref{eq:u}) but the continuous
predicted state $\xw(t)$ is repaces by the hold state $\xx_h(t)$
generated by Equation (\ref{eq:hold}).

\subsection{Event detector}
\label{sec:ic_event}
The purpose of the event detector is to generate the intermittent
sample times $t_i$ and thus trigger feedback.  Such feedback is
required when the open-loop hold state $\xh$ (\ref{eq:hold}) differs
significantly from the closed-loop observer state $\xow$
(\ref{eq:xow}) indicating the presence of disturbances.
There are many ways to measure such a discrepancy; following
\citet{GawLorLakGol11}, the one chosen here is to look for a quadratic
function of the error $\ehp$ exceeding a threshold $q_t^2$:
\begin{align}
E &=  \ehp^T(t) \Q_{t} \ehp(t) -  q_t^2 \ge 0\label{eq:ED_e}\\
\text{where } \ehp(t) &= \xh(t) - \xow(t)
\end{align}
where $\Q_{t}$ is a positive semi-definite matrix.

\subsection{The intermittent-equivalent setpoint}
\label{sec:equivalent_setpoint}
\citet{LorKamGolGaw12} introduce the concept of the \emph{equivalent
  setpoint} for intermittent control. This section extends the concept and
there are two differences:
\begin{enumerate}
\item the setpoint sampling occurs at $t_i+\Delta_s$ rather than at
  $t_i$ and
\item the filtered setpoint $\ww_f$ (rather than $\ww$) is sampled.
\end{enumerate}
Define the sample time $t^s_i$ (as opposed to the event time $t_i$ and
the corresponding intermittent time $\tau^s$ by
\begin{align}
  t^s_i &= t_i+\Delta_s\\
  \tau^s &= \tau-\Delta_s = t - t_i - \Delta_s = t - t^s_i
\end{align}

In particular, the sampled setpoint $\ww_s$ becomes:
\begin{equation}
  \label{eq:w_s}
  \ww_s(t) = \ww_f(t^s_i) \text{ for $t^s_i \le t < t^s_{i+1}$}
\end{equation}
where $\ww_f$ is the \emph{filtered} setpoint $\ww$. That is the sampled
setpoint $\ww_s$ is the filtered setpoint at time $t^s_i = t_i+\Delta_s$. 

The equivalent setpoint $\ww_{ic}$ is then given by:
\begin{align}
  \label{eq:w_ic}
  \ww_{ic}(t) &= \ww_s(t-t_d) \\
  &= \ww_f(t^s_i-t_d)\\
  &= \ww_f(t-\tau^s-t_d) \text{ for $t^s_i \le t < t^s_{i+1}$}
\end{align}
This corresponds to the previous result \citep{LorKamGolGaw12} when
$\Delta_s=0$ and $\ww_f(t)=\ww(t)$.

If, however, the setpoint $\ww(t)$ is such that $\ww_f(t^s_i) \approx
\ww(t_s)$ (ie no second stimulus within the filter settling time and
$\Delta_s$ is greater than the filter settling time) then Equation
(\ref{eq:w_ic}) may be approximated by:
\begin{align}
   \ww_{ic}(t) &\approx \ww(t^s_i-t_d) \text{ for $t^s_i \le t <t^s_{i+1}$}\\
   &= \ww(t-\tau^s-t_d) \text{ for $t^s_i \le t < t^s_{i+1}$}
\end{align}

As discussed in Section \ref{sec:iic:intermittency}, the
intermittent-equivalent setpoint is the basis for identification of
intermittent control.

\subsection{The intermittent separation principle.}
\label{sec:ISP}
As discussed in Section \ref{sec:ic_hold}, the Intermittent Controller
contains a \emph{System-Matched Hold} which can be views as a
particular form of generalised hold \citep{GawWan07}. Insofar as this special case of intermittent
control uses a dynamical model of the controlled system to generate
the (open-loop) control between sample intervals, it is related to the
strategies of both \citet{ZhiMid03} and \citet{Ast08}. However, as
shown in this chapter, intermittent control provides a framework within
which to analyse and design a range of control systems with
unpredictable non-uniform sampling possibly arising from an
event-driven design.

In particular, it is shown by \citet{GawWan11}, that the SMH-based
intermittent controller is associated with a separation principle
similar to that of the underlying continuous-time controller, which
states that the closed-loop poles of the intermittent control system
consist of the control system poles and the observer system poles, and
the interpolation using the system matched hold does not lead to the
changes of closed-loop poles. 
As discussed by \citet{GawWan11}, this separation principle is only
valid when using the SMH. For example, intermittent control based on
the standard zero-order hold (ZOH) does not lead to such a separation
principle and therefore closed-loop stability is compromised when the
sample interval is not fixed.

As discussed by \citet{GawWan11}, an important consequence of this
separation principle is that the neither the design of the SMH, nor
the stability of the closed-loop system in the fixed sampling case, is
dependent on sample interval. It is therefore conjectured that the SMH
is particularly appropriate when sample times are unpredictable or
non-uniform, possibly arising from an event-driven design.

\section{Examples: basic properties of  intermittent control}
\label{sec:ex:comp}
This section uses simulation to illustrate key properties of
intermittent control. Section \ref{sec:ex:elementary} illustrates
\begin{itemize}
\item timed \& event-driven control (Section \ref{sec:ic_event}),
\item the roles of the disturbance observer and series integrator
  (Section \ref{sec:obsdes}),
\item the choice of event threshold (Section \ref{sec:ic_event}),
\item the difference between control-delay \& sampling delay (Section \ref{sec:ic_time}),
\item the effect of low \& high observer gain  (Section \ref{sec:obsdes}) and
\item the effect of occlusion  (Section \ref{sec:ic_pred}).
\end{itemize}
Sections \ref{sec:ex:prp} and \ref{sec:ex:atf} illustrates how the
intermittent controller models two basic psychological phenomenon: the
\emph{Psychological Refractory Period} and the \emph{Amplitude
  Transition Function}.

\subsection{Elementary examples}
\label{sec:ex:elementary}
\begin{figure}[htbp]
  \centering
  \SubFig{simple_y_clock}{Timed: $y$}{0.45}
  \SubFig{simple_u_clock}{Timed: $u$}{0.45}\\
  \SubFig{simple_y_event}{Event-driven: $y$}{0.45}
  \SubFig{simple_u_event}{Event-driven: $u$}{0.45}
  \caption{Elementary example: timed \& event-driven}
  \label{fig:ex:elementary:timed}
\end{figure}
This section illustrates the basic properties of intermittent control
using simple examples. In all cases, the system is given by:
\begin{xalignat}{2}
  G_0(s) &= \frac{1}{s^2-1} = \frac{1}{(s-1)(s+1)} &\text{Second-order unstable system}\label{eq:elem:G_0}\\
  G_v(s) &= \frac{1}{s} &\text{Simple integrator for disturbance observer}\label{eq:elem_DO}
\end{xalignat}

The corresponding state-space system (\ref{eq:sys}) is:
\begin{align}
  \A &=
  \begin{pmatrix}
    0 & 0 & 1\\
    1 & 0 & 0\\
    0 & 0 & 0
  \end{pmatrix}\\
  \B &= \B_d =
  \begin{pmatrix}
    1\\0\\0
  \end{pmatrix}\\
  \B_v &=
  \begin{pmatrix}
    0\\0\\1
  \end{pmatrix}\\
  \C &=
  \begin{pmatrix}
    0 & 1 & 0
  \end{pmatrix}
\end{align}

All signals are zero except:
 \begin{xalignat}{2}
  w(t) &= 1 & t \ge 1.1\\
  d(t) &= 0.5 & t \ge 5.1
\end{xalignat}

Except where stated, the intermittent control parameters are:
 \begin{xalignat}{2}
   \Delta_{min} &= 0.5 & \text{Min. intermittent interval}~(\ref{eq:PRP})\notag\\
   q_t &= 0.1 & \text{Threshold} (\ref{eq:ED_e})~\notag\\
   \Delta &= 0 & \text{Control delay} (\ref{eq:pred})~\notag\\
   \Delta_s &= 0 & \text{Sampling delay} (\ref{eq:t_s_i})~\notag
\end{xalignat}

Figures \ref{fig:ex:elementary:timed}--~\ref{fig:ex:elementary:occ}
are all of the same format. The left column of figures shows the
system output $y$ together with the setpoint $w$ and the output $y_c$
corresponding to the underlying continuous-time design; the right
column shows the corresponding control signal $u_e$ together with the
negative disturbance $-d$ and the control $u_c$. In each case the
$\bullet$ symbol corresponds to an event.

Figure \ref{fig:ex:elementary:timed} contrasts timed and event driven
control. In particular, Figures \ref{subfig:simple_y_clock} and
\ref{subfig:simple_u_clock} correspond to zero threshold ($q_t=0$) and
thus timed intermittent control with fixed interval $\Delta_{min} =
0.5$ and Figures \ref{subfig:simple_y_event} and
\ref{subfig:simple_u_event} correspond to event-driven control.  The
event driven case has two advantages: the controller responds
immediately to the setpoint change at time $t=1.1$ whereas the timed
case has to wait until the next sample at $t=1.5$ and the control is
only computed when required. In particular, the initial setpoint
response does not need to be corrected, but the unknown disturbance
means that the observer state is different from the system state for a
while and so corrections need to be made until the disturbance is
correctly deduced by the observer.

\begin{figure}[htbp]
  \centering
  \SubFig{simple_y_noint}{No integrator: $y$}{0.45}
  \SubFig{simple_u_noint}{No integrator: $u$}{0.45}\\
  \SubFig{simple_y_int}{Integrator: $y$}{0.45}
  \SubFig{simple_u_int}{Integrator: $u$}{0.45}
  \caption{Elementary example: no disturbance observer, with and without integrator}
  \label{fig:ex:elementary:int}
\end{figure}
The simulation of Figure \ref{fig:ex:elementary:timed} includes the
disturbance observer implied by the integrator of Equation
\ref{eq:elem_DO}; this means that the controllers are able to
asymptotically eliminate the constant disturbance $d$. Figures
\ref{subfig:simple_y_noint} and \ref{subfig:simple_u_noint} show the
effect of not using the disturbance observer. The constant disturbance
$d$ is not eliminated and the intermittent controller exhibits limit
cycling behaviour (analysed further by \citet{Gaw09}). As an
alternative to the disturbance observer used in the simulation of
Figure \ref{fig:ex:elementary:timed}, a series integrator can be used
by setting:
\begin{xalignat}{2}
  G_s(s) &= \frac{1}{s} &\text{Series integrator for disturbance rejection}\label{eq:elem_int}
\end{xalignat}
The corresponding simulation is shown in Figures
\ref{subfig:simple_y_int} and \ref{subfig:simple_u_int}\footnote{The
  system dynamics are now different; the LQ design parameter is
  set to $Q_c=100$ to account for this.}. Although the constant
disturbance $d$ is now asymptotically eliminated, the additional
integrator increases both the system order and the system relative
degree by one giving a more difficult system to control.

\begin{figure}[htbp]
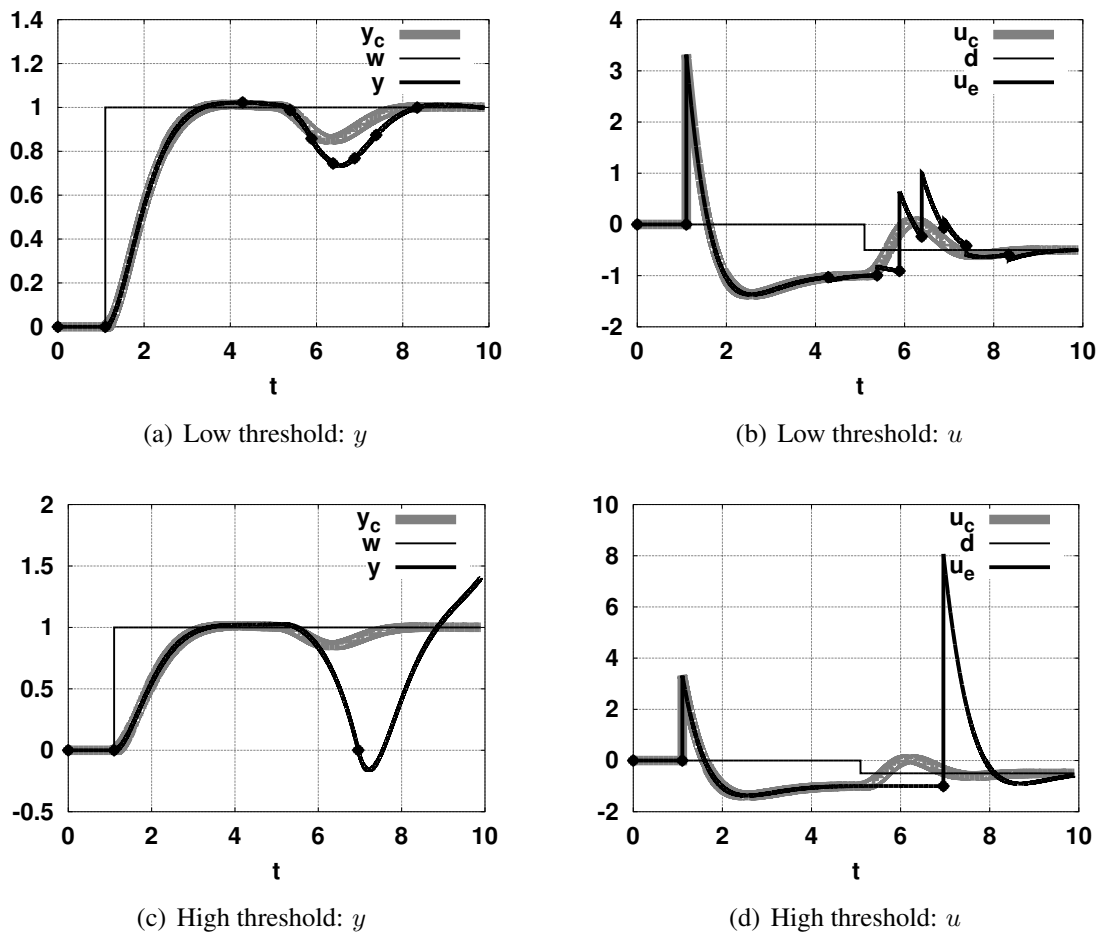

  \centering
  \SubFig{simple_y_threshold_low}{Low threshold: $y$}{0.45}
  \SubFig{simple_u_threshold_low}{Low threshold: $u$}{0.45}\\
  \SubFig{simple_y_threshold_hi}{High threshold: $y$}{0.45}
  \SubFig{simple_u_threshold_hi}{High threshold: $u$}{0.45}
  \caption{Elementary example: low \& high threshold}
  \label{fig:ex:elementary:theshold}
\end{figure}
The event detector behaviour depends on the threshold $q_t$
(\ref{eq:ED_e}); this has already been examined in the simulations of
Figure \ref{fig:ex:elementary:timed}. Figure
\ref{fig:ex:elementary:theshold} shows the effect of a low
($q_t=0.01$) and high ($q_t=1$) threshold. As discussed in the context
of Figure \ref{fig:ex:elementary:timed}, the initial setpoint response
does not need to be corrected, but the unknown disturbance generates
events. The simulations of Figure \ref{fig:ex:elementary:theshold}
indicate the trade-off between performance and event rate determined by
the choice of the threshold $q_t$.

\begin{figure}[htbp]
  \centering
  \SubFig{simple_y_control_del}{Control delay: $y$}{0.45}
  \SubFig{simple_u_control_del}{Control delay: $u$}{0.45}\\
  \SubFig{simple_y_sample_del}{Sample delay: $y$}{0.45}
  \SubFig{simple_u_sample_del}{Sample delay: $u$}{0.45}
  \caption{Elementary example: control-delay \& sampling delay}
  \label{fig:ex:elementary:delay}
\end{figure}
The simulations of Figure \ref{fig:ex:elementary:delay} compare and
contrast the two delays: control delay $\Delta$ and sample delay
$\Delta_s$. In particular, Figures \ref{subfig:simple_y_control_del}
and \ref{subfig:simple_u_control_del} correspond to $\Delta=0.4$ and
$\Delta_s=0$ but Figures \ref{subfig:simple_y_sample_del} and
\ref{subfig:simple_u_sample_del} correspond to $\Delta=0$ and
$\Delta_s=0.4$. The response to the setpoint is identical as
the prediction error is zero in this case; the response to the
disturbance change is similar, but not identical  as
the prediction error is not zero in this case.

\begin{figure}[htbp]
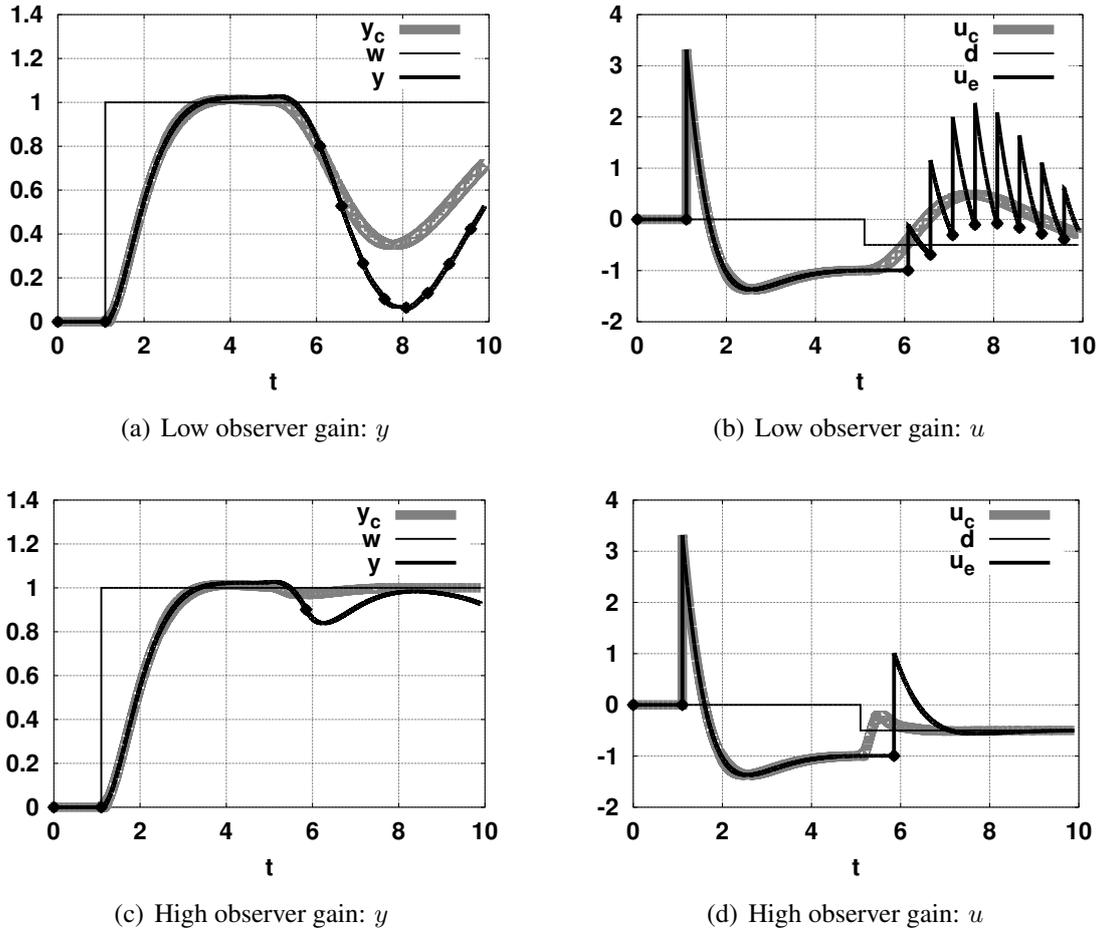

  \centering
  \SubFig{simple_y_obs_low}{Low observer gain: $y$}{0.45}
  \SubFig{simple_u_obs_low}{Low observer gain: $u$}{0.45}\\
  \SubFig{simple_y_obs_hi}{High observer gain: $y$}{0.45}
  \SubFig{simple_u_obs_hi}{High observer gain: $u$}{0.45}
  \caption{Elementary example: low \& high observer gain}
  \label{fig:ex:elementary:obs_gain}
\end{figure}
The state observer of Equation (\ref{eq:obs}) is needed to deduce
unknown states in general and the state corresponding to the unknown
disturbance in particular. As discussed in the textbooks
\citep{KwaSiv72,GooGraSal01}, the choice of observer gain gives a
trade-off between measurement noise and disturbance responses. The gain
used in the simulations of Figure \ref{fig:ex:elementary:timed} can be
regarded as medium; Figure \ref{fig:ex:elementary:obs_gain} looks at
low and high gains. As there is no measurement noise in this case, the
low gain observer gives a poor disturbance response whilst the high
gain gives an improved disturbance response.

\begin{figure}[htbp]
  \centering
  \SubFig{simple_y_occ_low}{Low occlusion time: $y$}{0.45}
  \SubFig{simple_u_occ_low}{Low occlusion time: $u$}{0.45}\\
  \SubFig{simple_y_occ_hi}{High occlusion time: $y$}{0.45}
  \SubFig{simple_u_occ_hi}{High occlusion time: $u$}{0.45}
  \caption{Elementary example: low \& high occlusion time}
  \label{fig:ex:elementary:occ}
\end{figure}
The simulations presented in Figure \ref{fig:ex:elementary:occ}
investigate the intermittent observer of Section \ref{sec:ic_obs}.  In
particular, the measurement of the system output $y$ is assumed to be
\emph{occluded} for a period $\Delta_{oo}$ following a sample. Figures
Figures \ref{subfig:simple_y_occ_low} and \ref{subfig:simple_u_occ_low}
show simulation with $\Delta_{oo}=0.1$ and  Figures
Figures \ref{subfig:simple_y_occ_hi} and \ref{subfig:simple_u_occ_hi}
show simulation with $\Delta_{oo}=0.5$. It can be seen that occlusion
has little effect on performance for the lower value, but performance
is poor for the larger value.

\subsection{The Psychological Refractory Period and
  Intermittent-equivalent setpoint}
\label{sec:ex:prp}
As noted in Section \ref{sec:equivalent_setpoint}, the intermittent
sampling of the setpoint $w$ leads to the concept of the
intermittent-equivalent setpoint: the setpoint that is actually used
within the intermittent controller. Moreover, as noted in Section
\ref{sec:ic_time}, there is a minimum intermittent interval
$\Delta_{min}$. As discussed by \citet{GawLorLakGol11}, $\Delta_{min}$
is related to the \emph{psychological refractory period} (PRP)
\citep{Tel31} which explains the experimental results of \citet{Vin48}
where a second reaction time may be longer that the first. These ideas
are explored by simulation in Figures
\ref{fig:ex:PRP:basic}--~\ref{fig:ex:PRP:del}.
In all cases, the system is given by:
\begin{xalignat}{2}
  G_0(s) &= \frac{1}{s}  &\text{Simple integrator}\label{eq:prp:G_0}\\
\end{xalignat}
The corresponding state-space system (\ref{eq:sys}) is:
\begin{align}
  \A &= 0,\; \B = \C = 1\label{eq:prp:ABC}
 \end{align}
All signals are zero except the signal $w_0$ is defined as:
 \begin{xalignat}{2}
  w_0(t) &= 1 & 0.5 \le t \le 1.5,\; 2.0 \le t \le 2.5,\; 3.0 \le t
  \le 3.2,\; 4.0 \le t \le 4.1
\label{eq:w_0}
\end{xalignat}
and the filtered setpoint $w$ is obtained by passing $w$ through the
low-pass filter $G_w(s)$ where:
\begin{equation}
  \label{eq:G_w}
  G_w(s) = \frac{1}{1 + sT_f}
\end{equation}

Except where stated, the intermittent control parameters are:
 \begin{xalignat}{2}
   \Delta_{min} &= 0.5 & \text{Min. intermittent interval}~(\ref{eq:PRP})\notag\\
   q_t &= 0.1 & \text{Threshold} (\ref{eq:ED_e})~\notag\\
   \Delta &= 0 & \text{Control delay} (\ref{eq:pred})~\notag\\
   \Delta_s &= 0 & \text{Sampling delay} (\ref{eq:t_s_i})~\notag
\end{xalignat}

\begin{figure}[htbp]
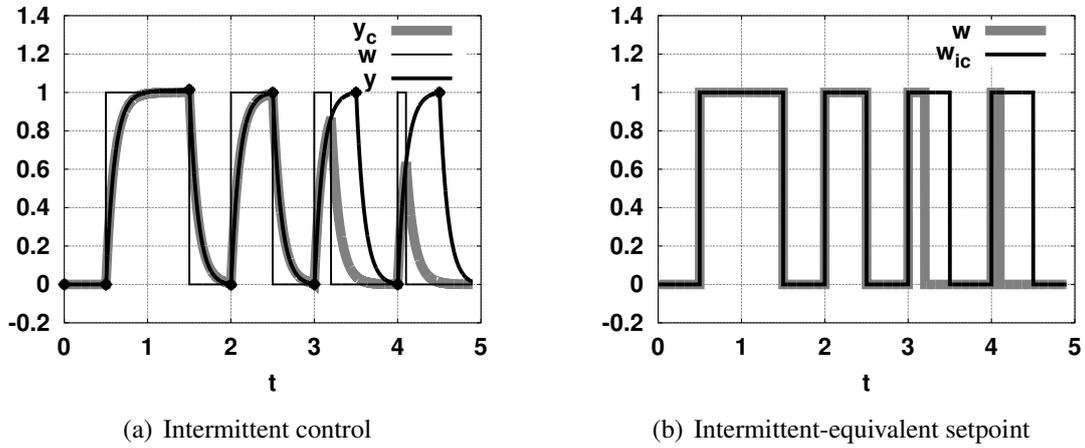

  \centering
  \SubFig{prp_y_basic}{Intermittent control}{0.45}
  \SubFig{prp_w_basic}{Intermittent-equivalent setpoint}{0.45}
  \caption{Psychological Refractory Period: square setpoint}
  \label{fig:ex:PRP:basic}
\end{figure}
Figure \ref{subfig:prp_y_basic} corresponds to the unfiltered setpoint
with $T_f=0$ and $w=w_0$ where $w_0$ is given by (\ref{eq:w_0}). For
the first two (wider) pulses, events ($\bullet$) occur at each
setpoint change; but the second two (narrower) pulses, the trailing
edges occur at a time less that $\Delta_{min}=0.5$ from the leading
edges and thus the events corresponding to the trailing edges are
delayed until $\Delta_{min}$ has elapsed. Thus the the second two
(narrower) pulse lead to outputs as if the pulses were $\Delta_{min}$
wide. Figure \ref{subfig:prp_w_basic} shows the
intermittent-equivalent setpoint $w_{ic}$ superimposed on the actual
setpoint $w$.

\begin{figure}[htbp]
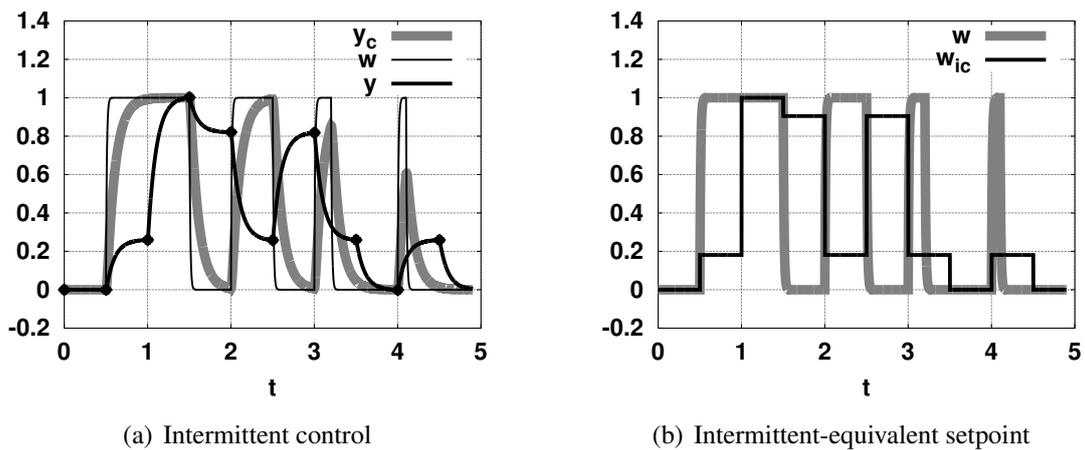

  \centering
  \SubFig{prp_y_filt}{Intermittent control}{0.45}
  \SubFig{prp_w_filt}{Intermittent-equivalent setpoint}{0.45}
  \caption{Psychological Refractory Period: filtered setpoint}
  \label{fig:ex:PRP:filt}
\end{figure}
Figure \ref{subfig:prp_y_filt} corresponds to the filtered setpoint
with $T_f=0.01$ and $w=w_0$ where $w_0$ is given by (\ref{eq:w_0}). At
the event times, the setpoint has not yet reached its final value and
thus the initial response is too small which is then corrected; Figure
\ref{subfig:prp_w_filt} shows the intermittent-equivalent setpoint
$w_{ic}$ superimposed on the actual setpoint $w$.

\begin{figure}[htbp]
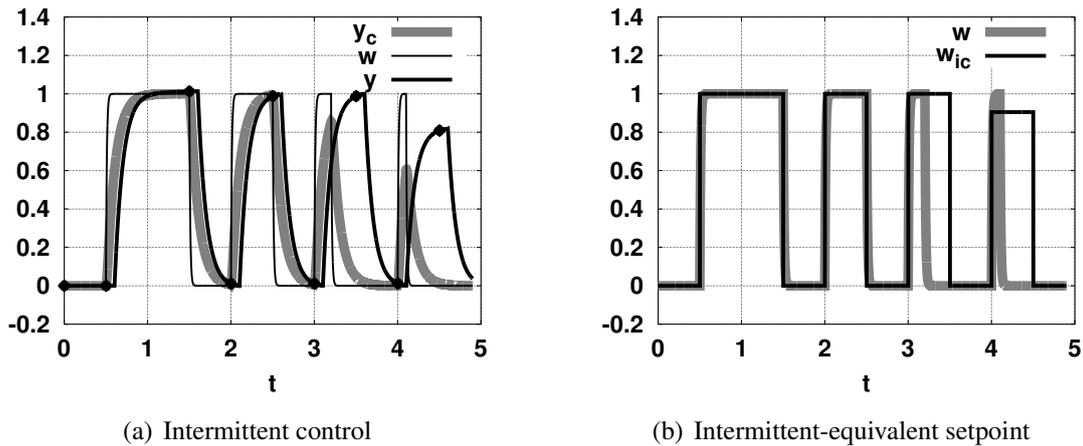

  \centering
  \SubFig{prp_y_del}{Intermittent control}{0.45}
  \SubFig{prp_w_del}{Intermittent-equivalent setpoint}{0.45}
  \caption{Psychological Refractory Period: sampling delay}
  \label{fig:ex:PRP:del}
\end{figure}
The unsatisfactory behaviour can be improved by delaying the sample
time by $\Delta_s$ as discussed in Section \ref{sec:ic_time}. Figure
\ref{subfig:prp_y_del} corresponds to Figure \ref{subfig:prp_y_filt}
exept that $\Delta_s=0.1$. Except for the short delay of
$\Delta_s=0.1$, the behavior of the first three pulses is now similar
to that of Figure \ref{subfig:prp_y_basic}. The fourth (shortest)
pulse gives, however, a reduced amplitude output; this is because the
sample occurs on the trailing edge of the pulse. This behavior has
been observed by \citet{Vin48} as is related to the \emph{Amplitude
  Transition Function} of \citet{BarGle88b}. Figure
\ref{subfig:prp_w_del} shows the intermittent-equivalent setpoint
$w_{ic}$ superimposed on the actual setpoint $w$.  This phenomena is
further investigated in Section \ref{sec:ex:atf}.

\subsection{The Amplitude Transition Function}
\label{sec:ex:atf}
\begin{figure}[htbp]
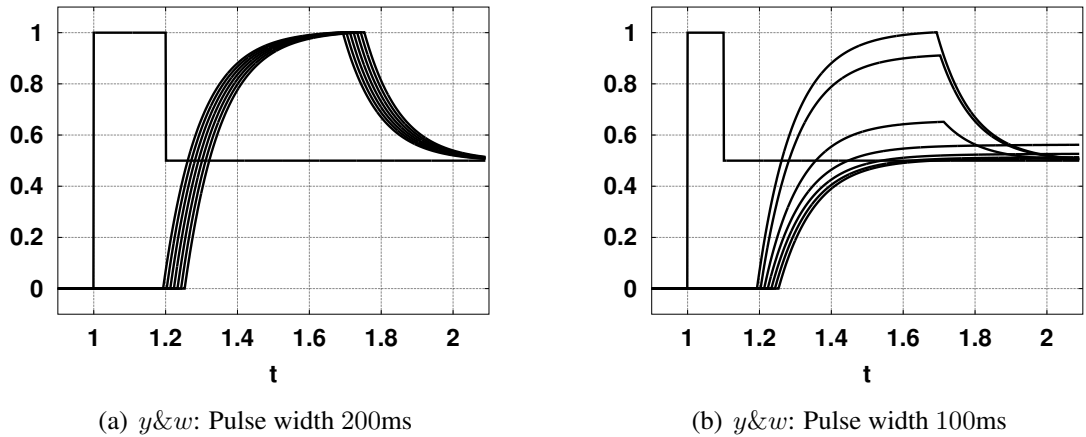

  \centering
  \SubFig{ATF_200}{$y \& w$: Pulse width $200$ms}{0.45}
  \SubFig{ATF_100}{$y \& w$: Pulse width $100$ms}{0.45}
  \caption[Amplitude Transition Function]{Amplitude Transition
    Function.}
  \label{fig:atf}
\end{figure}
This section exands on the observation in Section \ref{sec:ex:prp},
Figure \ref{fig:ex:PRP:del}, that the combination of sampling delay
and a bandwifth limited setpoint can lead to narrow pulses being
``missed''.  
It turns out that the physiological equivalent of this behaviour is
the so called \emph{Amplitude Transition Function} (ATF) described by
\citet{BarGle88b}. Instead of the symmetric pulse discussed in the PRP
context in Section \ref{sec:ex:prp}, the ATF concept is based on
asymmetric pulses where the step down is less than the step up leading
to a non-zero final value. An example of an asymetric pulse appears in
Figure \ref{fig:atf}    
The simulations in this section use the same system as in
Section \ref{sec:ex:prp} Equations (\ref{eq:prp:G_0}) and
(\ref{eq:prp:ABC}), but the setpoint $w_0$ of Equation (\ref{eq:w_0})
is replaced by:
 \begin{equation}
  w_0(t) =
  \begin{cases}
    0 & t<1\\
    1 & 1 \le t \le 1 + \Delta_{p}\\
    0.5 & t> 1 + \Delta_{p}
  \end{cases}
\label{eq:atf:w_0}
\end{equation}
where $\Delta_{p}$ is the \emph{pulse-width}.

The system was simulated for two pulse widths: $\Delta_p=200$ms
(Figure \ref{subfig:ATF_200}) and $\Delta_p=100$ms (Figure
\ref{subfig:ATF_100}). In each case, following Equation
(\ref{eq:atf:w_0}), the pulse was asymmetric going from 0 to 1 and
back to 0.5.

At each pulse width, the system was simulated with event delay
$\Delta_s = 90,100, \dots, 150$ms and the control delay was set to
$100$ms.
Figure \ref{subfig:ATF_200} shows the ``usual'' behaviour, the 200ms
pulse is expanded to $\Delta_{ol}=500$ms and delayed by
$\Delta+\Delta_s$. In contrast, Figure \ref{subfig:ATF_200} shows the
``Amplitude Transition Function'' behaviour: because the sampling is
occurring on the downwards side of the pulse, the amplitude is reduced
with increasing $\Delta_s$. Figure \ref{subfig:ATF_200} is closely
related to Figure 2 of \citet{BarGle88b}.

\section{Constrained design}
\label{sec:constrained}
The design approach outlined in Sections \ref{sec:continuous} and
\ref{sec:ic} assumes that system inputs and outputs can take any
value. In practice, this is not always the case and so
\emph{constraints} on both system inputs and outputs must be taken
into account. There are at least three classes of contraints of
interest in the context of intermittent control:
\begin{enumerate}
\item Constraints on the steady-state behaviour of a system. These are
  particularly relevant in the context of multi-input ($n_u>1$) and
  multi-output ($n_y>1$) systems. This issue is discussed in Section
  \ref{sec:constrained_steady-state} and illustrated by example in
  Section \ref{sec:ex:stand}.
\item Amplitude constraints on the dynamical behaviour of a system. This is a topic
  that is much dicussed in the Model Predictive Control literature --
  for example \citep{Raw00,Mac02,Wan09}. In the context of
  intermittent control, constraints have been considered in the
  single-input single-output context by \citet{GawWan09}; the
  corresponding multivariable case is considered in Section
  \ref{sec:mpc} and illustrated in Section \ref{sec:ex:cms}.
\item Power constraints on the dynamical behaviour of a system. This
  topic has been discussed by \citet{GawWagNeiWan12}.
\end{enumerate}

\subsection{Constrained Steady-State Design}
\label{sec:constrained_steady-state}
 Section \ref{sec:steady-state} considers the steady state design of
 the continuous controller underlying intermittent control. In
 particular, Equation (\ref{eq:x_ss}) gives a linear algebraic
 equation giving the steady-state system state $\xss$ and
 corresponding control signal $\uss$ yeilding a particular
 steady-state output $\yss$. Although in the single-input
 single-output case considered by \citet[Equation 13]{GawLorLakGol11}
 the solution is unique, as discussed in Section
 \ref{sec:steady-state} the multi-input, multi-output case gives rise
 to more possibilities.
 In particular, it is not possible to exactly solve Equation
 (\ref{eq:x_ss}) in the over-determined case where $n_{ss}>n_u$, but a
 least-squares solution exists.
 In the constrained case, this solution must satisfy two sets of
 constraints: an equality constraint ensuring that the equilibrium
 condition (\ref{eq:equilibrium}) holds and inequality constraints to
 reject physically impossible solutions.

In this context, the $n_{ss}\times n_{ss}$ \emph{weighting matrix}
$\Q_{ss}$ can be used to vary the relative importance each element of
$\yss$. In particular, define:
\begin{align}
  \SsQ &=   
  \begin{bmatrix}
    \A & \B\\
    \Q_{ss}\C_{ss} & \Z_{n_{ss} \times n_u}
  \end{bmatrix}\label{eq:Sq}\\
  \Xss &=   
  \begin{bmatrix}
    \xss \\ \uss
  \end{bmatrix}\\
  \yss &= 
  \begin{bmatrix}
    -\B_d \dss\\ \Q_{ss} \yss
  \end{bmatrix}\\
  \text{and }
  \Yh_{ss} &= 
\begin{bmatrix}
    -\B_d \dss\\ \Q_{ss} \yh_{ss}
  \end{bmatrix} = 
\SsQ \Xh_{ss}
\end{align}

This gives rise to the least-squares cost function:
\begin{align}
  J_{ss} &= \left(\yss - \Yh_{ss}\right)^T \left(\yss - \Yh_{ss}\right) \notag\\
  &= \left(\yss-\SsQ\Xh_{ss} \right)^T \left(\yss- \SsQ\Xh_{ss} \right)\label{eq:J_ss}
 \end{align}
Differentiating with respect to $\Xh_{ss}$ gives
the weighted least-squares solution of (\ref{eq:x_ss}):
\begin{align}
  \SsQ^T \left(\yss - \SsQ\Xh_{ss}\right ) &= 0\\
  \text{or } \Xh_{ss} &= \left ( \SsQ^T \SsQ \right )^{-1}\SsQ^T\yss
\end{align}

As ${\xx}_{ss}$ corresponds to a steady state solution corresponding to
Equation (\ref{eq:equilibrium}), the solution of the least-squares
problem is subject to the equality constraint:
\begin{equation}\label{eq:equality}
   \begin{bmatrix}
    \A & \B 
  \end{bmatrix}
  \Xh_{ss} = \A\xss + \B \uss = -\B_d \dss
\end{equation}

Furthermore, suppose that the solution must be such that the components of
$\Y$ corresponding to $\yss$ are bounded above and below:
\begin{equation}
  \label{eq:cons}
  \yh_{min} \le \yh = \C_{ss}\Xh \le \yh_{max}
\end{equation}
Inequality (\ref{eq:cons}) can be rewritten as:
\begin{equation}
  \label{eq:cons_1}
  \begin{bmatrix}
    -\C_{ss} \\\C_{ss}
  \end{bmatrix} \Xh \le 
 \begin{bmatrix}
    -\yh_{min} \\ \yh_{max}
  \end{bmatrix}
\end{equation}
The quadratic cost function (\ref{eq:J_ss}) together with the linear
equality constraint (\ref{eq:equality}) and the linear inequality
constraint (\ref{eq:cons_1}) forms a \emph{quadratic program} (QP)
which has well-established numerical algorithms available for its
solution \citep{Fle87}.

An example of constrained steady-state optimisation is given in
Section \ref{sec:dist}.

\subsection{Constrained Dynamical Design}
\label{sec:mpc}
Model-based predictive control (MPC) \citep{Raw00,Mac02,Wan09} combines a
quadratic cost function with linear constraints to provide optimal
control subject to (hard) constraints on both state and control
signal; this combination of quadratic cost and \emph{linear}
constraints can be solved using \emph{quadratic programming}
(QP) \citep{Fle87,BoyVan04}. Almost all MPC algorithms have a
discrete-time framework.
As a move towards a continuous-time formulation of intermittent
control, the intermittent approach to MPC was introduced
\citep{RonArsGaw99} to reduce on-line computational demand whilst
retaining continuous-time like behaviour
\citep{GawWan07,GawWan09,GawLorLakGol11}. This section introduces and
illustrates this material\footnote{Hard constraints on input
  \emph{power flow} are considered by \citet{GawWagNeiWan12} -- these
  lead to \emph{quadratically-constrained quadratic programming} (QCQP)
  \citep{BoyVan04}.}.

Using the feedback control comprising the system matched hold
(\ref{eq:hold}), its initialisation (\ref{eq:x_w_smh}), and
feedback (\ref{eq:sfb_smh}) may cause state or input constraints to be
violated over the intermittent interval. The key idea introduced by
\citet{CheGaw06} and exploited by \citet{GawWan09} is to replace the SMH
initialisation (at time $t=t_i$ (\ref{eq:Delta_i})) of Equation (\ref{eq:x_w_smh}) by:
\begin{align}
  \xh(0) & = \begin{cases}
  \xop(t_i-\del) - \xss w(t_i) & \text{when constraints not violated}\\
  \UU_i & \text{otherwise}
\end{cases}
\end{align}
where $\UU_i$ is the result of the on-line optimisation to be discussed
in Section \ref{sec:opt}. 

The first step is to construct a set of equations describing the
evolution of the system state $\xx$ and the generalised hold state $\xx_h$
as a function of the initial states and assuming that disturbances are
zero.

The differential equation \eqref{eq:X} has the explicit solution
\begin{align}
  \X(\tau) &= \E(\tau)\X_i\label{eq:X(tau)}\\
  \text{where } \E(\tau) &= e^{\A_{xu}\tau}\label{eq:Etau}
\end{align}
where $\tau$ is the intermittent continuous-time variable based on
$t_i$. 

\subsubsection{Constraints}
\label{sec:constraints}
The vector $\X$ \eqref{eq:X} contains the system state and the state
of the generalised hold; equation \eqref{eq:X(tau)} explicitly give
$\X$ in terms of the system state $\xx_i(t_i)$ and the hold state
$\xh(t_i)=\UU_i$ at time $t_i$. Therefore any constraint expressed at a
future time $\tau$ as a linear combination of $\X$ can be re-expressed
in terms of $\xh$ and $\UU_i$. In particular if the constraint at time
$\tau$ is expressed as:
\begin{equation}
  \label{eq:constraint}
  \Gamma_\tau \X(\tau) \le \gamma_\tau
\end{equation}
where $\Gamma_\tau$ is a $2n$-dimensional row vector and $\gamma_\tau$
a scalar then the constraint can be re expressed using \eqref{eq:X(tau)}
in terms of the intermittent control vector $\UU_i$ as:
\begin{equation}
  \label{eq:U_const}
  \Gamma_\tau E_u(\tau) \UU_i \le \gamma_\tau - \Gamma_\tau E_x(\tau)\xx_i
\end{equation}
where $E$ has been partitioned into the two $2n \times n$ sub-matrices $E_x$
and $E_u$ as:
\begin{equation}
  \label{eq:E_u}
  E(\tau) =
  \begin{pmatrix}
    E_x(\tau) & E_u(\tau)
  \end{pmatrix}
\end{equation}
If there are $n_c$ such constraints, they can be combined as:
\begin{equation}
  \label{eq:U_consts}
  \Gamma \UU_i \le \gamma - \Gamma_x \xx_i
\end{equation}
where each row of $\Gamma$ is $\Gamma_\tau E_u(\tau)$, each row of
$\Gamma_x$ is $\Gamma_\tau E_x(\tau)$
and each (scalar) row of $\gamma$ is $\gamma_\tau$.

Following standard MPC practice, constraints beyond the intermittent
interval can be included by assuming that the the control strategy will
be open-loop in the future.

\subsubsection{Optimisation}
\label{sec:opt}
Following, for example, \citet{CheGaw06}, a modified version of the
infinite-horizon LQR cost \eqref{eq:LQ} is used:
\begin{equation}
  \label{eq:J_ic}
  J_{ic} =  \int_0^{\tau_1} \xx(\tau)^T \Q \xx(\tau) +
  u(\tau)\R u(\tau) \, d\tau 
+ \xx(\tau_1)^T \P \xx(\tau_1)
\end{equation}
where the weighting matrices $\Q$ and $\R$ are as used in
\eqref{eq:LQ} 
and $\P$ is the positive-definite solution of the algebraic Riccati
equation (ARE):
\begin{equation}
  \label{eq:are}
  \A^T \P + \P\A - \P\B\R^{-1}\B^T\P + \Q = 0
\end{equation}

There are an number of differences between our approach to minimising
$J_{ic}$ \eqref{eq:J_ic}  and the LQR approach to minimising $J_{LQR}$
\eqref{eq:LQ}.
\begin{enumerate}
\item Following the standard MPC approach \citep{Mac02}, this is a
  \emph{receding-horizon} optimisation in the time frame of $\tau$ not
  $t$.
\item The integral is over a finite time $\tau_1$.
\item A terminal cost is added based on the steady-state ARE
  \eqref{eq:are}. In the discrete-time context, this idea is due to
  \citet{RawMus93}.
\item The minimisation is with respect to the intermittent control
  vector $\UU_i$ generating the the control signal $u$ \eqref{eq:u}
  through the generalised hold \eqref{eq:hold}.
\end{enumerate}

Using $\X$ from \eqref{eq:X(tau)}, \eqref{eq:J_ic} can be rewritten as
\begin{align}
  J_{ic} &= \int_0^{\tau_1} \X(\tau)^T \Q_{xu} \X(\tau) \,
  d\tau  +  \X(\tau_1)^T \P_{xu} \X(\tau_1)\label{eq:J_ic_1}\\
  \text{where } \Q_{xu} &=
  \begin{pmatrix}
    \Q & 0_{n\times n}\\
    0_{n\times n} & \xx_{uo} \R \xx_{uo}^T
  \end{pmatrix}\\
  \text{and } \P_{xu} &=
  \begin{pmatrix}
    \P & 0_{n\times n}\\
    0_{n\times n} & 0_{n\times n} 
  \end{pmatrix}
\end{align}
Using \eqref{eq:X(tau)}, equation \eqref{eq:J_ic_1} can be rewritten as:
\begin{align}
  J_{ic} &=  \X_i^T J_{XX} \X_i\label{eq:J_ic_2}\\
  \text{where } J_{XX} &= J_{1} +  e^{\A_{xu}^T\tau_1} \P_{xu}  e^{\A_{xu}\tau_1} \\
\text{and } J_{1} &= \int_0^{\tau_1} 
  e^{\A_{xu}^T\tau} \Q_{xu}  e^{\A_{xu}\tau} \, d\tau 
\end{align}

The $2n \times 2n$ matrix $J_{XX}$ can be partitioned into four $n
\times n$ matrices as:
\begin{equation}
  \label{eq:J_UU}
  J_{XX} =
  \begin{pmatrix}
    J_{xx} & J_{xU}\\
    J_{Ux} & J_{UU}
  \end{pmatrix}
\end{equation}



\begin{lemma}[Constrained optimisation]\label{lem:QP}
 The minimisation of the cost function $J_{ic}$ of Equation \ref{eq:J_ic}
  subject to the constraints \eqref{eq:U_consts} is
  equivalent to the solution of the quadratic programme for the
  optimum value of $\UU_i$:
  \begin{equation}
    \label{eq:QP}
      \min_{U_i} \left \{ \UU_i^T J_{UU} \UU_i + \xx_i^T J_{Ux} \UU_i\right \}
  \end{equation}
$\text{ subject to }   \Gamma \UU_i \le \gamma - \Gamma_x \xx_i$
  where $J_{UU}$ and $J_{Ux}$ are given by
  \eqref{eq:J_UU} and $\Gamma$, $\Gamma_x$ and  $\gamma$ as described
  in Section \ref{sec:constraints}.
\end{lemma}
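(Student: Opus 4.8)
The plan is to observe that essentially all the analytic work has already been carried out in the derivation of \eqref{eq:J_ic_2}: the cost $J_{ic}$ has been reduced from the integral form \eqref{eq:J_ic} to the purely algebraic quadratic form $J_{ic} = \X_i^T J_{XX}\X_i$, with $J_{XX}$ given by \eqref{eq:J_UU} and computable off-line from $\Q_{xu}$, $\P_{xu}$ and the transition matrix $e^{\A_{xu}\tau}$. What remains is to separate out the part of this quadratic form that actually depends on the decision variable. To this end I would write the augmented initial state as $\X_i = \begin{pmatrix} \xx_i \\ \UU_i \end{pmatrix}$, where $\xx_i = \xx(t_i)$ is fixed (the measured or estimated state at the event time) and $\UU_i = \xh(t_i)$ is the free intermittent control vector over which we optimise.

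With the block partition \eqref{eq:J_UU}, expanding $\X_i^T J_{XX}\X_i$ produces four terms: a term $\xx_i^T J_{xx}\xx_i$ that does not involve $\UU_i$, two cross terms, and the pure term $\UU_i^T J_{UU}\UU_i$. First I would discard $\xx_i^T J_{xx}\xx_i$, since it is constant with respect to $\UU_i$ and hence does not affect the minimiser. Next I would use the symmetry of $J_{XX}$ -- which follows because $\Q_{xu}$ and $\P_{xu}$ are symmetric, so each integrand $e^{\A_{xu}^T\tau}\Q_{xu}e^{\A_{xu}\tau}$ and the terminal block are symmetric, and integration preserves symmetry -- to conclude $J_{xU} = J_{Ux}^T$ and thereby collapse the two cross terms into the single linear term in $\UU_i$ appearing in \eqref{eq:QP}. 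The constraints require no further work: \eqref{eq:U_const}--\eqref{eq:U_consts} already express every state and input constraint, propagated through the explicit solution \eqref{eq:X(tau)}, as the linear inequality $\Gamma \UU_i \le \gamma - \Gamma_x \xx_i$ in $\UU_i$.

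Assembling the reduced quadratic objective with these linear inequality constraints is, by definition, a quadratic programme, giving \eqref{eq:QP}. To confirm that it is a well-posed (convex) QP I would note that $J_{XX}$ is positive semi-definite, being an integral and sum of congruences of the positive semi-definite matrices $\Q_{xu}$ and $\P_{xu}$; hence its principal block $J_{UU}$ is positive semi-definite as well. I expect the only real subtlety to be bookkeeping in the cross terms -- keeping the symmetry argument and the constant scaling straight so that the linear coefficient lines up with $J_{Ux}$ exactly -- together with verifying the positive semi-definiteness of $J_{UU}$ that guarantees the QP \eqref{eq:QP} is solvable by the standard algorithms cited.
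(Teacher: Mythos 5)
The paper does not actually prove this lemma: its ``proof'' is the single line ``See \citep{CheGaw06}''. Your derivation therefore cannot be compared with an in-paper argument; what it does is supply, essentially correctly, the argument that the paper outsources to the reference. The route you take --- write $\X_i = \begin{pmatrix}\xx_i \\ \UU_i\end{pmatrix}$ with the system state $\xx_i$ fixed and the hold state $\UU_i = \xh(t_i)$ free, expand $\X_i^T J_{XX}\X_i$ over the block partition \eqref{eq:J_UU}, discard the constant term $\xx_i^T J_{xx}\xx_i$, merge the two cross terms using the symmetry of $J_{XX}$ (which indeed follows from the symmetry of $\Q_{xu}$ and $\P_{xu}$ and is preserved by congruence and integration), and observe that the constraints \eqref{eq:U_consts} are already linear in $\UU_i$ --- is the natural one, and your additional observation that $J_{UU}$ is positive semi-definite, being a principal block of a matrix assembled from congruences and integrals of the positive semi-definite matrices $\Q_{xu}$ and $\P_{xu}$, correctly establishes that \eqref{eq:QP} is a convex QP solvable by the standard algorithms cited in the paper.

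The one piece of bookkeeping you flagged but left unresolved does, however, matter. Carrying out the expansion gives cross terms $\xx_i^T J_{xU}\UU_i + \UU_i^T J_{Ux}\xx_i = 2\,\UU_i^T J_{Ux}\xx_i$ (using $J_{xU}=J_{Ux}^T$), whereas the linear term printed in \eqref{eq:QP} is $\xx_i^T J_{Ux}\UU_i = \UU_i^T J_{xU}\xx_i$; these differ by a factor of $2$ and a transpose, and since $J_{Ux}$ is not symmetric in general they produce different minimisers. So, completed honestly, your argument proves the lemma with linear term $2\,\UU_i^T J_{Ux}\xx_i$ (equivalently $2\,\xx_i^T J_{xU}\UU_i$); the statement as printed appears to carry a typographical slip from compressing this term, and with it read literally the claimed equivalence would fail. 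You should finish the proof by writing the reduced cost out explicitly and stating this correction, rather than leaving ``constant scaling'' as an unexamined subtlety.
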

\begin{proof}
  See \citep{CheGaw06}.
\end{proof}
\paragraph{Remarks.}
\begin{enumerate}
\item This optimisation is dependant on the system state $\xx$ and
  therefore must be accomplished at every intermittent interval
  $\Delta_i$.
\item The computation time is reflected in the time delay $\Delta$.
\item As discussed by \citet{CheGaw06}, the relation between the cost
  function (\ref{eq:QP}) and the LQ cost function (\ref{eq:LQ}) means
  that the solution of the the QP is the same as the LQ solution when
  constraints are not violated.
\end{enumerate}

\section{Example: constrained control of mass-spring system}
\label{sec:ex:cms}

\begin{figure}[htbp]
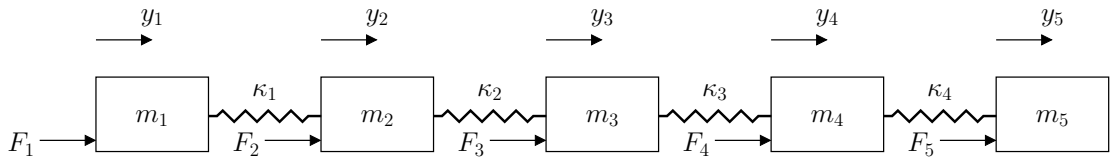

  \centering
  \Fig{CMS}{0.9}
  \caption[Coupled mass-spring system]{Coupled mass-spring system. The
    five masses $m_1$--$m_5$ all have unit mass and the four springs
    $\kappa_1$--$\kappa_5$ all have unit stiffness. The mass positions
    are denoted by $y_1$--$y_5$, velocities by $v_1$--$v_5$ the
    applied forces by $F_1$--$F_5$.}
  \label{fig:cms}
\end{figure}
\begin{figure}[htbp]
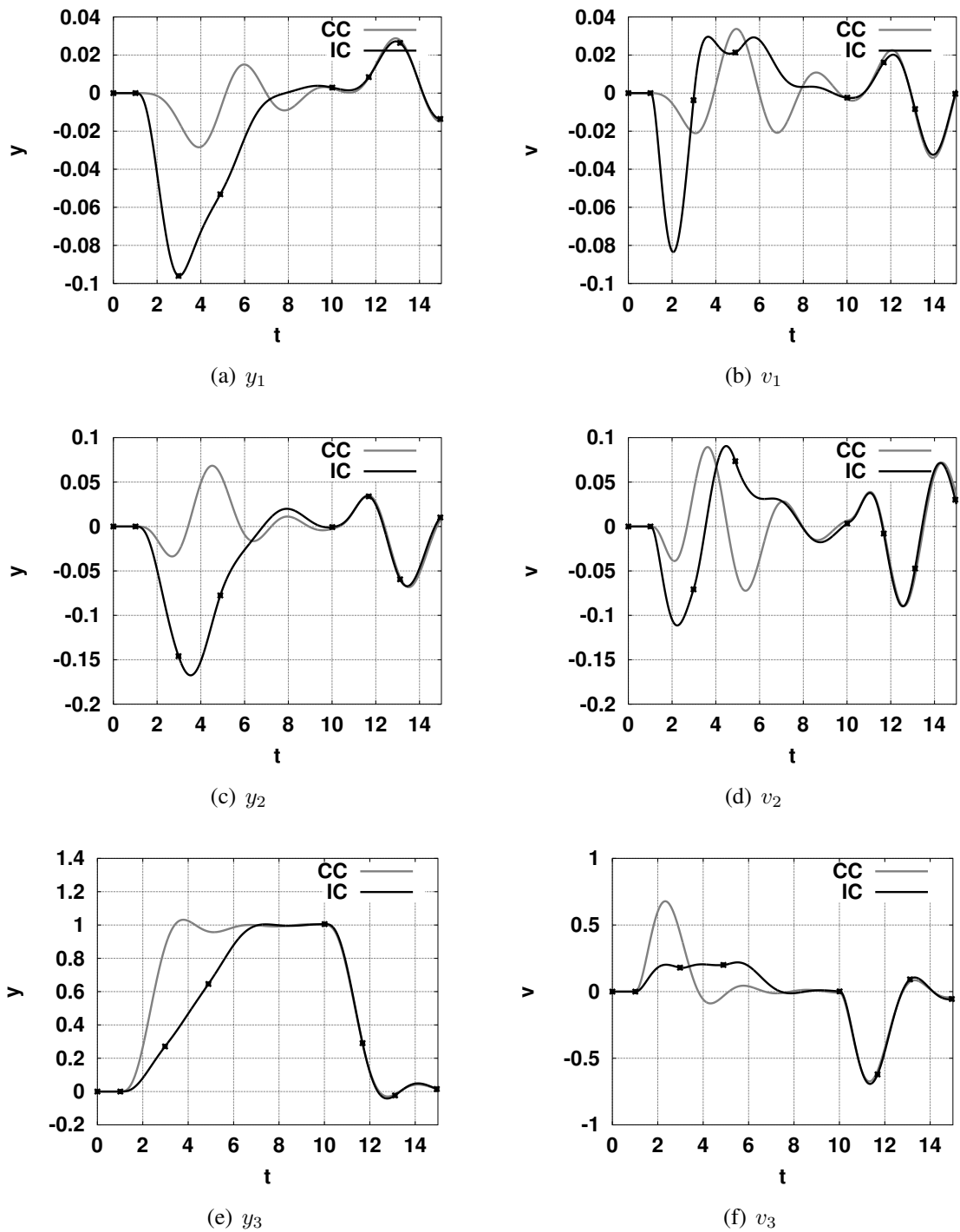

  \centering
  \SubFig{cms_1_y}{$y_1$}{0.45}  
  \SubFig{cms_1_v}{$v_1$}{0.45}\\
  \SubFig{cms_2_y}{$y_2$}{0.45}  
  \SubFig{cms_2_v}{$v_2$}{0.45}\\
  \SubFig{cms_3_y}{$y_3$}{0.45}  
  \SubFig{cms_3_v}{$v_3$}{0.45}
  \caption[Constrained control of mass-spring system]{Constrained
    control of mass-spring system. The left-hand column shows the
    positions of masses 1--3 and the right-hand column the
    corresponding velocities. The grey line corresponds to the
    simulation of the underlying \emph{unconstrained} continuous system and the black
    lines to intermittent control; the $\bullet$ correspond to the
    intermittent sampling times $t_i$.}
  \label{fig:mpc}
\end{figure}

Figure \ref{fig:cms} shows a coupled mass-spring system. The five
masses $m_1$--$m_5$ all have unit mass and the four springs
$\kappa_1$--$\kappa_5$ all have unit stiffness. The mass positions are
denoted by $y_1$--$y_5$, velocities by $v_1$--$v_5$ the applied forces
by $F_1$--$F_5$. In addition it is assumed that the five forces $F_i$ are
generated from the five control signals $u_i$ by simple integrators thus:
\begin{equation}
  \label{eq:F_i}
  \dot{F}_i = u_i, \; i=1 \dots 5 
\end{equation}
This system has fifteen states ($n_x=15$), five inputs ($n_u=5$) and
five outputs ($n_y=5$).

To examine the effect of constraints, consider the case where it is
required that the velocity of the centre mass ($i=3$) is constrained
above by
\begin{equation}
  \label{eq:v_3}
   v_3<0.2
\end{equation}
but unconstrained below. As noted in Section \ref{sec:constraints},
the constraints are at discrete values of intersample time $\tau$. In
this case, fifty points where chosen at $\tau = 0.1, 0.2, \dots
5.0$. The precise choice of these points is not critical.

In addition, the system setpoint is given by
\begin{equation}
  \label{eq:ex:cms:w}
  w_i(t) =
  \begin{cases}
    1 & i=3 \text{ and } 1 \le t <10\\
    0 & \text{ otherwise }
  \end{cases}
\end{equation}

Figure \ref{fig:mpc} shows the results of simulating the coupled
mass-spring system of Figure \ref{fig:cms} with constrained
intermittent control with constraint given by (\ref{eq:v_3}) and
setpoint by (\ref{eq:ex:cms:w}).  Figure \ref{subfig:cms_1_y} shows
the position of the first mass and \ref{subfig:cms_1_v} the
corresponding velocity; Figure \ref{subfig:cms_2_y} shows the position
of the second mass and \ref{subfig:cms_2_v} the corresponding
velocity; Figure \ref{subfig:cms_3_y} shows the position of the third
mass and \ref{subfig:cms_3_v} the corresponding velocity. The fourth
and fifth masses are not shown.
In each case, the corresponding simulation result for the underlying
continuous (unconstrained) simulation is also shown.

Note that on the forward motion of mass three, the velocity (Figure
\ref{subfig:cms_3_v}) is constrained and this is reflected in the
constant slope of the corresponding position (Figure
\ref{subfig:cms_3_y}). However, the backward motion is unconstrained
and closely approximates that corresponding to the unconstrained
continuous controller. The other masses (which have a zero setpoint)
deviate more from zero whilst mass three is  constrained, but are similar to the
unconstrained case when mass three is not constrained.



\section{Examples: human standing}
\label{sec:ex:stand}
Human control strategies in the context of quiet standing have been
investigated over many years by a number of authors. Early work, for
example
\citep{Pet02,LakCapLor03,BotCasMorSan05,LorMagLak05},
was based on a single inverted pendulum, single-input model of the
system. More recently, it has been shown
\citep{PinSwiSoe08,GunGriSie09,GunMulBli11,GunMulBli12}
that a multiple segment multiple input model is required to model
unconstrained quiet standing and this clearly has implications for the
corresponding human control system.
Intermittent control has been suggested as the basic algorithm
\citet{GawLorLakGol11}, \citet{GawLeeHalODw13} and
\citet{GawLorGolLak14} and related algorithms have been analysed by
\citet{Ins06,SteIns06}, \citet{AsaTasNomCasMor09} and
\citet{KowGleBro12}.

This section uses a linear three-segment model to illustrate key
features of the contrained multivariable intermittent control
described in Sections \ref{sec:ic} and \ref{sec:constrained}. 
Section \ref{sec:three} describes the three-link model, Section \ref{sec:muscle}
looks at a heirachical approach to muscle-level control, Section
\ref{sec:quiet} looks at an intermittent explanation of quiet standing
and Sections \ref{sec:track} and \ref{sec:dist} discuss tracking and
disturbance rejection respectively.

\subsection{A three-segment model}
\label{sec:three}
This section uses the linearised version of the three link, three
joint model of posture given by \citet{AleFroHor05}. The upper, middle
and lower links are indicated by subscripts $u$, $m$ and $l$ respectively.
The linearised equations correspond to:
\begin{equation}\label{eq:links}
  \M \ddot{\Th} - \G \Th = \N \tor
\end{equation}
where $\Th$ is the vector of link angles given by:
\begin{equation}
  \label{eq:theta}
  \Th =
  \begin{bmatrix}
    \theta_l\\ \theta_m\\ \theta_u
  \end{bmatrix}
\end{equation}
and $\tor$ the vector of joint torques.

the mass matrix $\M$ is given by
\begin{align}
  \M &=b
  \begin{bmatrix}
    m_{ll} &  m_{lm} &  m_{lu}\\
    m_{ml} &  m_{mm} &  m_{mu}\\
    m_{ul} &  m_{um} &  m_{uu}
  \end{bmatrix}\\
  \text{where } 
  m_{ll} &= m_lc_l^2 + (m_m+m_u)l_l^2 + I_l\\
  m_{mm} &= m_mc_m^2 + m_ul_m^2 + I_m\\
  m_{uu} &= m_uc_u^2 + I_u\\
  m_{ml} &= m_{lm} = m_mc_ml_l + m_ul_ll_m\\
  m_{ul} &= m_{lu} = m_uc_ul_l\\
  m_{um} &= m_{mu} = m_uc_ul_m
\end{align}
the gravity matrix $\G$ by
\begin{align}
    \G &= g
  \begin{bmatrix}
    g_{ll} &  0 &  0\\
    0 &  g_{mm} &  0\\
    0 &  0 &  g_{uu}
  \end{bmatrix}\label{eq:gravity_matrix}\\
  \text{where }
  g_{ll} & = m_lc_l + (m_m + m_u)l_l\\
  g_{mm} &= m_mc_m + m_ul_m\\
  g_{uu} &= m_uc_u 
\end{align}
and the input matrix $\N$ by
\begin{equation}
  \label{eq:N}
  \N =
  \begin{bmatrix}
    1 & -1 &  0\\
    0 &  1 & -1\\
    0 &  0 &  1
  \end{bmatrix}
\end{equation}

The joint angles $\phi_l \dots \phi_u$ can be written in terms of the
link angles as:
\begin{align}
  \phi_l &= \theta_l\\
  \phi_m &= \theta_m - \theta_l\\
  \phi_u &= \theta_u - \theta_m
\end{align}
or more compactly as:
\begin{align}
  \Ph &= \N^T \Th\\
  \text{where }
  \Ph &=
  \begin{bmatrix}
    \phi_l\\\phi_m\\\phi_u
  \end{bmatrix}\label{eq:phi}
\end{align}
The values for the link lengths $l$, CoM location $c$, masses $m$ and
moments of inertia (about CoM) were taken from Figure 4.1 and Table
4.1 of \citet{Win09}.

The model of Equation (\ref{eq:links}) can be rewritten as:
\begin{align}
 \ddt{\xx_0} &= \A_0 \xx_0 + \B_0 \tor\\
  \xx_0 &= 
  \begin{bmatrix}
    \dot{\Th} \\ \Th
  \end{bmatrix}
\end{align}
and
\begin{align}
  \A_0 &= 
  \begin{bmatrix}
    \Z_{3\times 3} & -\M^{-1}\G\\
    \I_{3\times 3} & \Z_{3\times 3}
  \end{bmatrix}\\
  \B_0 &= 
  \begin{bmatrix}
     \M^{-1}\N \\ \Z_{3\times 3}
  \end{bmatrix}
\end{align}
The eigenvalues of $\A_0$ are: $\pm 2.62$, $\pm 6.54$ and $\pm
20.4$. The positive eigenvalues indicate that this system is (without
control) unstable. 

More sophisticated models would include nonlinear geometric and
damping effects; but this model provides the basi for illustating the
properties of constrained intermittent control.

\subsection{Muscle model \& hierarchical control}
\label{sec:muscle}
As discussed by \citet{LakCapLor03} and \citet{LorMagLak05}, the
single-inverted pendulum model of balance control uses a muscle model
comprising a spring and a contractile element. In this context, the
effect of the spring is to counteract gravity and thus effectively slow
down the toppling speed on the pendulum. This toppling speed is
directly related to the maximum real part of the system eigenvalues.
This is important as it reduces the control bandwidth necessary to
stabilise the unstable inverted pendulum system
\citep{Ste03,LorGawLak06}.
%

%
The situation is more complicated in the multiple link case as, unlike
the single inverted pendulum case, the joint angles are distinct from
the link angles. From Equation (\ref{eq:gravity_matrix}), the gravity
matrix is diagonal in link space; on the other hand, as the muscle
springs act at the joints, the corresponding stiffness matrix is
diagonal in joint space and therefore cannot cancel the gravity
matrix in all configurations.

The spring model used here is the multi-link extension of the model of
\citet[Figure 1]{LorMagLak05} and is given by:
\begin{align}
  \tor_k &= \K_{\phi} \left( \Ph_0 - \Ph \right )\label{eq:tor_k}\\ 
   \text{where } 
  \K_{\phi} &=
   \begin{bmatrix}
     k_1 & 0 & 0\\
     0 & k_2 & 0\\
     0 & 0 & k_3
   \end{bmatrix}
\text{and }
 \Ph_0 &=
  \begin{bmatrix}
    \phi_{l0}\\\phi_{m0}\\\phi_{u0}
  \end{bmatrix}\label{eq:phi_0}
\end{align}
$\tor_k$ is the vector of spring torques at each joint, $\Ph$ contains
the joint angles (\ref{eq:phi}) and $k_1 \dots k_3$ are the spring
stiffnesses at each joint.  It is convenient to choose the control
signal $\uu$ to be:
\begin{equation}
  \label{eq:uu}
  \uu = \ddt{\Ph_0}
\end{equation}
and thus Equation (\ref{eq:tor_k}) can be rewritten as:
\begin{align}
  \ddt{\tor_k} &= \K_{\phi} \left( \uu - \ddt{\phi} \right )\label{eq:dtor_k}\\ 
   &= \K_{\phi} \left( \uu - \N^T \ddt{\theta} \right )
 \end{align}

 Setting $\tor=\tor_k + \tor_d$ where $\tor_d$ is a disturbance
 torque, the composite system formed from the link dynamics
 (\ref{eq:links}) and the spring dynamics (\ref{eq:dtor_k}) is given
 by Equation (\ref{eq:sys}) where:
\begin{align}
  \xx &= 
  \begin{bmatrix}
    \xx_0 \\ \tor
  \end{bmatrix} =
  \begin{bmatrix}
    \dot{\Th} \\ \Th \\ \tor
  \end{bmatrix}\label{eq:x}\\
  \yy &= \Th\\
  \dd &= \tor_d
\end{align}
and
\begin{align}
  \A  &= 
  \begin{bmatrix}
    \A_0 & \B_0\\
    -\K_{\phi}\N^T & \Z_{3\times 6}
  \end{bmatrix}
= 
  \begin{bmatrix}
    \Z_{3\times 3} & -\M^{-1}\G & \M^{-1}\N\\
    \I_{3\times 3} & \Z_{3\times 3} & \Z_{3\times 3}\\
    -\K_{\phi}\N^T & \Z_{3\times 3} & \Z_{3\times 3}
  \end{bmatrix}\label{eq:A}\\
  \B &= 
  \begin{bmatrix}
    \Z_{3\times 3}\\ \Z_{3\times 3}\\ \K_{\phi}
  \end{bmatrix}\\
  \B_d &= 
  \begin{bmatrix}
    \B_0  \\ \Z_{3\times 3}
  \end{bmatrix}
= 
  \begin{bmatrix}
     \M^{-1}\N \\ \Z_{3\times 3}  \\ \Z_{3\times 3}
  \end{bmatrix}\\
  \C &= 
  \begin{bmatrix}
   \Z_{3\times 3} & \I_{3\times 3}  & \Z_{3\times 3}
  \end{bmatrix}
\end{align}
There are, of course, many other state-space representations with the
same input-output properties, but this particular state space
representation has two useful features: firstly, the velocity control
input of Equation (\ref{eq:uu}) induces an integrator in each of the
three inputs and secondly the state explicitly contains the joint
torque due to the springs. The former feature simplifies control
design in the presence of input disturbances with constant components
and the latter feature allows spring preloading (in anticipation of a
disturbance) to be modelled as a state initial condition. These
features are used in the example of Section \ref{sec:dist}.

\begin{figure}[htbp]
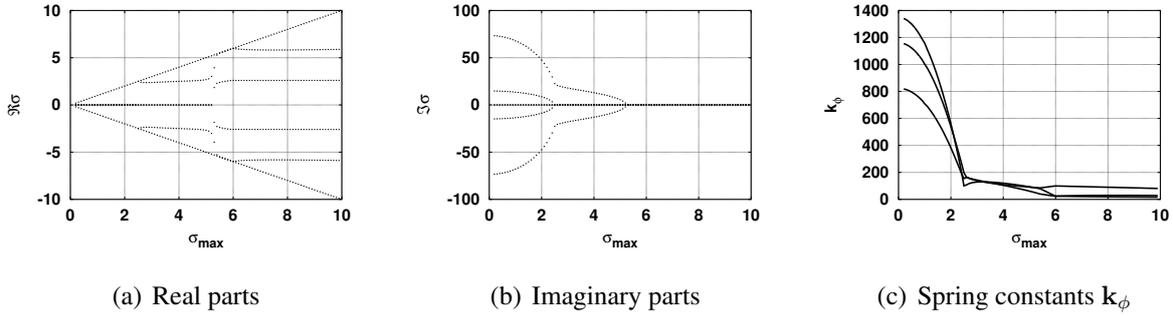

  \centering
  \SubFig{Three_eigs_real}{Real parts}{\FigSize}
  \SubFig{Three_eigs_imag}{Imaginary parts}{\FigSize}
  \SubFig{Three_K}{Spring constants $\kk_\phi$}{\FigSize}
  \caption[Choosing the spring constants.]{Choosing the spring
    constants. (a) The real parts of the non-zero eigenvalues plotted
    against $\sigma_{max}$, the specified maximum real part of all
    eigenvalues resulting from (\ref{eq:sqp_1}) \&
    (\ref{eq:sqp_2}). (b) The imaginary parts corresponding to
    (a). (c) The spring constants $\kk_\phi$. }
  \label{fig:eigs}
\end{figure}
It has been argued \citep{Hog84a} that humans use muscle co-activation
of antagonist muscles to manipulate the passive muscle stiffness and
thus $\K_{\phi}$. As mentioned above, the choice of $\K_{\phi}$ in the
single-link case \citep[Figure 1]{LorMagLak05} directly affects the
toppling speed via the maximum real part of the system
eigenvalues. Hence we argue that such muscle co-activation could be
used to choose the maximum real part of the system
eigenvalues and thus manipulate the required closed-loop control
bandwidth. However,  muscle co-activation requires the flow of energy
and so it makes sense to choose the minimal stiffness consistent with
the required maximum real part of the system
eigenvalues. Defining:
\begin{equation}
  \label{eq:k_phi}
  \kk_{phi} =
  \begin{bmatrix}
    k_1\\k_2\\k_3
  \end{bmatrix}
\end{equation}
this can be expressed mathematically as:
\begin{align}
  \min_{\kk_\phi} ||\K_\phi||\label{eq:sqp_1}\\
  \text{subject to } \max \left[ \Re \sigma_i \right ] < \sigma_{max} \label{eq:sqp_2}
\end{align}
where $\sigma_i$ is the $i$th eigenvalue of $\A$. This is a quadratic
optimisation with non-linear constraints which can be solved by sequential
quadratic programming (SQP) \citep{Fle87}.

In the single-link case, increasing spring stiffness from zero decreases the
value of the positive eigenvalue until it reaches zero, after that
point the two eigenvalues form a complex-conjugate pair with zero real
part. The three link case corresponds to three eigenvalue
pairs. Figure \ref{fig:eigs} shows how the real and imaginary parts
of these  six eigenvalues vary with the constraint $\sigma_{max}$
together with the spring constants $\kk_\phi$. Note that the spring
constants and imaginary parts rise rapidly when the maximum real
eigenvalue is reduced to below about 2.3.

Joint damping can be modelled by the equation:
\begin{align}
  \tor_c &= -\C_{\phi} \ddt{\Ph} = -\C_{\phi} \N^T \ddt{\Th} \label{eq:tor_c}\\ 
   \text{where }  \C_{\phi} &=
   \begin{bmatrix}
     c_1 & 0 & 0\\
     0 & c_2 & 0\\
     0 & 0 & c_3
   \end{bmatrix}
\end{align}
Setting $\tor=\tor_k+\tor_c$ the matrix $\A$ of Equation (\ref{eq:A})
is replaced by:
\begin{align}
   \A  &= \begin{bmatrix}
    -\M^{-1}\N\C_{\phi}\N^T & -\M^{-1}\G & \M^{-1}\N\\
    \I_{3\times 3} & \Z_{3\times 3} & \Z_{3\times 3}\\
    -\K_{\phi}\N^T & \Z_{3\times 3} & \Z_{3\times 3}
  \end{bmatrix}\label{eq:A_1}
\end{align}

\subsection{Quiet standing}
\label{sec:quiet}

\begin{figure}[htbp]
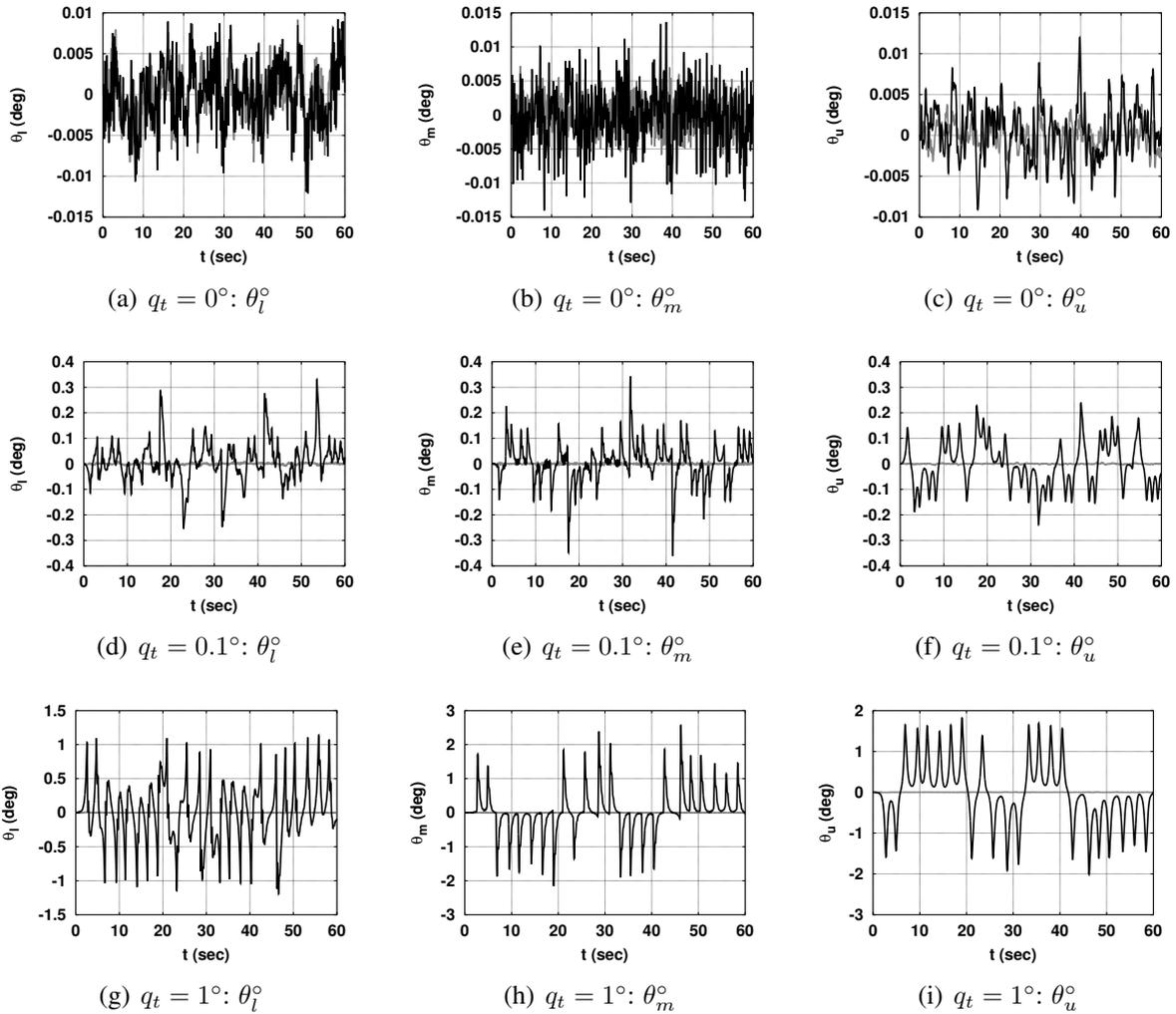

  \centering
  \FIG{0}{0}{0}{}
  \FIG{10}{0.1}{0}{}
  \FIG{100}{1}{0}{}
  \caption[Quiet standing]{Quiet standing with torque disturbance at
    the lower joint. Each row shows the lower ($\Th_l$), middle
    ($\Th_m$) and upper ($\Th_u$) link angles (deg) plotted against
    time $t$ (sec) for a different thresholds $q_t$. Larger thresholds
    give larger, and more regular sway angles.}
  \label{fig:quiet}
\end{figure}

\begin{figure}[htbp]
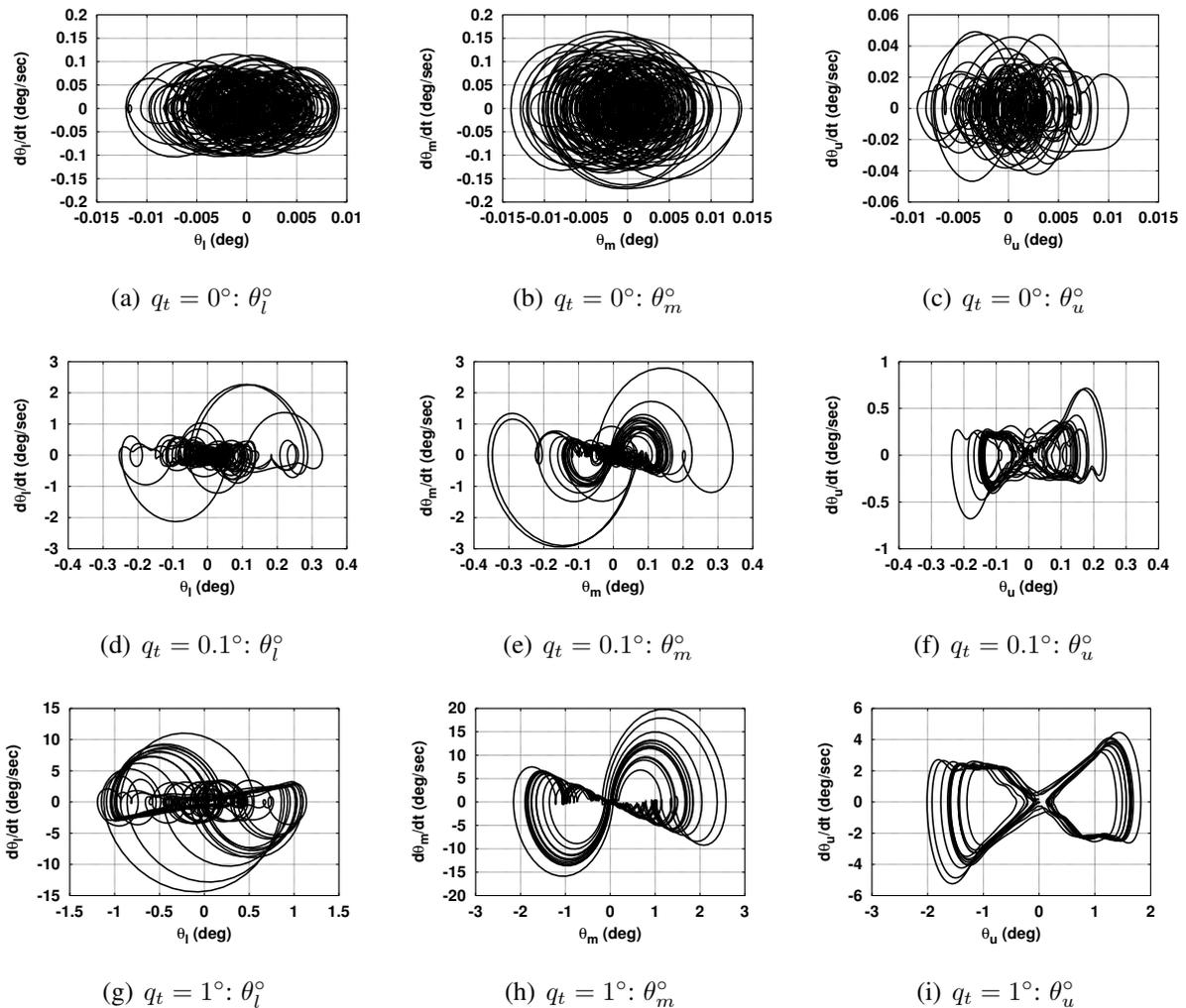

  \centering
  \FIG{0}{0}{0}{_pp}
  \FIG{10}{0.1}{0}{_pp}
  \FIG{100}{1}{0}{_pp}
  \caption[Quiet standing: phase-plane]{Quiet standing:
    phase-plane. Each plot corresponds to that of Figure
    \ref{fig:quiet} but the angular velocity is plotted against
    angle. Again, the increase in sway angle and angular velocity with
  threshold is evident.}
  \label{fig:quiet_pp}
\end{figure}

In the case of quiet standing, there is no setpoint tracking and no
constant disturbance and thus $w=0$ and $\xx_{ss}$ is not
computed. The spring constants were computed as in Section
\ref{sec:muscle} with $\sigma_{max}=3$. The corresponding non-zero
eigenvalues of $\A$ are $\pm 3$, $\pm 2.38$, $\pm j 17.9$
The intermittent controller of Section \ref{sec:ic}, based on the
continuous-time controller of Section \ref{sec:condes} was simulated
using the following control design parameters:
\begin{align}
  \Q_c &=
  \begin{bmatrix}
    q_v\I_{3\times3} & \Z_{3\times3} & \Z_{3\times3}\\
    \Z_{3\times3} & q_p\I_{3\times3} & \Z_{3\times3}\\
    \Z_{3\times3} & \Z_{3\times3} & q_\tor\I_{3\times3}
  \end{bmatrix}\\
  \text{where } q_v &= 0,\; q_p = 1,\; q_\tor = 10\\
  \R_c &= \K_{phi}^2 
\end{align}
The corresponding closed-loop poles are $-3.45 \pm 18.3$, $-3.95 \pm
3.82$, $-2.87 \pm 0.590$, $-5.31$, $-3.28$ and $-2.48$.
The intermittent control parameters (Section \ref{sec:ic}) were time
delay $\del$ and minimum intermittent interval $\Delta_{min}$ were
chosen as:
\begin{align}
  \del &= 0.1\text{s}\label{eq:del}\\
  \Delta_{min} &= 0.25\text{s}\label{eq:Delta_min}
\end{align}
These parameters are used in all of the following simulations.

A multisine disturbance with standard deviation $0.01\text{Nm}$ was
added to the control signal at the lower (ankle) joint. With reference
to Equation (\ref{eq:ED_e}), the threshold was set on the three
segment angles so that the threshold surface (in the 9D state-space)
was defined as:
\begin{align}
  \Th^T\Th &= \xx^T\Q_t\xx = q_t^2  \label{eq:surf}\\
  \text{where } \Q_t\ &=
  \begin{bmatrix}
    \Z_{3\times3} & \Z_{3\times3} & \Z_{3\times3}\\
    \Z_{3\times3} & \I_{3\times3} & \Z_{3\times3}\\
    \Z_{3\times3} & \Z_{3\times3} & \Z_{3\times3}
  \end{bmatrix} 
\end{align}
Three simulations of both IC and CC were performed  with event
threshold $q_t=0^\circ$,
$q_t=0.1^\circ$ and $q_t=1^\circ$ and the resultant link angles $\Th$ are
plotted against time in Figure \ref{fig:quiet}; the black lines show
the IC simulations and the grey lines the CC simulations. 
The three-segment model together with the spring model has 9 states.
Figure \ref{fig:quiet_pp} shows three cross sections though this space
(by plotting segment angular velocity against segment angle) for the
three thresholds.

As expected, the small threshold gives smaller displacements from
vertical; but the disturbance is more apparent.
The large threshold gives largely self-driven behaviour.
This behaviour is discussed in more detail by \citet{GawLorGolLak14}.

\subsection{Tracking}
\label{sec:track}
\begin{figure}[htbp]
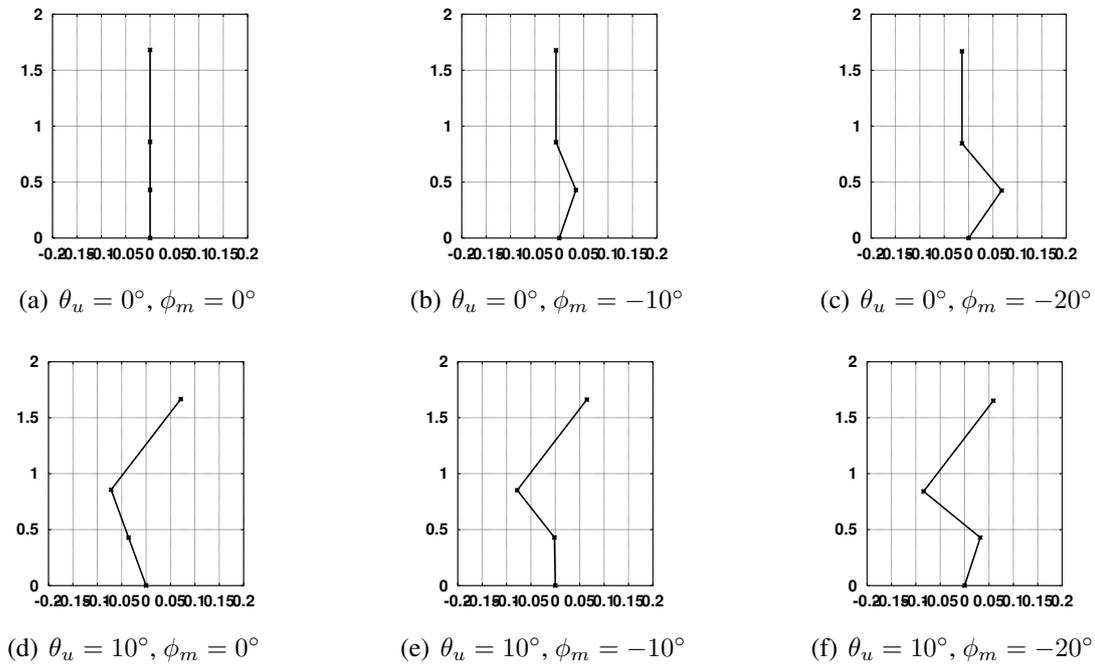

  \centering
  \SubFig{links_0_0}{$\theta_u=0^\circ$, $\phi_m=0^\circ$}{\FigSize}
  \SubFig{links_0_10}{$\theta_u=0^\circ$, $\phi_m=-10^\circ$}{\FigSize}
  \SubFig{links_0_20}{$\theta_u=0^\circ$, $\phi_m=-20^\circ$}{\FigSize}
  \SubFig{links_10_0}{$\theta_u=10^\circ$, $\phi_m=0^\circ$}{\FigSize}
  \SubFig{links_10_10}{$\theta_u=10^\circ$, $\phi_m=-10^\circ$}{\FigSize}
  \SubFig{links_10_20}{$\theta_u=10^\circ$, $\phi_m=-20^\circ$}{\FigSize}
  \caption[Equilibria: Link Configuration]{Equilibria: Link
    Configuration. (a),(b),(c). In each configuration, the upper link is set at
    $\theta_u=0^\circ$ and the posture is balanced (no ankle torque); the knee
    angle is set to three possible values. (d),(e),(f) as (a),(b),(c)
    but $\theta_u=10^\circ$ }
  \label{fig:equi_link}
\end{figure}

\begin{figure}[htbp]
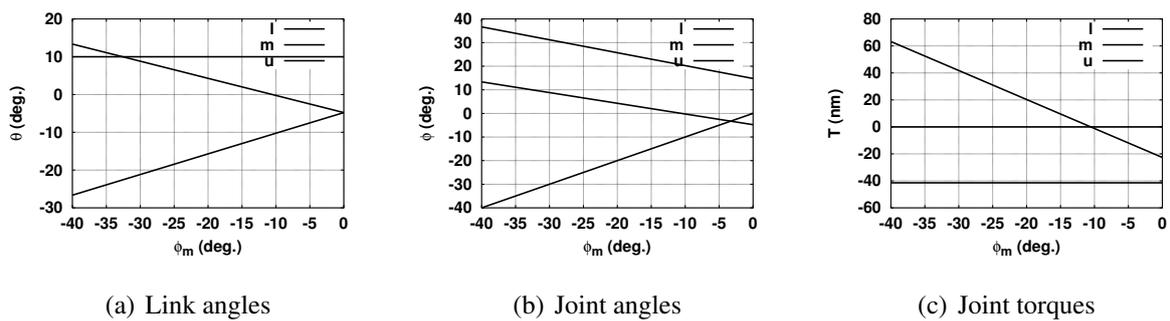

  \centering
  \SubFig{ss_link}{Link angles}{\FigSize}
  \SubFig{ss_joint}{Joint angles}{\FigSize}
  \SubFig{ss_torque}{Joint torques}{\FigSize}
  \caption[Equilibria: angles and torques.]{Equilibria: angles and
    torques. (a) The equilibrium link angles $\Th$ are plotted against
    the fixed knee angle $\phi_m$ with balanced posture and the upper
    link at an angle of $\theta_u=10^\circ$. (b) \& (c) As (a) but
    with joint angles $\Ph$ and joint torques $\tor$
    respectively. Note that the ankle torque $T_l$ is zero (balanced
    posture) and the waist torque $T_u$ balances the fixed
    $\theta_u=10^\circ$.}
  \label{fig:equi_angle}
\end{figure}

\begin{figure}[htbp]
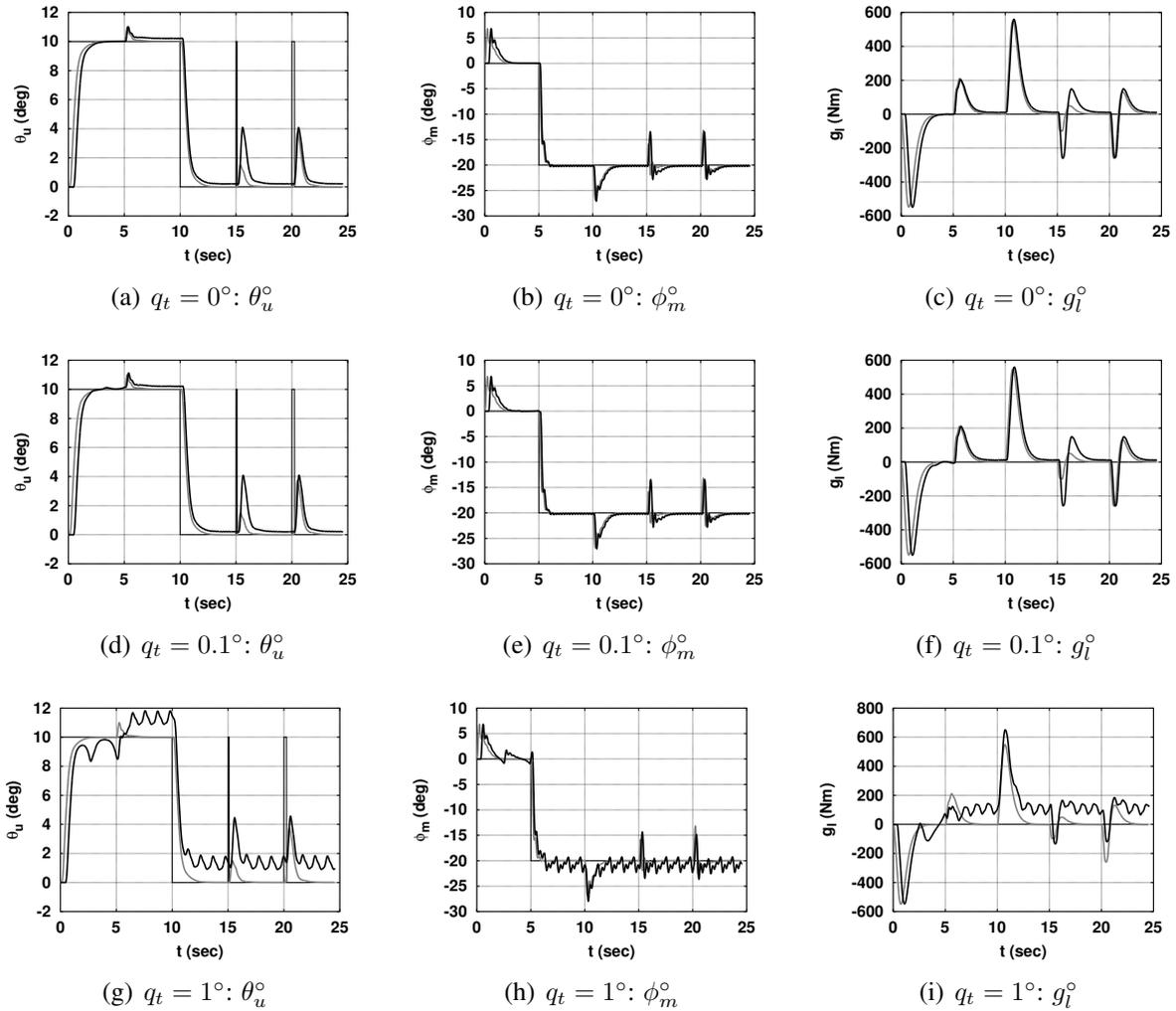

  \centering
  \FIGw{0}{0}{1}{1}
  \FIGw{10}{0.1}{1}{1}
  \FIGw{100}{1}{1}{1}
  \caption[Tracking: controlled knee joint]{Tracking: controlled knee
    joint. The equilibrium design (Section \ref{sec:steady-state}) sets
    the upper link angle $\Th_u$ to $10^\circ$ for $t<10$ and to to
    $0^\circ$ for $10 \le t<15$ and sets the gravity torque at the
    ankle joint to zero; it also sets the knee angle $\phi_m$ to zero
    for $t<5$ and to $-20^\circ$ for $t \ge 5$. At time $t=15$ a pulse
    of width 0.1s is applied to the upper link angle setpoint and a a
    pulse of width 0.25s is applied at time $t=20$. Note that the
    intermittent control response is similar in each case: this
    refractory behaviour is due to event-driven control with a minimum
    intermittent interval $\Delta_{min}=0.25$ (\ref{eq:Delta_min}).}
  \label{fig:track_k}
\end{figure}

\begin{figure}[htbp]
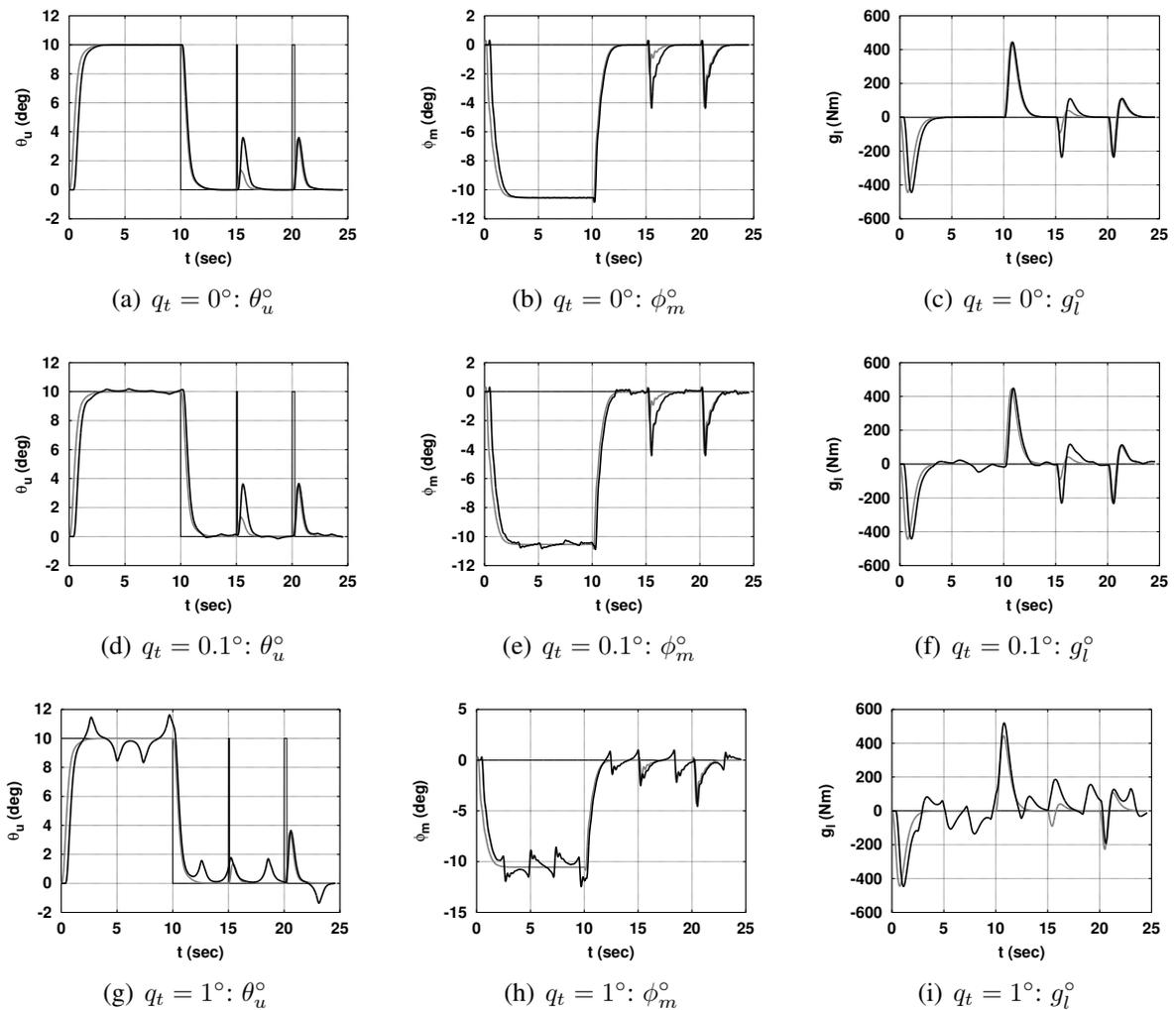

  \centering
  \FIGw{0}{0}{1}{0}
  \FIGw{10}{0.1}{1}{0}
  \FIGw{100}{1}{1}{0}
  \caption[Tracking: free knee joint]{Tracking: free knee joint. This
    Figure corresponds to Figure \ref{fig:track_k} except that the knee
  joint angle $\phi_m$ is not constrained.}
  \label{fig:track}
\end{figure}

As discussed in Section \ref{sec:steady-state}, the equilibrium state
$\xx_{ss}$ has to be designed for tracking purposes. As there are
three inputs, it is possible to satisfy up to three steady-state
conditions. Three possible steady-state conditions are:
\begin{enumerate}
\item The upper link should follow a setpoint:
  \begin{equation}
    \label{eq:th_ss}
    \Th_u = w_u
  \end{equation}
\item The the component of ankle torque due to gravity should be zero:
  \begin{equation}
    \label{eq:tau_ss}
    \tor_1 =
    \begin{bmatrix}
      1&0&0
    \end{bmatrix}
    \N^{-1}\G\Th = 0
  \end{equation}
\item The knee angle should follow a set point:
  \begin{equation}
    \label{eq:knee}
    \phi_m = \theta_m - \theta_l = w_m 
  \end{equation}
\end{enumerate}
These conditions correspond to:
\begin{align}
  \C_{ss} &=
  \begin{bmatrix}
    0 & 0 & 0 &  0 & 0 & 1 &  0 & 0 & 0\\
    0 & 0 & 0 &  27.14627 &  22.51155 &  23.74238  &  0 & 0 & 0\\
    0 & 0 & 0 &  -1 & 1 & 0 &  0 & 0 & 0
  \end{bmatrix}\\
  \yy_{ss} &= \I_{3 \times 3}\\
  \ww(t) &=
  \begin{bmatrix}
    w_u(t) \\ 0 \\ w_m(t) \label{eq:w}
  \end{bmatrix}
\end{align}
This choice is examined in Figures \ref{fig:equi_link} and
\ref{fig:equi_angle} by choosing the knee angle
$\phi_m=\theta_m-\theta_l$. Figure \ref{fig:equi_angle} shows how the
link and joint angles, and the corresponding torques, vary with
$\phi_m$. Figure \ref{fig:equi_link} shows a picture of the three
links for three values of $\phi_m$. In each case, note that the upper
link and the corresponding hip torque remain constant due to the first
condition and that each configuration appears balanced due to
condition 2.

The simulations shown in Figure \ref{fig:track_k} shows the tracking
of a setpoint $\ww(t)$ (\ref{eq:w}) using the three conditions for
determining the steady-state. In this example, the individual setpoint
components of Equation (\ref{eq:w}) are:
\begin{align}
   w_u(t) &=
   \begin{cases}
     10^\circ & 0<t\le 10\\
     0^\circ  & 10 < t \le 15\\
     0^\circ  & 10 < t \le 15\\
     10^\circ  & 15 < t \le 15.1\\
     0^\circ  & 15.1 < t \le 20\\
     10^\circ  & 20 < t \le 20.25\\
     0^\circ  & t> 20.25
   \end{cases}\label{eq:w_u}\\
  w_m(t) &=
   \begin{cases}
     0^\circ & 0<t\le 5\\
     -20^\circ  & t>5
   \end{cases}
 \end{align}
 
As a further example, only the first two conditions for
determining the steady-state are used; the knee is not included. 
These conditions correspond to:
\begin{align}
  \C_{ss} &=
  \begin{bmatrix}
    0 & 0 & 0 &  0 & 0 & 1 &  0 & 0 & 0\\
    0 & 0 & 0 &  27.14627 &  22.51155 &  23.74238  &  0 & 0 & 0
  \end{bmatrix}\\
  \yy_{ss} &= \I_{2 \times 2}\\
  \ww(t) &=
  \begin{bmatrix}
    w_u(t) \\ 0\label{eq:w2}
  \end{bmatrix}
\end{align}
The under-determined equation (\ref{eq:x_ss}) is solved using the pseudo
inverse.
The simulations shown in Figure \ref{fig:track} shows the tracking
of a setpoint $\ww(t)$ (\ref{eq:w}) using the first two conditions for
determining the steady-state. In this example, the individual setpoint
component $w_u$ is given by (\ref{eq:w_u}).
Comparing Figure \ref{fig:track}(b), (e) \& (h) with Figure
\ref{fig:track_k}(b), (e) \& (h), it can be seen that the knee angle
is no longer explicitly controlled.

\subsection{Disturbance rejection}
\label{sec:dist}
\begin{figure}[htbp]
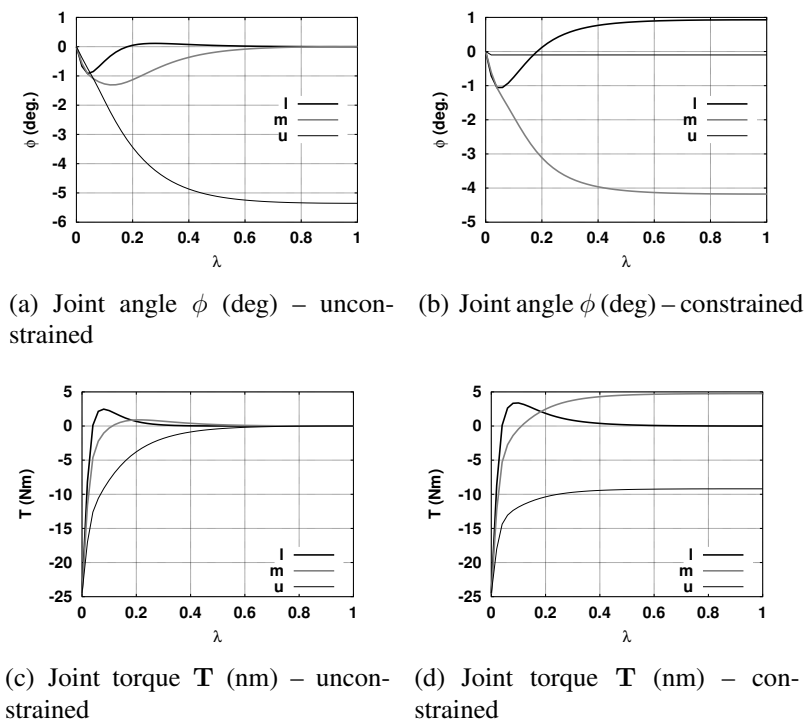

  \centering
  \SubFig{lift_ss_joint_0}{Joint angle $\phi$ (deg) -- unconstrained}{\FigSize}
  \SubFig{lift_ss_joint_1}{Joint angle $\phi$ (deg) -- constrained}{\FigSize}\\
  \SubFig{lift_ss_torque_0}{Joint torque  $\tor$ (nm) -- unconstrained}{\FigSize}
  \SubFig{lift_ss_torque_1}{Joint torque  $\tor$ (nm) -- constrained}{\FigSize}
  \caption[Equilibria]{Equilibria. For a constant disturbance torque
    $\tor_d$ acting on the upper link, the plots show how link angle
    and joint torque vary with the weighting factor
    $\lambda$. $\lambda=0$ gives an upright posture (zero link and
    joint angles -- Figure \ref{subfig:lift_links_0_0}) and
    $\lambda=1$ gives a balanced posture (zero joint torques -- Figure
    \ref{subfig:lift_links_10_0}). $\lambda=0.5$ gives an intermediate
    posture (Figure \ref{subfig:lift_links_5_0}).  The left column is
    the unconstrained case, the right column is the constrained case
    where hip angle $\phi_u>-0.1^\circ$ and knee angle $\phi_m<0$; the
    former constraint becomes active as $\lambda$ increases and the
    knee and hip joint torques are no longer zero at $\lambda=1$.}
  \label{fig:equilibria}
\end{figure}

\begin{figure}[htbp]
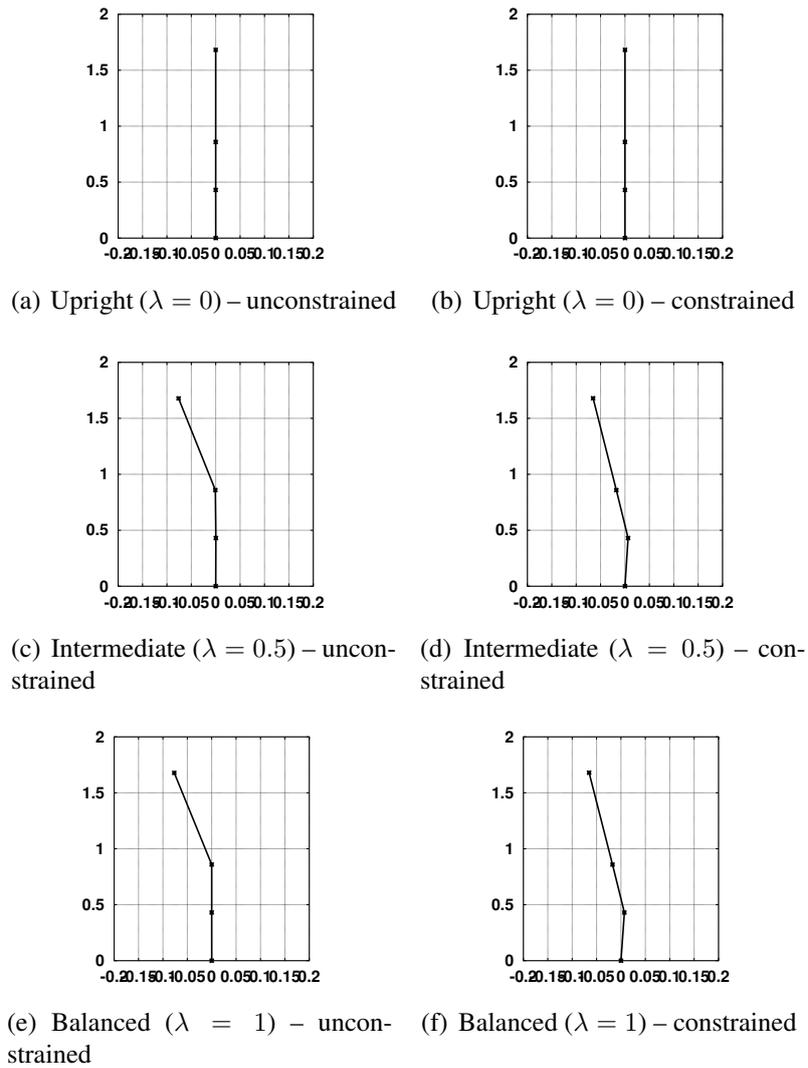

  \centering
    \SubFig{lift_links_0_0}{Upright ($\lambda=0$) -- unconstrained}{\FigSize}
    \SubFig{lift_links_0_1}{Upright ($\lambda=0$) -- constrained}{\FigSize}\\
    \SubFig{lift_links_5_0}{Intermediate ($\lambda=0.5$) -- unconstrained}{\FigSize}
    \SubFig{lift_links_5_1}{Intermediate ($\lambda=0.5$) -- constrained}{\FigSize}\\
    \SubFig{lift_links_10_0}{Balanced ($\lambda=1$) -- unconstrained}{\FigSize}
    \SubFig{lift_links_10_1}{Balanced ($\lambda=1$) -- constrained}{\FigSize}
    \caption[Equilibria: link configurations]{Equilibria: link
      configurations. (a),(c),(e) unconstrained; (b),(d),(f)
      constrained where hip angle $\phi_u>-0.1^\circ$ and knee angle
      $\phi_m<0$.}
  \label{fig:link_config}
\end{figure}

\begin{figure}[htbp]
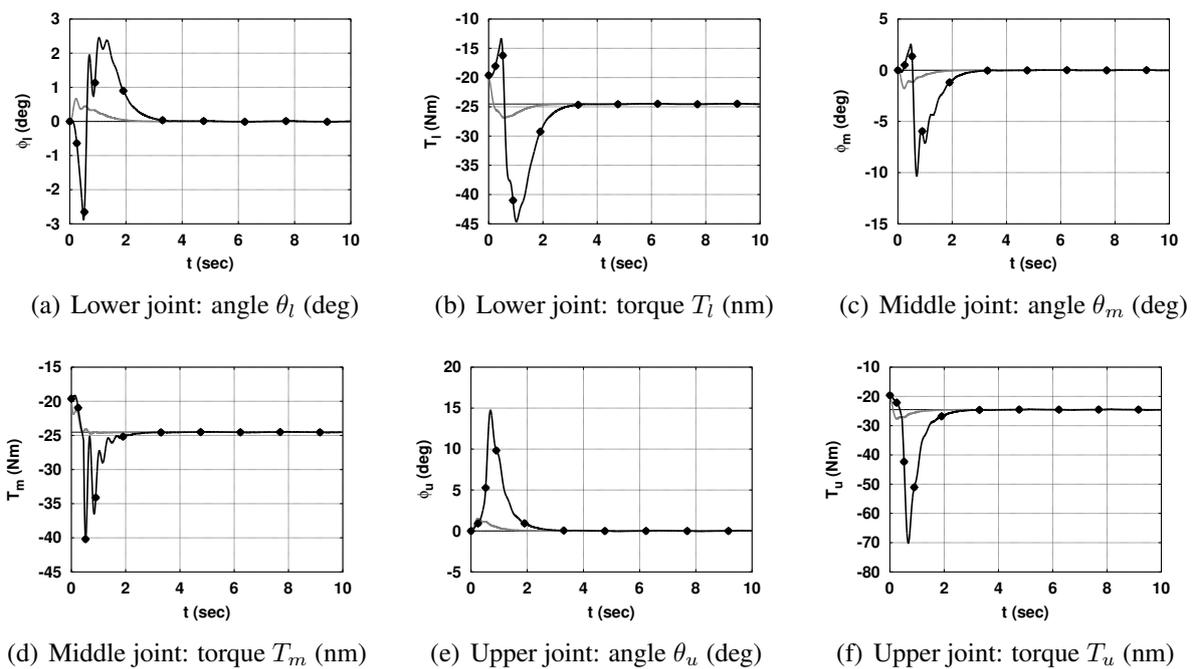

  \centering
  \PoleFig{1}{0}{0}{1}{80}
  \caption[Simulation: upright posture]{Pole-lifting simulation --
    constrained: upright posture ($\lambda=0$). The spring preload
    $\kappa$ (\ref{eq:preload}) is
    80\%. Note that the steady state link angles are zero and the
    steady-state torques are all $-T_d$ to balance $T_d=g m_p
    l_p = 24.5$Nm (\ref{eq:tau_d}). The dots correspond to the sample
    times $t_i$. The intervals $\Delta_i$ (\ref{eq:Delta_i}) are
    irregular and greater than the minimum $\Delta_{min}$.  }
  \label{fig:upright_con}
\end{figure}

\begin{figure}[htbp]
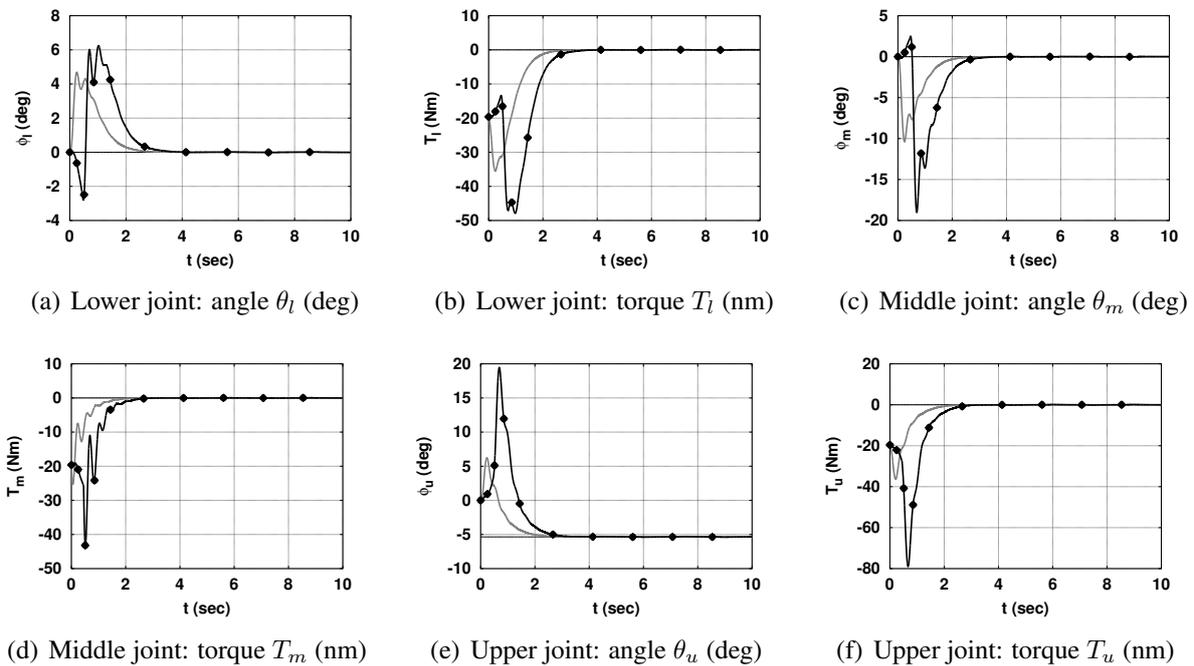

  \centering
  \PoleFig{1}{100}{0}{0}{80}
  \caption[Simulation: balanced posture -- unconstrained]{Pole-lifting
    simulation -- unconstrained steady-state: balanced posture
    ($\lambda=1$). The spring preload $\kappa$ (\ref{eq:preload}) is
    80\%. The steady state joint torques are zero.}
  \label{fig:balanced_uncon}
\end{figure}

\begin{figure}[htbp]
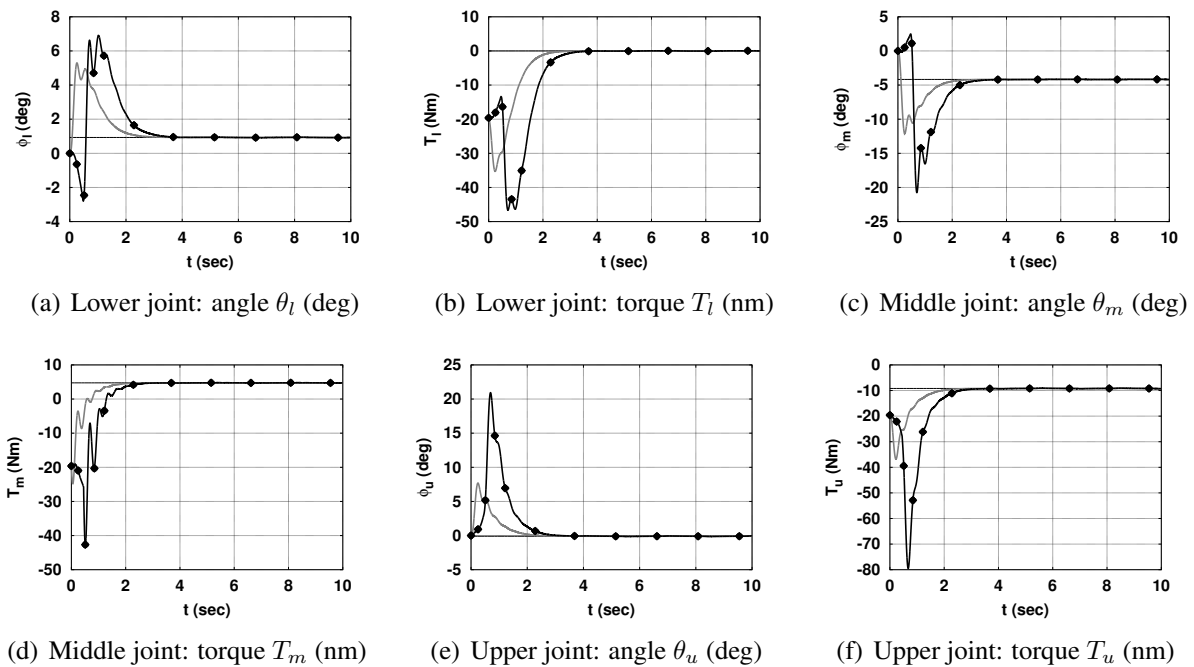

  \centering
  \PoleFig{1}{100}{0}{1}{80}
  \caption[Simulation: balanced posture-- constrained]{Pole-lifting
    simulation -- constrained steady-state: balanced posture ($\lambda=1$). The
    spring preload $\kappa$ (\ref{eq:preload}) is 80\%. Due to the
    constraints, the steady state joint torques $\tor_u$ and $\tor_m$
    are not zero, but the upper (hip) joint is now constrained.}
  \label{fig:balanced_con}
\end{figure}




\begin{figure}[htbp]
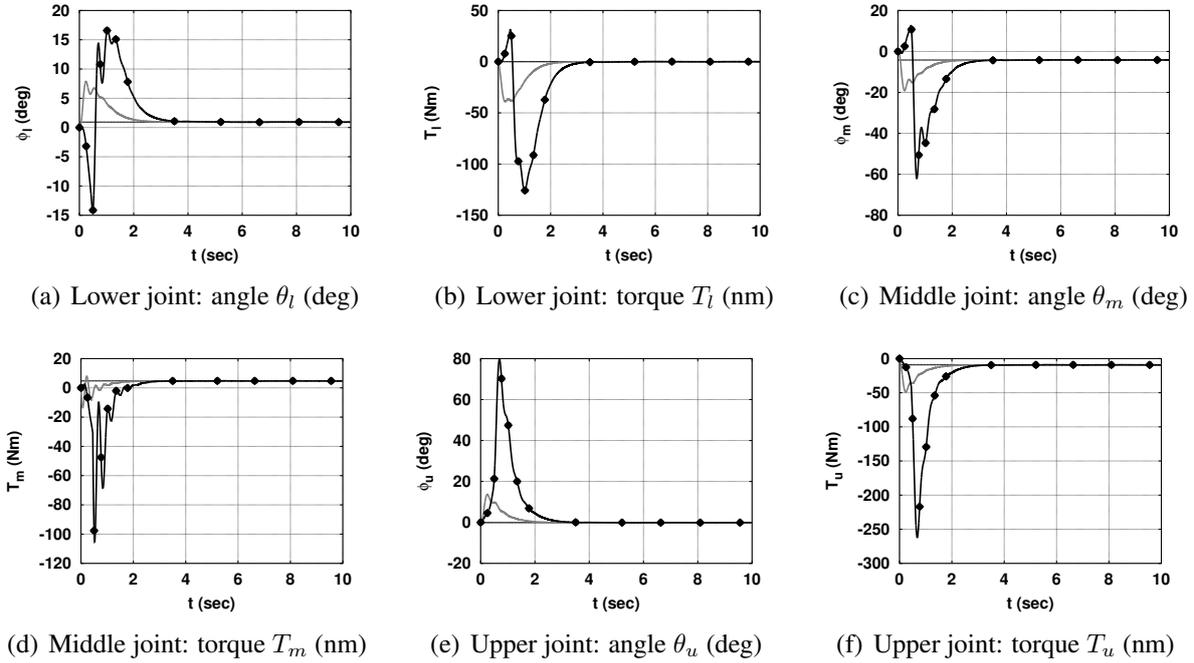

  \centering
  \PoleFig{1}{100}{0}{1}{0}
  \caption[Simulation: balanced posture]{Pole-lifting simulation --
    constrained steady-state but with 0\% preload. This is the same as Figure
    \ref{fig:balanced_con} except that the spring preload is 0\%.}
  \label{fig:balanced_con_pre_0}
\end{figure}



Detailed modelling of a human lifting and holding a heavy pole would
require complicated dynamical equations. This section looks at a
simple approximation to the case where a heavy pole of mass $m_p$ is
held at a fixed distance $l_p$ to the the body. In particular, the
effect is modelled by
\begin{enumerate}
\item adding a torque disturbance $T_d$ to the upper link where
  \begin{equation}
    \label{eq:tau_d}
    T_d = g m_p l_p
  \end{equation}
\item adding a mass $m_p$ to the upper link.
\end{enumerate}

In terms of the system Equation (\ref{eq:sys}) and the three link
model of Equation (\ref{eq:links}), the disturbance $\dd$ is given by
\begin{align}
  \dd &= N^{-1}
  \begin{bmatrix}
    0\\0\\T_d
  \end{bmatrix}\\
  &=   
  \begin{bmatrix}
    1\\1\\1
  \end{bmatrix} T_d
\end{align}

As discussed in Section \ref{sec:muscle}, it is possible to preload
the joint spring to give an initial torque. In this context, this is
done by initialising the system state $\xx$ of Equation (\ref{eq:x}) as:
\begin{equation}
  \label{eq:preload}
  \xx(0) =   
  \begin{bmatrix}
    \dot{\Th}(0) \\ \Th(0) \\ \tor(0)
  \end{bmatrix} =
  \begin{bmatrix}
    \Z_{3\times1} \\ \Z_{3\times1} \\ \kappa \dd
  \end{bmatrix}
\end{equation}
$\kappa$ will be referred to as the \emph{spring preload} and will be
expressed as a percentage: thus $\kappa = 0.8$ will be referred to as
80\% preload.

There are many postures appropriate to this situation, two of which are:
\begin{description}
\item[upright]: all joint \emph{angles} are zero and the pole is balanced by
  appropriate joint torques (Figures \ref{subfig:lift_links_0_0} \& \ref{subfig:lift_links_0_1});
\item[balanced]: all joint \emph{torques} are zero and the pole is
  balanced by appropriate joint (and thus link) angles (Figures
  \ref{subfig:lift_links_10_0} \& \ref{subfig:lift_links_10_1}).
\end{description}
In terms of Equation (\ref{eq:y_ss}), the upright posture is specified
by choosing:
\begin{align}
  \C_{ss} &=
  \begin{bmatrix}
    \Z_{3\times3} & \I_{3\times3} & \Z_{3\times3}
  \end{bmatrix}\\
  \yy_{ss} &= \Z_{3\times1}
\end{align}
and the balanced posture is specified
by choosing:
\begin{align}
  \C_{ss} &=
  \begin{bmatrix}
    \Z_{3\times3} & \Z_{3\times3} & \I_{3\times3}
  \end{bmatrix}\\
  \yy_{ss} &= \Z_{3\times1}
\end{align}
A combination of both can be specified by  choosing:
\begin{align}
  \C_{ss} &=
  \begin{bmatrix}
    \Z_{3\times3} & \I_{3\times3} & \Z_{3\times3}\\
    \Z_{3\times3} & \Z_{3\times3} & \I_{3\times3}
  \end{bmatrix}\label{eqn_C_ss_lambda}\\
  \yy_{ss} &= \Z_{6\times1}\label{eqn_y_ss_lambda}\\
  \Q_{ss} &=
  \begin{bmatrix}
    (1-\lambda) \I_{3\times3} & \Z_{3\times3}\\
    \Z_{3\times3} & \frac{\lambda}{\tor_d} \I_{3\times3}
  \end{bmatrix}\label{eqn_Q_ss_lambda}
\end{align}
The parameter $0 \le\lambda\le 1$ weights the two postures and
division by $\tor_d$ renders the equations dimensionless. 

When $\C_{ss}$ is given by (\ref{eqn_C_ss_lambda}), $n_{ss}=6$. As
$n_u=3$, $n_{ss}>n_u$ and so, as discussed in Section
\ref{sec:steady-state}, the set of equations (\ref{eq:x_ss}) is over
determined and the approach of Section \ref{sec:constrained} is used.
Two situations are examined: unconstrained and constrained with hip
angle and knee angle subject to the inequality constraints:
\begin{align}
  \phi_u&>-0.1^\circ\\
  \phi_m&<0
\end{align}
In each case, the equality constraint (\ref{eq:equality}) is imposed.
Figure \ref{fig:equilibria} shows how the equilibrium joint angle $\Ph$
and torque $\tor$ vary with $\lambda$ for the two cases.
As illustrated in Figure \ref{fig:link_config}
the two extreme cases $\lambda=0$ and  $\lambda=1$ correspond to the
upright and balanced postures; other values give intermediate postures.

Figures \ref{fig:upright_con} and \ref{fig:balanced_con} show simulation
results for the two extreme cases of $\lambda$ for the unconstrained
case with $m_p=5$kg and $l_p=0.5$m. In
each case, the initial link angles are all zero
($\Th_l=\Th_m=\Th_u=0$) and the disturbance torque $\tor_d=g$ is applied
at $t=0$. Apart from the equilibrium vector $\xx_{ss}$, the control
parameters are the same in each case.

In the case of Figure \ref{fig:upright_con}, the steady-state torques are
$\tor_l=\tor_m=\tor_u=-\tor_d$ to balance $\tor_d=g m_p l_p$; in the
case of Figure \ref{fig:balanced_con}, the links balance the applied
torque by setting $\Th_l=\Th_m=0$ and $\Th_u =
\dfrac{-\del}{gc_um_u}$.





\section{Intermittency induces Variability}
\label{sec:variability}
\begin{figure}[htbp]
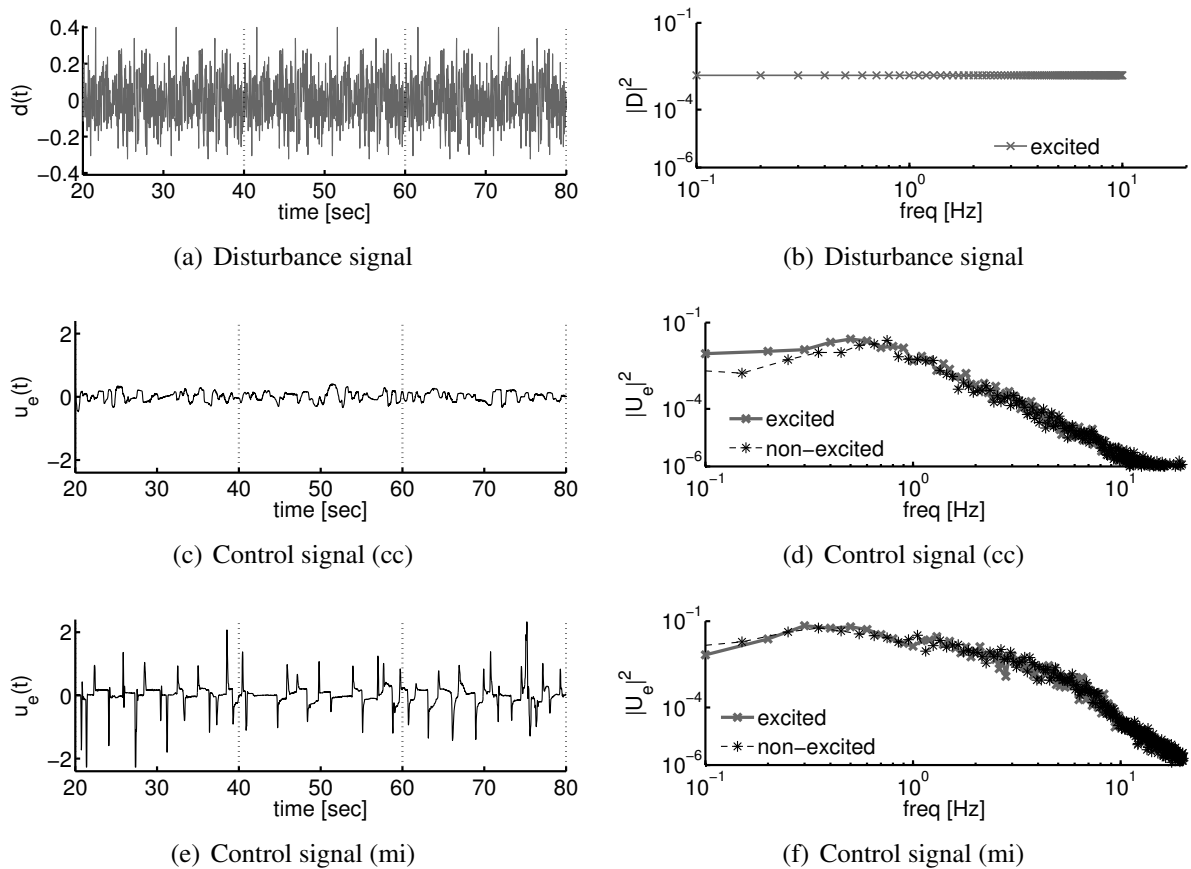

  \centering
  \SubFig{time_20sec_cc_d}{Disturbance signal}{0.47}%
  \SubFig{freq_20sec_cc_d}{Disturbance signal}{0.47}%
  \SubFig{time_20sec_cc_ue}{Control signal (cc)}{0.47}%
  \SubFig{freq_20sec_cc_ue}{Control signal (cc)}{0.47}%
  \SubFig{time_20sec_ol_ue}{Control signal (mi)}{0.47}%
  \SubFig{freq_20sec_ol_ue}{Control signal (mi)}{0.47}%
  \caption[Experimental data showing variability]{Experimental data
    showing variability during human motor control. Plots on the left
    side show time-domain signals, plots on the right side depict the
    corresponding frequency domain signals. ``cc'' -- keep close to
    centre (position control), ``mi'' -- minimal intervention
    (minimise control).}
  \label{fig:variability_example}
\end{figure}
Variability is an important characteristic of human motor control:
when repeatedly exposed to identical excitation, the response of the
human operator is different for each repetition. This is illustrated
in Figure~\ref{fig:variability_example}:
Figure~\ref{subfig:time_20sec_cc_d} shows a periodic input disturbance
(periodicity 10s), while Figures~\ref{subfig:time_20sec_cc_ue}
and~\ref{subfig:time_20sec_ol_ue} show the corresponding output signal
of a human controller for different control aims. It is clear that in
both cases, the control signal is different for each 10s period of
identical disturbance.

In the frequency domain, variability is represented by the observation
that, when the system is excited at a range of discrete frequencies
(as shown in Figure~\ref{subfig:freq_20sec_cc_d}, the disturbance
signal contains frequency components at $[0.1,\,0.2,\dots,10]$Hz), the
output response contains information at both the excited and the
non-excited frequencies (Figures~\ref{subfig:freq_20sec_cc_ue}
and~\ref{subfig:freq_20sec_ol_ue}). The response at the non-excited
frequencies (at which the excitation signal is zero) is termed the
remnant.

Variability is usually explained by appropriately constructed motor-
and observation noise which is added to a linear continuous-time model
of the human controller (signals $v_u$ and $v_y$ in
Figure~\ref{fig:MSS_arch}) \citep{LevBarKle69,KleBarLev70}.  While
this is currently the prominent model in human control, its
physiological basis is not fully established. This has led to the idea
that the remnant signal might be based on structure rather than
randomness \citep{NewDeuSos06}.

Intermittent control includes a sampling process, which is generally
based on thresholds associated with a trigger (see
Figure~\ref{fig:MIC_arch}). This non-uniform sampling process leads to
a time-varying response of the controller. It has been suggested that
the remnant can be explained by event-driven intermittent control
without the need for added noise \citep{MamGolGawLor11,GawGolMamLorLak13}, and that
this sampling process introduces variability \citep{GawLeeHalODw13}.

In this section we will discuss how intermittency can provide an
explanation for variability which is based on the controller structure
and does not require a random process. Experimental data from a
visual-manual control task will be used as an illustrative example.

\subsection{Experimental setup}
\label{sec:experimental-setup}

In this section, experimental data from a visual-manual control task
are used in which the participant were asked to use a sensitive,
contactless, uniaxial joystick to sustain control of an unstable 2nd
order system whose output was displayed as a dot on a oscilloscope
\citep{LorGolLakGaw10}. The controlled system represented an inverted
pendulum with a dynamic response similar to that of a human standing
(Load 2 of Table 1 in \citep{LorLakGaw09}),
\begin{equation}
  \label{eq:sys-balance}
  \begin{cases}
      \ddt{{\xx}}(t) &= \left[ \begin{array}{c c}
      -0.0372\; &  1.231\\
      1  &       0
    \end{array} \right] {\xx}(t) + \left[ \begin{array}{c}
      6.977 \\
      0
    \end{array} \right] (\uu(t) - \ddd(t))\\
      \yy(t) &= \left[ \begin{array}{c c}
      0 \;&  1
    \end{array} \right] {\xx}(t)\\
      \yy_o(t) &= \left[ \begin{array}{c c}
      1 \;&  1
    \end{array} \right] {\xx}(t)
  \end{cases}
\end{equation}
The external disturbance signal, $d(t)$, applied to the load input,
was a multi-sine consisting of $N_f=100$ discrete frequencies $\omega_k$, with resolution
$\omega_0=2\pi f_0$, $f_0=0.1\text{Hz}$ \citep{PinSch01}
\begin{equation}
  \label{eq:multisine}
  d(t)=\sum_{k=1}^{N_f}a_k\,cos\left(\omega_kt + \phi_k \right)
\qquad\text{with}\;\omega_k=2\pi\,k f_0
\end{equation}
The signal $d(t)$ is periodic with $T_0=1/f_0=10\text{s}$. To obtain
an unpredictable excitation, the phases $\phi_k$ are random values
taken from a uniform distribution on the open interval $(0,2\pi)$,
while $a_k=1$ for all $k$ to ensure that all frequency are equally
excited.

We considered two control priorities using the instructions ``keep the
dot as close to the centre as possible'' (``cc'', prioritising
position), and ``while keeping the dot on screen, wait as long as
possible before intervening'' (``mi'', minimising intervention).

\subsection{Identification of the linear time-invariable (LTI) response}
\label{sec:ident-non-vari}

Using previously established methods discussed in
Section~\ref{sec:iic} and by \citet{GolMamLorGaw12}, the design
parameters (i.e. LQ design weightings and mean time-delay, $\Delta$)
for an optimal, continuous-time linear predictive controller (PC)
(Figure~\ref{fig:MSS_arch}) are identified by fitting the complex
frequency response function relating $d$ to $u_e$ at the excited
frequencies. The linear fit to the experimental data is shown in
Figures~\ref{subfig:subj9_cc_r-1515_baseline_pc}
and~\ref{subfig:subj9_ol_r-1515_baseline_pc} for the two different
experimental instructions (``cc'' and ``mi''). Note that the PC only
fits the excited frequency components; its response at the non-excited
frequencies (bottom plots) is zero.
\begin{figure}[htbp]
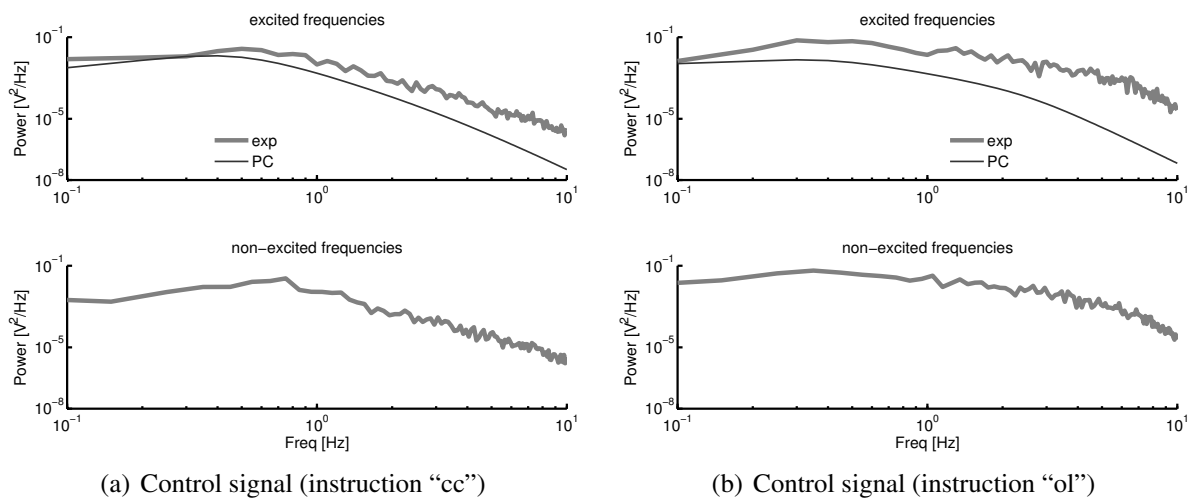

  \centering
  \SubFig{subj9_cc_r-1515_baseline_pc}{Control signal (instruction ``cc'')}{0.47}%
  \SubFig{subj9_ol_r-1515_baseline_pc}{Control signal (instruction ``ol'')}{0.47}\\
  \caption[Example individual result]{Example individual result for identification of LTI
    response for two experimental instructions. The continuous
    predictive controller can fit the excited frequencies (top graphs), but can not
    explain the experimental response at non-excited frequencies
    (bottom graphs).}
  \label{fig:result-individual}
\end{figure}

\subsection{Identification of the remnant response}
\label{sec:remnant}
The controller design parameters (i.e. the LQ design weightings)
obtained when fitting the LTI response, are used as the basis to model
the response at the non-excited (remnant) frequencies. First, the
standard approach of adding noise to a continuous PC is
demonstrated. Following this, it is shown that event driven IC can
approximate the experimental remnant response, by adjusting the
threshold parameters associated with the event trigger.

\subsubsection{Variability by adding noise}
\label{sec:pc-with-added}
For the PC, noise can be injected either as observation noise, $v_y$,
or as noise added to the input, $v_u$. The noise spectrum is obtained
by considering the measured response $u_e$ at non-excited frequencies
and, using the corresponding loop transfer function (see
Section~\ref{sec:pred-cont-contr}), calculating the noise input ($v_u$
or $v_y$) required to generate this. The calculated noise signal is
then interpolated at the excited frequencies.

Results for added input noise ($v_u$) are shown in
Figure~\ref{fig:noise}. As expected, the fit at the non-excited
frequencies is nearly perfect
(Figures~\ref{subfig:subj9_cc_r-1515_remnant_pc} and
\ref{subfig:subj9_ol_r-1515_remnant_pc}, bottom panels). Notably, the
added input noise also improves the fit at the excited frequencies
(Figures~\ref{subfig:subj9_cc_r-1515_remnant_pc} and
\ref{subfig:subj9_ol_r-1515_remnant_pc}, top panels).
\begin{figure}[htbp]
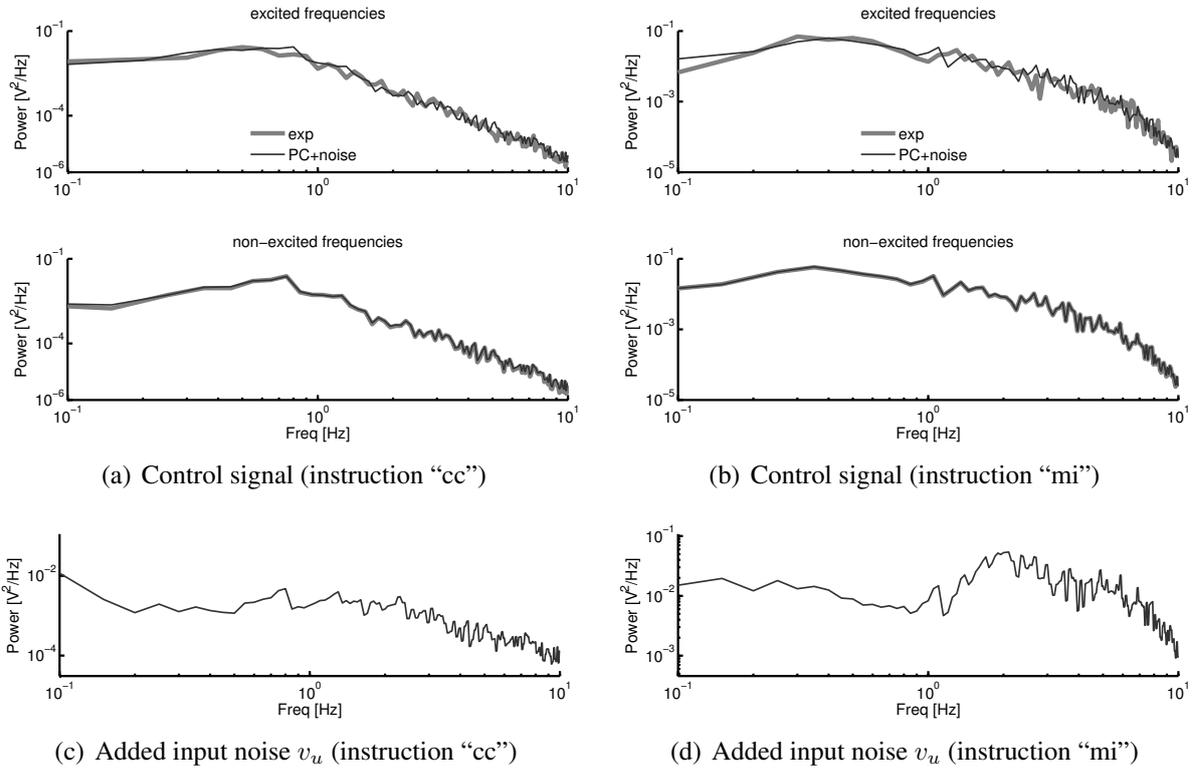

  \centering
  \SubFig{subj9_cc_r-1515_remnant_pc}{Control signal (instruction ``cc'')}{0.47}%
  \SubFig{subj9_ol_r-1515_remnant_pc}{Control signal (instruction ``mi'')}{0.47}\\
  \SubFig{subj9_cc_r-1515_noise}{Added input noise $v_u$ (instruction ``cc'')}{0.47} 
  \SubFig{subj9_ol_r-1515_noise}{Added input noise $v_u$ (instruction ``mi'')}{0.47} 
  \caption[Variability as a result of coloured input
  noise]{Variability as a result of coloured input noise added to a
    predictive continuous controller. The top graphs show the
    resulting fit to experimental data at excited and non-excited
    frequencies. The bottom graphs show the input noise added.}
  \label{fig:noise}
\end{figure}

The spectra of the input noise $v_u$ are shown in
Figures~\ref{subfig:subj9_cc_r-1515_noise}
and~\ref{subfig:subj9_ol_r-1515_noise}. It can be observed that the
noise spectra are dependent on the instructions given (``cc'' or
``mi''), with no obvious physiological basis to explain this
difference.

\subsubsection{Variability by intermittency}
\label{sec:vari-interm}

As an alternative explanation, a noise-free event driven intermittent
controller is considered (cf. Figure~\ref{fig:MIC_arch}). The same
design parameters as for the PC are used, with the time-delay set to a
minimal value of $\Delta^{min}=0.1$sec and a corresponding minimal
intermittent interval, $\Delta^{\text{min}}_{ol}=0.1$sec. 

Variations in the loop-delay are now the result of the event
thresholds, cf. equation~\eqref{eq:ED_e}. In particular, we consider
the first two elements of the state prediction error $e_{hp}$,
corresponding to the velocity ($e_{hp}^{v}$) and position
($e_{hp}^{p}$) states, and define an ellipsoidal event detection
surface given by
\begin{equation}
  \label{eq:thresholds}
  \left(\frac{e_{hp}^{p}}{\thp}\right)^2 + \left(\frac{e_{hp}^{v}}{\thv}\right)^2>1
\end{equation}
where $\thp$ and $\thv$ are the thresholds associated with the
corresponding states.

To find the threshold values which resulted in simulation which best
approximates the experimental remnant, both thresholds were varied
between 0 (corresponding to clock-driven IC) and 3, and the threshold
combination that resulted in the best least-squares fit at all
frequencies (excited and non-excited) was selected as the optimum. The
resulting fit is shown in
Figures~\ref{subfig:subj9_cc_r-1515_remnant_ic} and
\ref{subfig:subj9_ol_r-1515_remnant_ic}. For both instructions, the
event driven IC can both, explain the remnant signal and improve the
fit at excited frequencies. 
\begin{figure}[htbp]
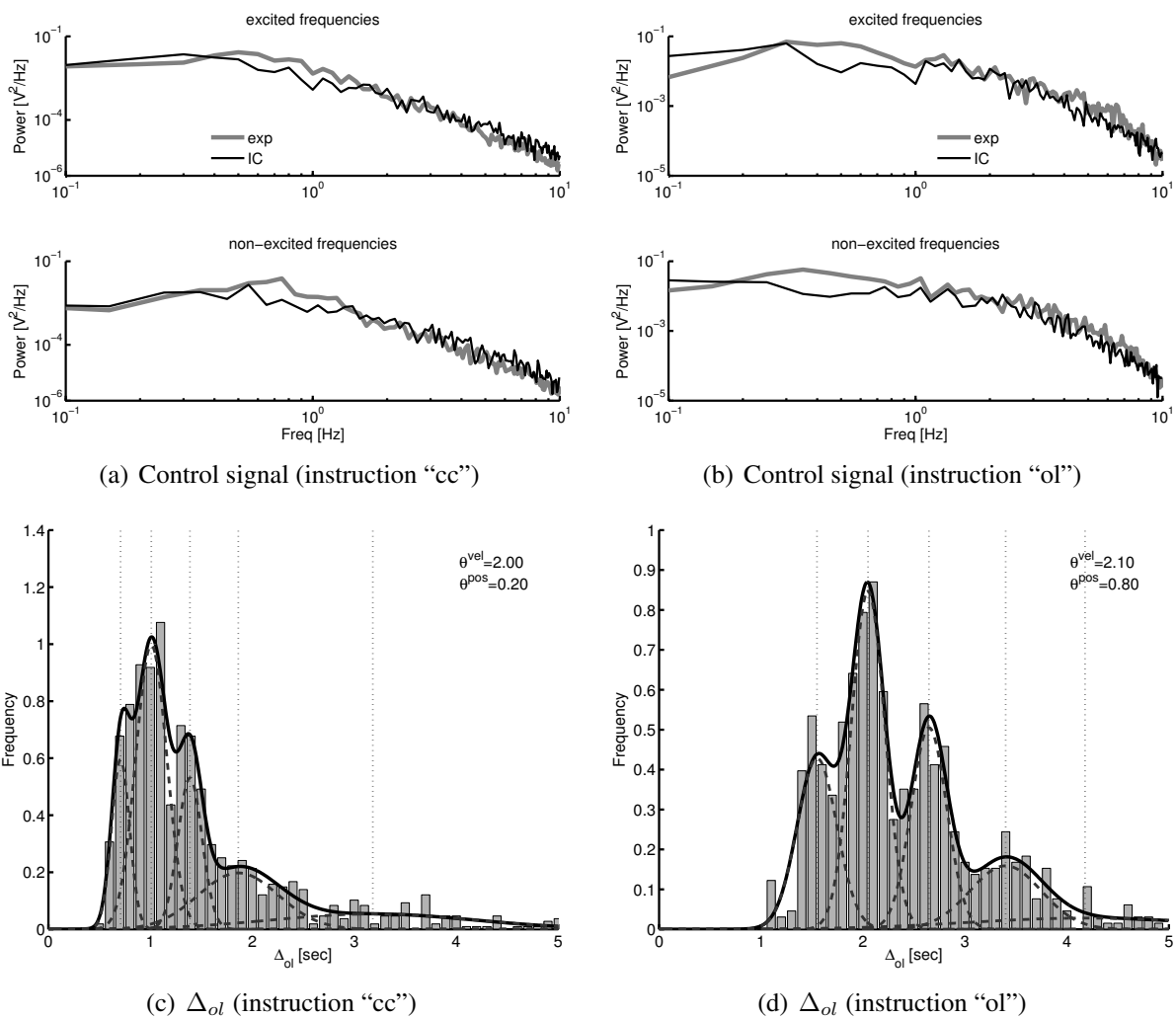

  \centering
  \SubFig{subj9_cc_r-1515_remnant_ic}{Control signal (instruction ``cc'')}{0.47}%
  \SubFig{subj9_ol_r-1515_remnant_ic}{Control signal (instruction ``ol'')}{0.47}\\
  \SubFig{subj9_cc_r-1515_delta_ol_comp}{$\Delta_{ol}$ (instruction ``cc'')}{0.47} 
  \SubFig{subj9_ol_r-1515_delta_ol_comp}{$\Delta_{ol}$ (instruction ``ol'')}{0.47} 
  \caption[Variability resulting from event-driven IC]{Variability
    resulting from event-driven IC. The top graphs show the resulting
    fit to experimental data at excited and non-excited
    frequencies. The bottom graphs show the distribution of
    intermittent intervals, together with the optimal threshold
    values.}
 \label{fig:result-individual-ic}
\end{figure}

The corresponding thresholds (for ``cc'': $\thv=2.0,\,\thp=0.2$, for
``mi'': $\thv=2.1,\,\thp=0.8$) reflect the control priorities for each
instruction: for ``cc'' position control should be prioritised,
resulting in a small value for the position threshold, while the
velocity is relatively unimportant. For ``mi'' the control
intervention should be minimal, which is associated with large
thresholds on both, velocity and position.

Figures~\ref{subfig:subj9_cc_r-1515_delta_ol_comp}
and~\ref{subfig:subj9_ol_r-1515_delta_ol_comp} show the distributions
of the open loop intervals for each condition, together with an
approximation by a series of weighted Gaussian distributions
\citep{McLPee00}. For position control (``cc''), open loop intervals
are clustered around a modal interval of approximately 1s, with all
$\Dol>0.5$s. For the minimal intervention condition (``mi''), the open
loop intervals are clustered around a modal interval of approximately
2s, and all $\Dol>1$s. This corresponds to the expected behaviour of
the human operator where more frequent updates of the intermittent
control trajectory are associated with the more demanding position
control instruction, while the instruction to minimise intervention
results in longer intermittent intervals. Thus the identified
thresholds not only result in IC models which approximate the response
at excited and non-excited frequency, but also reflect the underlying
control aims.

\subsection{Conclusion}
\label{sec:conclusion-variability}
The hypothesis that variability is the result of a continuous control
process with added noise (PC with added noise), requires that the
remnant is explained by a non-parametric input noise component. In
comparison, IC introduces variability as a result of a small number of
threshold parameters which are clearly related to underlying control
aims.

\section[Identification: the underlying continuous
system]{Identification of intermittent control: the underlying
  continuous system}
\label{sec:iic}
This section, together with Section \ref{sec:iic:intermittency},
addresses the question of how intermittency can be identified when
observing closed loop control. In this Section it is discussed how
intermittent control can masquerade as continuous control, and how the
underlying continuous system can be
identified. Section~\ref{sec:iic:intermittency} addresses the question
how intermittency can be detected in experimental data.



System identification provides one approach to hypothesis testing and
has been used by \citet{JohMagAke88} and \citet{Pet02} to test the
non-predictive hypothesis and by \citet{GawLorLak09} to test the
non-predictive and predictive hypotheses.
Given time domain data from an sustained control task which is excited
by an external disturbance signal, a two stage approach to controller
estimation can be used in order to perform the parameter estimation in
the frequency domain: firstly, the frequency response function is
estimated from measured data, and secondly, a parametric model is
fitted to the frequency response using non-linear optimisation
\citep{PinSch01,PinSchRol08}.  This approach has two advantages:
firstly, computationally expensive analysis of long time-domain data
sets can be reduced by estimation in the frequency domain, and
secondly, advantageous properties of a periodic input signal (as
advocated by \citet{PinSchRol08}) can be exploited.

In this section, first the derivation of the underlying frequency
responses for a predictive continuous time controller and for the
intermittent, clock-driven controller
(i.e. $\Delta_{ol}=\text{const}$) is discussed. The method is limited
to clock-driven IC since frequency analysis tools are readily
available only for this case \citep{Gaw09}. The two stage
identification procedure is then outlined, followed by example results
from a visual-manual control task.

The material in this section is partially based on \citet{GolMamLorGaw12}.

\subsection{Closed-loop frequency response}
\label{sec:FR}
As a prerequisite for system identification in the frequency domain,
this section looks at the frequency response of closed-loop system
corresponding to the underlying predictive continuous design method as
well as that of the intermittent controller with a fixed intermittent
interval \citep{Gaw09}.

\subsubsection{Predictive continuous control}
\label{sec:pred-cont-contr}
The system equations~\eqref{eq:sys} can be rewritten in transfer function form as
\begin{align}
  y_o(s) &= G(s) u(s)\notag\\ 
\text{with }
G(s) &= C \left [ sI - A \right ]^{-1} B\label{eq:Gs}
\end{align}
where $I$ is the $n\times n$ unit matrix and $s$ denotes the complex
Laplace operator.

Transforming equations~\eqref{eq:obs} and~\eqref{eq:pred} into the
Laplace domain, assuming that disturbances $v_u$, $v_u$ and $w$ are zero:
\begin{alignat}{2}
  \xo(s)  &= \left ( sI - A_o \right )^{-1}( B u(s) +
  Ly_o(s)) \;\; &\text{(Observer)} \label{eq:obs_s_p}\\
  {x_p}(s)  &= e^{A\Delta}\hat{x}(s) 
  + \left ( sI - A \right )^{-1} \left (I - e^{-(sI - A)\Delta} \right ) B u(s)
  \;\;&\text{(Predictor)} \label{eq:pred_s}\\
  u(s) &= - k e^{-s\Delta}{{x}_p}(s)  &\text{(Controller)}\label{eq:con_s}
\end{alignat}
where $I$ is the $n\times n$ unit matrix.

Equations (\ref{eq:obs_s_p})--(\ref{eq:con_s}) can be rewritten as:
\begin{align}
  u(s) &= -e^{-s\Delta}\left[H_y(s) y(s) + (H_1(s) + H_2(s)) u(s)\right]\\
  \text{where } H_y(s) &= ke^{A\Delta}\left ( sI - A_o \right )^{-1} L\\
  H_1(s) &= ke^{A\Delta}\left ( sI - A_o \right )^{-1} B\\
  \text{and } H_2(s) &= k\left ( sI - A \right )^{-1} 
  \left (I - e^{-(sI - A)\Delta} \right ) Be^{s\Delta}
\end{align}

It follows that the controller transfer function $H(s)$ is given by:
\begin{equation}
  \label{eq:H(s)PC}
  H(s) = \frac{H_y(s)}{1 + H_1(s) + H_2(s)}
\end{equation}
where
\begin{equation}
  \label{eq:H(s)}
  \frac{u(s)}{y_o(s)} = - e^{-s\Delta}H(s)
\end{equation}
With equations~\eqref{eq:Gs} and~\eqref{eq:H(s)}, the system loop-gain $L(s)$ and closed-loop transfer function
$T$ are given by:
\begin{align}
  L(s) &= e^{-s\Delta}G(s)H(s)\label{eq:L(s)}\\
  T(s) &= \frac{u(s)}{d(s)}=\frac{L(s)}{1+L(s)}\label{eq:T(s)L}
\end{align}
Equation~\eqref{eq:T(s)L} gives a parametrised expression relating $u(s)$ and $d(s)$.

\subsubsection{Intermittent Control}
\label{sec:FRF-ic}
The sampling operation in Figure~\ref{fig:MIC_arch} makes it harder to
derive a (continuous-time) frequency response and so the details are
omitted here. For the case were the intermittent interval is assumed
to be constant, the basic result derived by \citet{Gaw09} apply and can
be encapsulated as the following theorem\footnote{This is a simplified
  version of \cite[Theorem 1]{Gaw09} for the special case considered in this
  Section.}:

\subparagraph{Theorem} \emph{ The continuous-time
  system~\eqref{eq:sys} controlled by an intermittent controller
  with generalised hold gives a closed-loop system where the Fourier
  transform $\UU$ of the control signal $u(t)$ is given in terms of
  the Fourier transform $\Xd$ by }
\begin{equation}
  \UU   = \FF \samp{\Xd} \label{eq:UUz}\\
\end{equation}
\emph{where}
\begin{align}
  \FF &= \HH \Senz \label{eq:FF}\\
  \HH &= \frac{1}{\Delta_{ol}}\kk\left [\jw I - A_c \right
        ]^{-1} \left [ I - e^{-(\jw I - A_c)\Delta_{ol}} \right
        ] \label{eq:H}\\
  \Senz &=  [I + \GGz]^{-1}\label{eq:Sz}\\
  \GGz &= \left [e^{\jw}I - A_x \right ]^{-1}
         B_x \label{eq:G_z}\\
  \Xd &= \GG\DD \\
  \GG &= [\jw I - A]^{-1}B\label{eq:G}
\end{align}
\emph{The sampling operator is defined as}
\begin{equation}
  \label{eq:s-notation}
  \samp{\Xd} = \sum_{k=-\infty}^{\infty}  \FRf{\X^d}
\end{equation}
\emph{
  where the \emph{intermittent sampling-frequency} is given by
  $\omega_{ol} = 2\pi/\Delta_{ol}$.
} 

As discussed in \citet{Gaw09}, the presence of the sampling operator
$\samp{\Xd}$ means that the interpretation of $\FF$ is not quite the
same as that of the closed loop transfer function $T(s)$ of
(\ref{eq:T(s)L}), as the sample process generates an infinite number
of frequencies which can lead to aliasing. As shown in \citet{Gaw09},
the (bandwidth limited) observer acts as an anti-aliasing filter,
which limits the effect of $\samp{\Xd}$ to higher frequencies and
makes $\FF$ a valid approximation of $\UU$.
$\FF$ will therefore be treated as equivalent to
$T(j\omega)$ in the rest of this Section.

\subsection{System identification} 
\label{sec:ident}

The aim of the identification procedure is to derive an estimate for
the closed-loop transfer function of the system.
Our approach follows the two stage procedure of \citet{PinSch01} and
\citet{PinSchRol08}.  In the first step, the frequency response
transfer function is estimated based on measured input--output data,
resulting in a non-parametric estimate. In the second step, a
parametric model of the system is fitted to the estimated frequency
response using an optimisation procedure.

\subsubsection{System setup}
\label{sec:system-setup}

To illustrate the approach, we consider the visual-manual control task
described in Section~\ref{sec:experimental-setup}, where the subject
is asked to sustain control of an unstable 2nd order load using a
joystick, with the instruction to keep the load as close to the centre
as possible (``cc''). 

\subsubsection{Non-parametric estimation}
\label{sec:non-param-estim}

In the first step, a non-parametric estimate of the closed loop
frequency response function (FRF) is derived, based on observed
input--output data. The system was excited by a multi-sine disturbance
signal (equation~\eqref{eq:multisine}). The output $u(t)$ of a linear
system which is excited by $d(t)$ then only contains information at
the same discrete frequencies $\omega_k$ as the input signal. If the
system is non-linear or noise is added, the output will contain a
remnant component at non-excited frequencies as discussed in
Section~\ref{sec:variability}.  several periods was used.

The time domain signals $d(t)$ and $u(t)$ over one period $T_0$ of the
excitation signal were transformed into the frequency domain. If the
input signal has been applied over $N_p$ periods, then the
frequency-domain data for the $l$th period can be denoted as
$d^{[l]}(j\omega_k)$ and $u^{[l]}(j\omega_k)$, respectively, and the
FRF can be estimated as
\begin{equation}
  \label{eq:5}
  \hat{T}^{[l]}(j\omega_k)=\frac{u^{[l]}(j\omega_k)}{d^{[l]}(j\omega_k)},
  \qquad k=1,2,\dots,N_f
\end{equation}
where $N_f$ denotes the number of frequency components in the
excitation signal. An estimate of the FRF over all $N_p$ periods is
obtained by averaging,
\begin{equation}
  \label{eq:6}
  \hat{T}(j\omega_k)=\frac{1}{N_p}\sum_{l=1}^{N_p}\hat{T}^{[l]}(j\omega_k),
  \qquad k=1,2,\dots,N_f
\end{equation}
This approach ensures that only the periodic (deterministic) features
related to the disturbance signal are used in the identification, and
that the identification is robust with respect to remnant components.

\subsubsection{Parametric optimisation}
\label{sec:param-optim}

In the second stage of the identification procedure, a parametric
description, $\tilde{T}(j\omega_k,\theta)$, is fitted to the estimated
FRF of equation~(\ref{eq:6}). The parametric FRF approximates the
closed loop transfer function (equation~(\ref{eq:T(s)L})) which
depends in the case of predictive control, on the loop transfer
function $L(j\omega_k,\theta)$, equation~(\ref{eq:L(s)}), parametrised
by the vector $\theta$, while for the intermittent controller this is
approximated by $F(j\omega,\theta)$, equation~(\ref{eq:UUz}),
\begin{equation}
  \label{eq:12}
  \tilde{T}(j\omega_k,\theta) =
  \begin{cases}
    \frac{L(j\omega_k,\theta)}{1+L(j\omega_k,\theta)}
    & \text{for PC}\\
    \FFk &\text{for IC}
  \end{cases}
\end{equation}

We use an indirect approach to parametrise the controller, where the
controller and observer gains are derived from optimised design
parameters using the standard LQR approach of
equation~\eqref{eq:LQ}. This allows the specification of boundaries
for the design parameters which guarantee a nominally stable closed
loop system. As described in Section~\ref{sec:condes}, the feedback
gain vector $k$ can then be obtained by choosing the elements of the
matrices $Q_c$ and $R_c$ in~\eqref{eq:LQ}, and nominal stability can
be guaranteed if these matrices are positive definite. As the system
model is second order, we choose to parametrise the design using two
positive scalars, $q_v$ and $q_p$,
\begin{equation}
  \label{eq:9}
   R_c=1 \qquad
   Q_c=\left[\begin{array}{c c} q_v & 0\\0 & q_p \end{array}\right],
\text{ with } q_v,\,q_p >0
\end{equation}
related to relative weightings of the velocity ($q_v$) and position
($q_p$) states.

The observer gain vector $L$ is obtained by applying the same approach
to the dual system $[A^T,C^T,B^T]$. It was found that the results
are relatively insensitive to observer properties which was therefore
parametrised by a single positive variable, $q_o$,
\begin{equation}
  \label{eq:10}
  R_o=1 \qquad
  Q_o= q_o\, BB^T 
  \text{ with } q_o>0 
\end{equation}
where $R_o$ and $Q_o$ correspond to $R_c$ and $Q_c$ in equation~(\ref{eq:LQ})
for the dual system.

The controller can then be fully specified by the positive parameter vector
$\theta=[q_v, q_p, q_o,\Delta]$ (augmented by $\Delta_{ol}$ for
intermittent control).

The optimisation criterion $J$ is defined as the mean squared
difference between the estimated FRF and its parametric fit
\begin{equation}
  \label{eq:8}
  J(\theta)=\frac{1}{N_f}
   \sum_{k=1}^{N_f} \left[\hat{T}(j\omega_k)-\tilde{T}(j\omega_k,\theta)\right]^2
\end{equation}
This criterion favours lower frequency data since $|T(\jw)|$ tends to
be larger in this range.

The parameter vector is separated into two parts, time delay
parameters,
\begin{equation}
  \label{eq:14}
  \theta_{\Delta}=
  \begin{cases}
      [\Delta] & \text{for PC}\\ 
      [\Delta,\Delta_{ol}] & \text{for IC}
  \end{cases}
\end{equation}
and controller design parameters
\begin{equation}
    \theta_{c}=[q_v,q_p,q_o]
\end{equation}
such that $\theta = \left[\theta_{\Delta}, \theta_c\right]$.  The time
delay parameters are varied over a predefined range, with the
restriction that $\Delta_{ol}> \Delta$ for IC. For each given set of
time delay parameters, a corresponding set of optimal controller
design parameters $\theta^*_{c}$ is found which solves the constrained
optimisation problem
\begin{equation}
  \label{eq:11}
  \theta_{c}^*=\arg \min_{\theta_c}J([\theta_\Delta,\theta_c]),\qquad \theta_c>0 
\end{equation}
which was solved using the SQP algorithm \citep{NocWri06}, (MATLAB
Optimization Toolbox, Mathworks, USA).

The optimal cost function for each set of time-delay parameters,
$J^*(\theta_\Delta)$, was calculated, and the overall optimum, $J^*$
determined. For analysis, the time-delay parameters corresponding to
the optimal cost are determined, with $\Delta$ and $\Delta_{ol}$
combined for the IC to give the effective time-delay,
\begin{equation}
  \label{eq:Delta-effective}
  \Delta_e= \Delta+0.5\Delta_{ol} 
\end{equation}

\subsection{Illustrative example}
\label{sec:illustrative-results}
\begin{figure}[htbp]
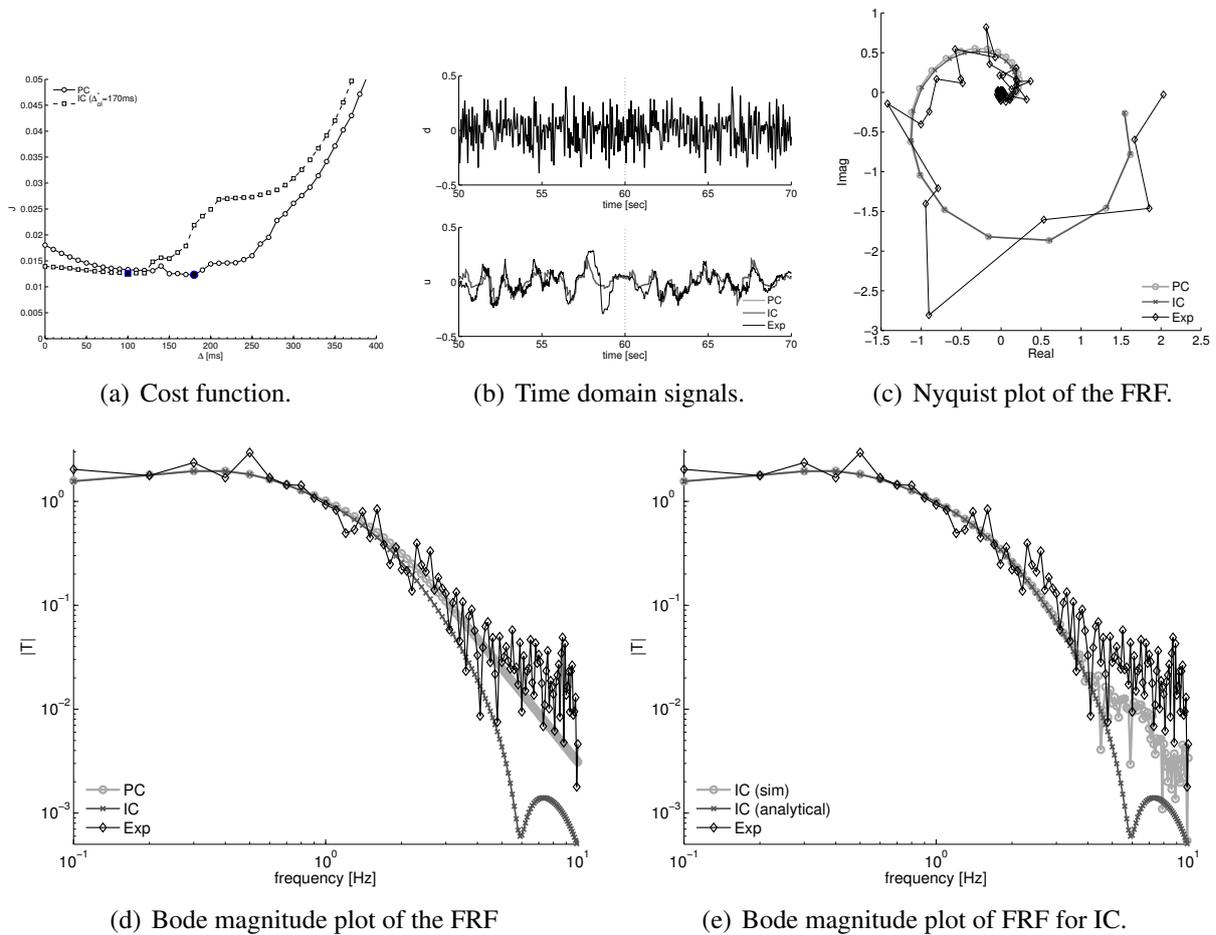

  \centering 
  \SubFig{exp-J-7-cc}{Cost function.}{0.31}%
  \SubFig{time-exp-s7}{Time domain signals.}{0.31}%
  \SubFig{nyq-exp-s7}{Nyquist plot of the FRF.}{0.31}%
  \SubFig{bode-exp-s7}{Bode magnitude plot of the FRF}{0.47}%
  \SubFig{bode-ic-exp-s7}{Bode magnitude plot of FRF for IC.}{0.47}%
  \caption[Illustrative experimental results]{Illustrative experimental results. (a) shows the optimisation
    cost (equation~\eqref{eq:8}) as a function of the time delay for
    predictive continuous control (PC) and intermittent control (IC),
    together with the value of the intermittent interval corresponding
    to the smallest cost ($\Delta^*_{ol}$). (b)-(d) show comparisons
    between predictive continuous control (PC), intermittent control
    (IC) and the experimental data (Exp). Plots for PC and IC in (c)
    and (d) are derived analytically (equation~\eqref{eq:12}). (e)
    compares the analytical FRF for the IC with the FRF derived from
    time-domain simulation data.}
  \label{fig:exp-s7}
\end{figure}

Results from identifying the experimental data from one subject are
used to illustrate the approach. 

An extract of the time domain data are shown in
Figure~\ref{subfig:time-exp-s7}. The top plot shows the multi-sine
disturbance input over two 10sec periods, and the bottom plot depicts
the corresponding measured control signal response (thin dark
line). From this response, the experimental FRF was estimated in
stage 1 of the identification (dark solid lines in Figures~\ref{subfig:nyq-exp-s7}--\ref{subfig:bode-ic-exp-s7}).

Stage 2 of the procedure aimed to find the controller design
parameters which resulted in the best fit to the experimental FRF. The
corresponding cost functions for the predictive continuous and for the
intermittent controller are shown in Figure~\ref{subfig:exp-J-7-cc},
with the minima indicated by solid markers. The estimated FRF
(equation~\eqref{eq:6}) and their parametric fits
(equation~\eqref{eq:12}) are shown in
Figure~\ref{subfig:nyq-exp-s7}. It is clear that both the PC and IC
are able to fit the experimental FRF equally well. The resulting
controller parameters (summarised in table~\ref{tab:ctr-parameters})
are very similar for both control architectures. This is confirmed by
time-domain simulations using the estimated controllers
(Figure~\ref{subfig:time-exp-s7}) where the PC and IC responses are difficult to
distinguish.
\begin{table}[htbp]
  \centering
  \begin{tabular}{l  c c}
        & \textbf{PC} & \textbf{IC} \\\hline
    $q_p$ & 0.99 & 1.07  \\
    $q_v$ & 0.00 &  0.00 \\
    $q_o$ & 258.83 & 226.30 \\
    $\Delta$ & 180ms & 95ms \\
    $\Delta_{ol}$ & -- & 170ms \\
    $\Delta_{e}$ & -- & 180ms \\\hline
  \end{tabular}
  \caption{Estimated controller design parameters}
  \label{tab:ctr-parameters}
\end{table}

Although the Nyquist plot of Figure~\ref{subfig:nyq-exp-s7} suggests
that the PC and IC responses are virtually identical, further analysis
shows that this is only the case at lower frequencies (at which most
of the signal power lies). The Bode plot of the frequency response
(Figure~\ref{subfig:bode-exp-s7}) shows that the PC and IC are
indistinguishable only for frequencies up to around 2-3Hz. This is
also the frequency range up to which the PC and IC provide a good
approximation to the experimental FRF.

The controller frequency responses shown in
Figures~\ref{subfig:nyq-exp-s7} and~\ref{subfig:bode-exp-s7} are based
on the analytically derived expressions. For the IC, the sampling
operator means that the theoretical response is only a valid
approximation at lower frequencies, with aliasing evident at higher
frequencies. A comparison of the analytical response with the response
derived from simulated time-domain data
(Figure~\ref{subfig:bode-ic-exp-s7}) shows that the simulated frequency
response of the IC at higher frequency is in fact closer to the experimental
data than the analytical response.

\subsection{Conclusions}
\label{sec:conclusions}

The results illustrate that continuous predictive and intermittent
controllers can be equally valid descriptions of a sustained control
task. Both approaches allow fitting the estimated non-parametric
frequency responses with comparable quality. This implies that
experimental data can be equally well explained using the PC and the
IC hypotheses. This result is particularly interesting as it means
that experimental results showing good fit for continuous predictive
control models, dating back to at least those of \citet{KleBarLev70},
do not rule out an intermittent explanation.  A theoretical
explanation for this result is given in \citep[Section
4.3]{GawLorLakGol11} where the masquerading property of intermittent
control is discussed: As shown there (and illustrated in the results
here), the frequency response of an intermittent controller and that
of the corresponding predictive controller are indistinguishable at
lower frequency and only diverge at higher frequencies where aliasing
occurs.  Thus, the responses of the predictive and the intermittent
controllers are difficult to distinguish, and therefore both
explanations appear to be equally valid.

\section[Identification: Detecting intermittency]{Identification of intermittent control: Detecting intermittency}
\label{sec:iic:intermittency}
%
As discussed in Section \ref{sec:ic} and by \citet{GawLorLakGol11},
the key feature distinguishing intermittent control from continuous control is the
open-loop interval $\Delta_{ol}$ of Equations (\ref{eq:Delta_i}) and (\ref{eq:PRP}). 
As noted in Sections \ref{sec:ic_time} and \ref{sec:ex:prp}, the
open-loop interval provides an explanation of the Psychological
Refractory Period (PRP) of \citet{Tel31} as discussed by \citet{Vin48}
to explain the human response to double stimuli. Thus
``intermittency'' and ``refractoriness'' are intimately related.
Within this interval, the control trajectory is open loop but is
continuously time varying according to the basis of the generalised
hold. The length of the intermittent interval gives a trade-off
between continuous control (zero intermittent interval) and
intermittency. Continuous control maximises the frequency bandwidth
and stability margins at the cost of reduced flexibility whereas
intermittent control provides time in the loop for optimisation and
selection \citep{KamGawGolLakLor13,LorKamLakGolGaw14} at the cost of
reduced frequency bandwidth and reduced stability margins. The
rationale for intermittent control is that it confers online
flexibility and adaptability. This rationale has caused many
investigators to consider whether intermittent control is an
appropriate paradigm for understanding biological motor control
\citep{Cra47a,Cra47b,Vin48,Bek62,NavSta68,NeiNei88,MiaWeiSte93a,HanBerDroSlo97,NeiNei05,LorLak02}.
However, even though intermittent control was first proposed in the
physiological literature in 1947, there has not been an adequate
methodology to discriminate intermittent from continuous control and
to identify key parameters such as the open loop interval
$\Delta_{ol}$.  Within the biological literature, four historic planks
of evidence (discontinuities, frequency constancy, coherence limit and
psychological refractory period) have provided evidence of
intermittency in human motor control \citep{LorKamLakGolGaw14}.
\begin{enumerate}
\item The existence of discontinuities within the control signal has
  been interpreted as sub-movements or serially planned control
  sequences
  \citep{NavSta68,Pou74,MiaWeiSte93a,MiaWeiSte86,HanBerDroSlo97,LorLak02},
\item Constancy in the modal rate of discontinuities, typically around
  2-3 per second, has been interpreted as evidence for a central
  process with a well defined timescale \citep{NavSta68,Pou74,LakLor06,LorGawLak06}.
\item The fact that coherence between unpredicted disturbance or
  set-point and control signal is limited to a low maximum frequency,
  typically of 1-2 Hz, below the mechanical bandwidth of the feedback
  loop has been interpreted as evidence of sampling
  \citep{NavSta68,LorLakGaw09,LorGolLakGaw10}.
\item The psychological refractory period has provided direct evidence
  of open loop intervals but only for discrete movements and serial
  reaction time (e.g. push button) tasks and has not been demonstrated
  for sustained sensori-motor control \citep{Vin48,PasJoh98,HarRotMia13}.
\end{enumerate}
Since these features can be reproduced by a continuous controller with
tuned parameters and filtered additive noise
\citep{LevBarKle69,LorKamGolGaw12}, this evidence is
circumstantial. Furthermore, there is no theoretical requirement for
regular sampling nor for discontinuities in control
trajectory. Indeed, as historically observed by \citet{Cra47a,Cra47b},
humans tend to smoothly join control trajectories following
practice. Therefore the key methodological problem is to demonstrate that
on-going control is sequentially open loop even when the control
trajectory is smooth and when frequency analysis shows no evidence of
regular sampling.

\subsection{Outline of method}
\begin{figure}[htbp]
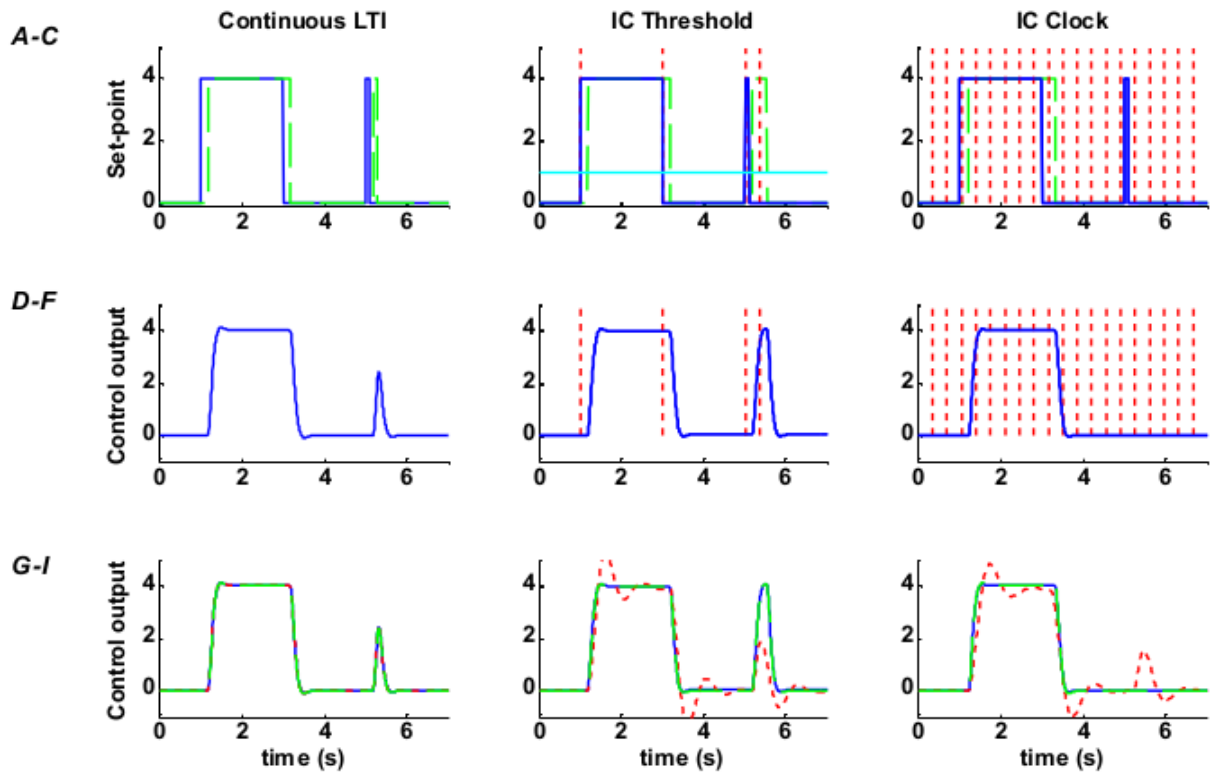

  \centering 
  \Fig{RSI_fig2}{1}
  \caption[Reconstruction of the set-point]{Reconstruction of the
    set-point. Example responses to set-point step-sequence A--C. Solid:
    two paired steps (long inter-step interval, short inter-step
    interval) are applied to the set-point of each of three models:
    continuous linear time invariant, threshold triggered intermittent
    control (unit threshold), and clock triggered (zero threshold)
    intermittent control (cols 1--3 respectively). Dashed: Set-point
    adjusted: time of each step follows preceding trigger by one model
    time delay ($\td$).  D--F. Solid: Control output (ue). Red
    vertical dashed: event trigger times.  G--I. Solid: Control output
    (ue). Dash-dotted: ARMA (LTI) fit to set-point (solid in A--C).
    Dashed: ARMA (LTI) fit to adjusted set-point (dashed in
    A--C). [\citep{LorKamGolGaw12} Copyright \textcopyright 2012 the
    authors. Used with permission.]}
\label{fig:RSI_fig2}
\end{figure}
\begin{figure}[htbp]
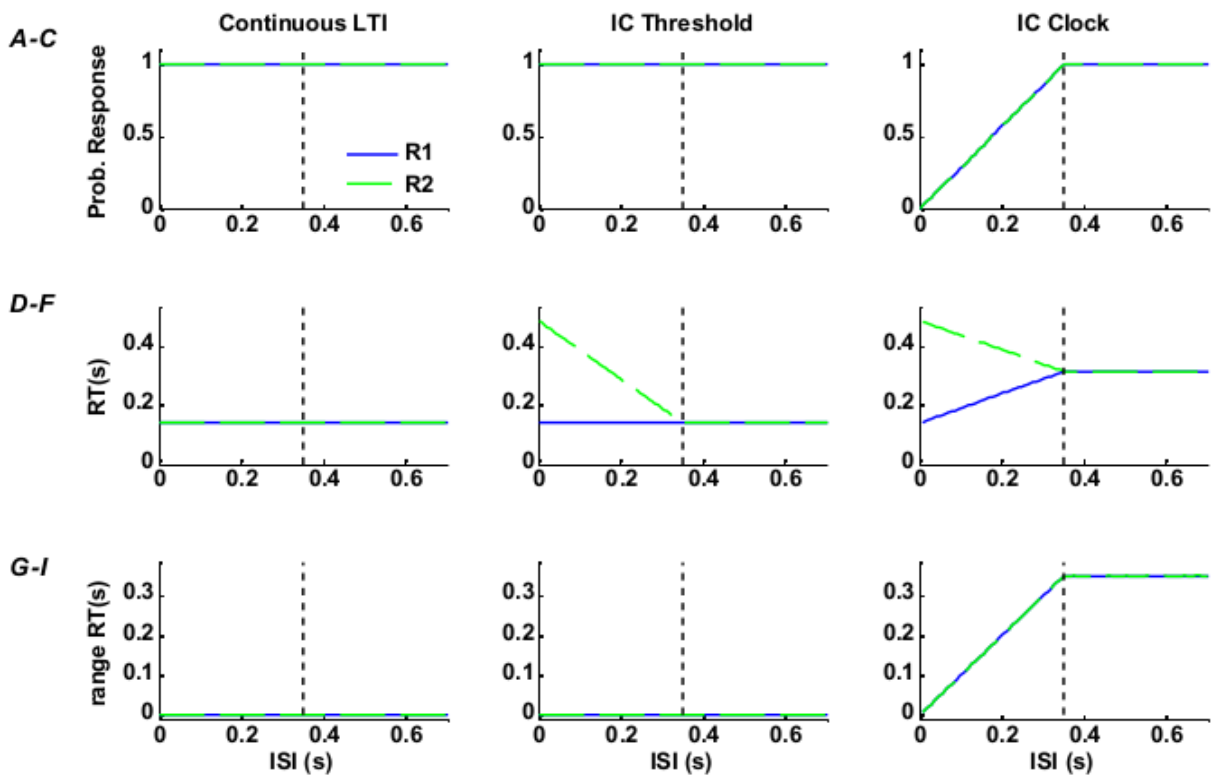

  \centering 
  \Fig{RSI_fig3}{1}
  \caption[Reconstruction of the set-point]{Reconstruction of the
    set-point. Predicted delays for varying inter-step interval
    For three models continuous linear time invariant, threshold
    triggered intermittent control and clock triggered intermittent
    control (cols 1--3 respectively) the following is shown as a
    function of inter-step interval (ISI):- A--C. The predicted
    probability of response D--F. The mean response delay G--I.  The
    range of response delays Response 1 and 2 (R1, R2) are solid and
    dashed respectively.  For these calculations the open loop
    interval ($\Delta_{ol}$) is 0.35s (vertical dashed line) and
    feedback time-delay (td) is 0.14s. [\citep{LorKamGolGaw12} Copyright \textcopyright 2012 the
    authors. Used with permission.]}
\label{fig:RSI_fig3}
\end{figure}
\begin{figure}[htbp]
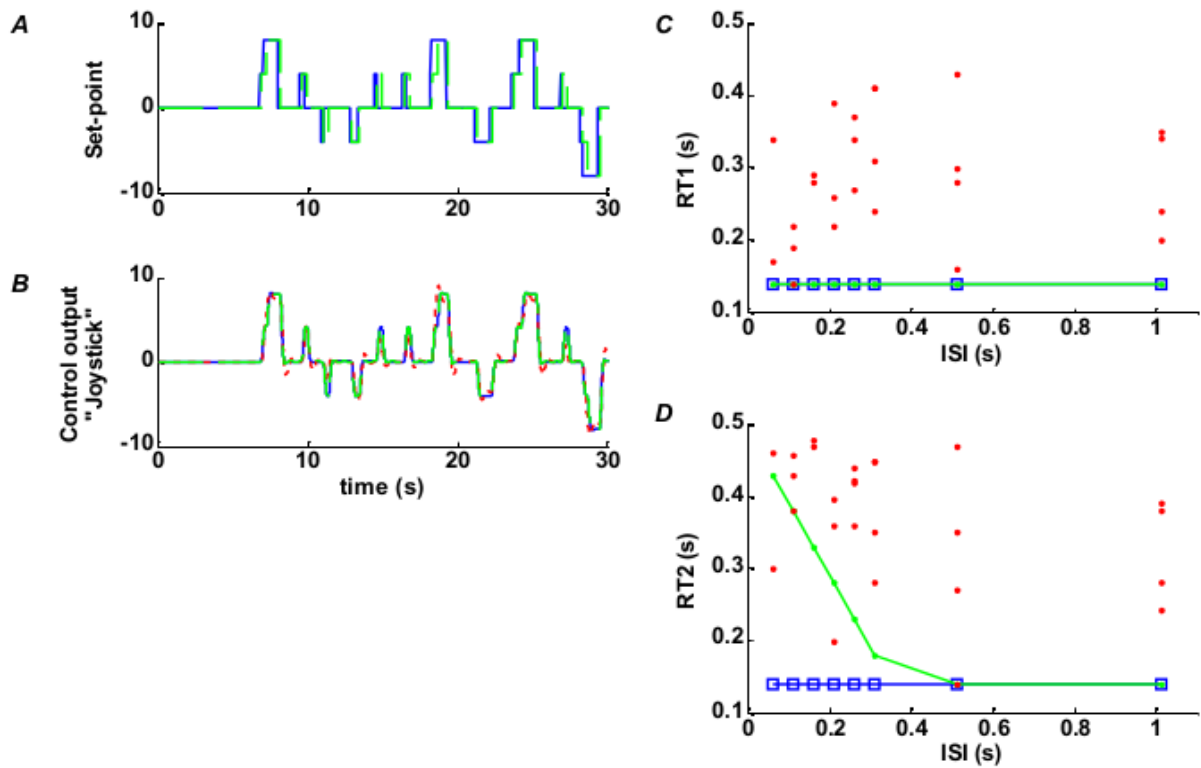

  \centering 
  \Fig{RSI_fig4}{1}
  \caption[Reconstruction of the set-point]{Reconstruction of the
    set-point. Representative Stage 1 Analysis A. Solid:
    Set-point sequence containing 8 inter-step-interval pairs with
    random direction (first 30s). Dashed: adjusted set-point from step
    1 analysis. After a double unidirectional step, set-point returns
    to zero before next pair.  B. Solid: Control output
    (ue). Dash-dotted: ARMA (LTI) fit to set-point.  Dashed: ARMA
    (LTI) fit to adjusted set-point.  C, D. Response times (RT1, RT2)
    respectively from each of three models v Inter-step-interval
    (ISI). Joined square: Continuous LTI.  Joined dot: Threshold
    intermittent control. Isolated dot: Clock intermittent control.
    The system is zero order. The open loop interval ($\Delta_{ol}$) is 0.35s
    and feedback time-delay ($\td$) is 0.14s. [\citep{LorKamGolGaw12} Copyright \textcopyright 2012 the
    authors. Used with permission.]}
  \label{fig:RSI_fig4}
\end{figure}
%
Using the intermittent-equivalent setpoint of Sections
\ref{sec:equivalent_setpoint} and \ref{sec:ex:prp},
we summarise a method to distinguish intermittent from continuous
control \citep{LorKamGolGaw12}. The identification experiment uses a
specially designed paired-step set-point sequence. The corresponding
data analysis uses a conventional ARMA model to relate the
theoretically derived equivalent set-point (of Section
\ref{sec:equivalent_setpoint}) to the control signal. The method
sequentially and iteratively adjusts the timing of the steps of this
equivalent set-point to optimise the linear time invariant fit. The
method has been verified using realistic simulation data and was found
to robustly distinguish not only between continuous and intermittent
control but also between event-driven intermittent and clock-driven
intermittent control \citep{LorKamGolGaw12}. This identification
method is applicable for machine and biological applications. For
application to humans the set-point sequence should be unpredictable
in the timing and direction of steps.  This method proceeds in three
stages. Stages 1 and 2 are independent of model assumptions and
quantify refractoriness, the key feature discriminating intermittent
from continuous control.

\subsubsection{Stage 1: Reconstruction of the set-point}
With reference to Figure \ref{fig:RSI_fig2}, this stage takes the known
set-point and control output signals and reconstructs the set-point
step times to form that sequence with a linear-time invariant response
which best matches the control output. This is implemented as an
optimisation process in which the fit of a general linear time series
model (zero-delay ARMA) is maximised by adjusting the trial set of
step times. The practical algorithmic steps are stated by
\citet{LorKamGolGaw12}.  The output from stage 1 is an estimate of the
time delay for each step stimulus.

\subsubsection{Stage 2: Statistical analysis of delays:}
Delays are classified according to step (1 or 2, named
reaction-time\footnote{In the physiological literature, ``delay'' is
  synonymous with ``reaction time''.} 1 (RT1) and reaction-time 2
(RT2) respectively) and inter-step-interval (ISI). A significant
difference in delay, RT2 v. RT1, is \emph{not} explained by a
linear-time-invariant model. The reaction time properties, or
refractoriness, is quantified by:
\begin{enumerate}
\item the size of ISI for which RT2 $>$ RT1. This indicates the temporal
  separation required to eliminate interference between successive
  steps and
\item the difference in delay (RT2 $-$ RT1).
\end{enumerate}

\subsubsection{Stage 3: Model based interpretation: }
\begin{figure}[htbp]
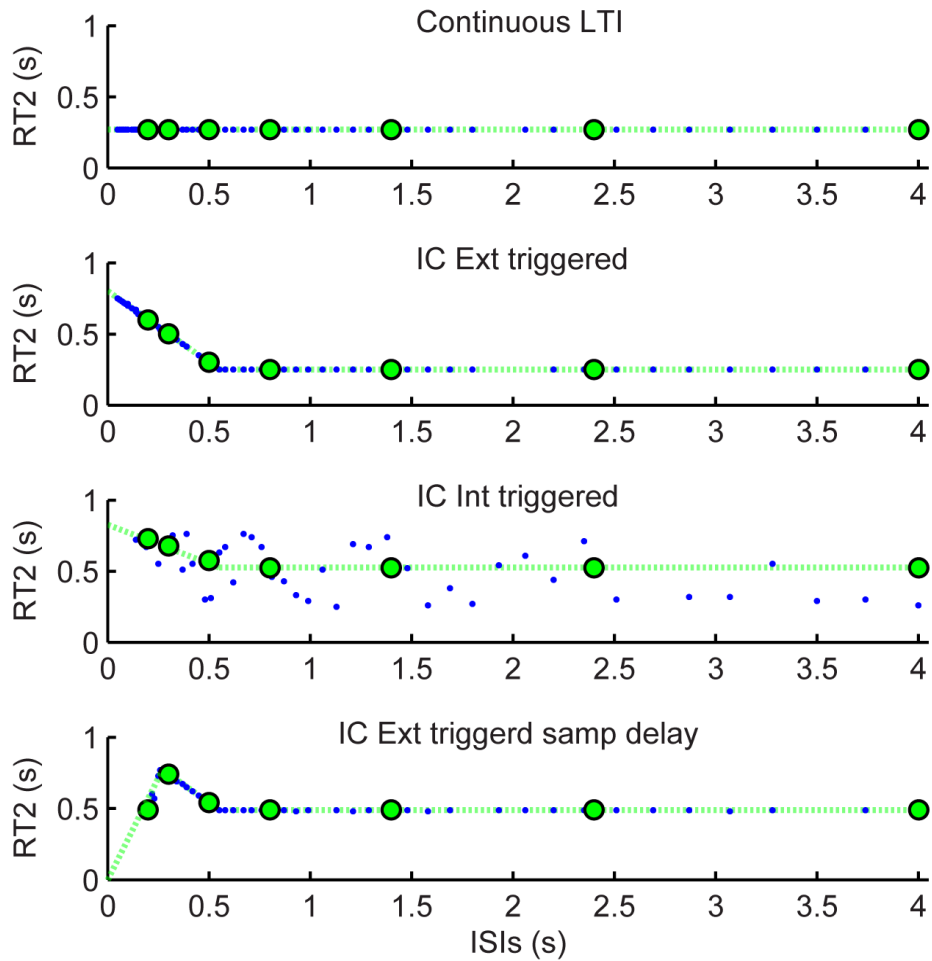

  \centering
  \Fig{FCN}{0.8}
  \caption[Model-based interpretation]{Model-based interpretation
    (stage 3). Parameter variants from the generalised IC model of
    figure~\ref{fig:MIC_arch} showing several possible relationships
    between RT2 and inter-step interval (ISI) indicative of serial
    ballistic (intermittent) and continuous control behaviour. The
    simulated system is zero order. The open-loop interval ($\Delta_{ol}$) is
    0.55s and feedback time delay ($\td$) is 0.25s. For four models: A)
    continuous LTI ($\Delta_{ol}=0$), B) externally-triggered intermittent
    control with a prediction error threshold, C) internally-triggered
    intermittent control (with zero prediction error threshold,
    triggered to saturation), and D) externally-trigger intermittent
    control supplemented with a sampling delay of 0.25s which is
    associated with the ISI at the maximum delay for RT2. The joined
    green circles represent the theoretical delays as a function of
    ISI which are confirmed by the model simulations (blue
    dots). [\citep{KamGawGolLakLor13} Copyright \textcopyright 2012 the
    authors. Used with permission.]}
  \label{fig:FCN}
\end{figure}
For controllers following the generalised continuous (Figure
\ref{fig:MSS_arch}) and intermittent  (Figure
\ref{fig:MIC_arch}) structures, the probability of a response occurring, the mean
delay and the range of delays can be predicted for each
inter-step-interval (Figure \ref{fig:RSI_fig3} and Appendix C of
\citet{LorKamGolGaw12}). For a continuous controller (Figure
\ref{fig:MSS_arch}) all delays equal the model delay
($\td$). Intermittent control is distinguished from continuous control
by increased delays for RT2 v. RT1 for inter-step-intervals less than
the open-loop interval ($\Delta_{ol}$). Clock (zero threshold)
triggered intermittent control is distinguished from threshold
triggered intermittent control by the range of delays for RT1 and RT2
and by the increased mean delay for inter-step intervals greater than
the open-loop interval ($\Delta_{ol}$) (Figure 3). If the results of
Stage 1--2 analysis conform to these patterns (Figure
\ref{fig:RSI_fig3}), the open-loop interval ($\Delta_{ol}$) can be
estimated. Simulation also shows that the sampling delay
($\Delta_{s}$) can be identified from the ISI at which the delay RT2
is maximal (Fig \ref{fig:FCN})
\citep{KamGawGolLakLor13,KamGawGolLor13}.  Following verification by
simulation \citep{LorKamGolGaw12}, the method has been applied to
human visually guided pursuit tracking and to whole body pursuit
tracking. In both cases control has been shown to be intermittent
rather than continuous.

\subsection{Refractoriness in sustained manual control} 
\begin{figure}[htbp]
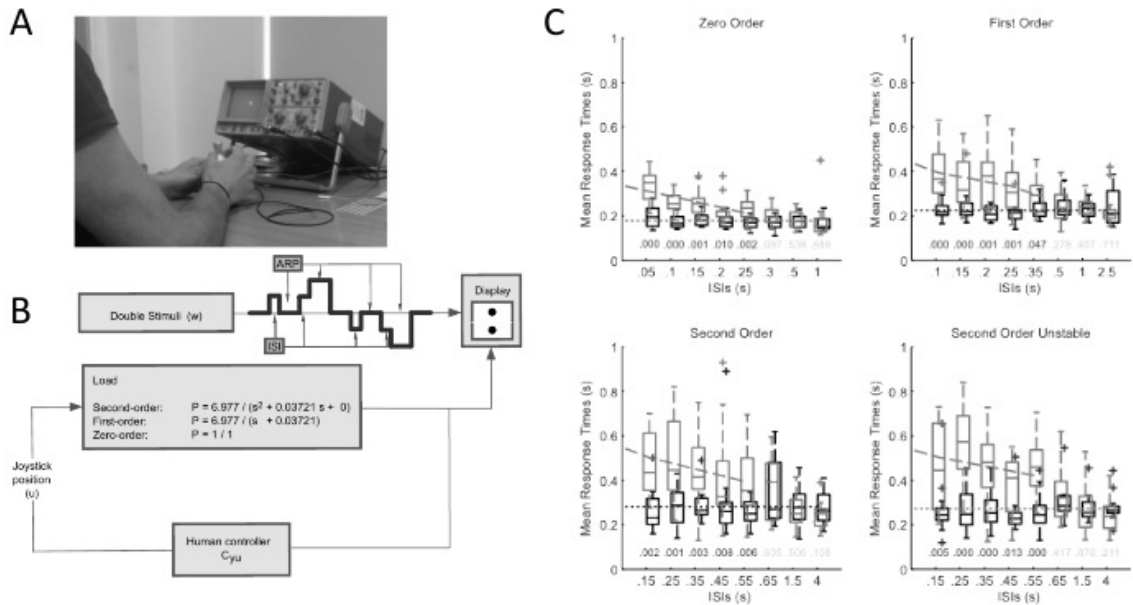

  \centering
  \Fig{ESSR_fig2}{1}
  \caption[Refractoriness in sustained manual control]{ Refractoriness
    in sustained manual control. A. Task setup. An oscilloscope showed
    real-time system output position as a small focused dot with
    negligible delay. Participants provided input to the system using
    a sensitive, uniaxial, contactless joystick. The system ran in
    Simulink Real-Time Windows Target within the MATLAB environment
    (Math-Works). B. Control system and experimental set
    up. Participants were provided with a tracking target in addition
    to system output.  The tracking signal was constructed from four
    possible patterns of step sequence (uni- and reversed directional
    step to the left or to the right). First and second stimuli are
    separated by an unpredictable inter step interval (ISI), patterns
    are separated by an unpredictable approximate recovery period
    (ARP). The participant was only aware of an unpredictable sequence
    of steps. C. Group results: The four panels: Zero Order, First
    Order, Second Order, Second Order Unstable show the inter
    participant mean first (RT1, black) and second (RT2, gray)
    response times against Inter step intervals (ISIs), p-values of
    the ANOVA's post hoc test are displayed above each ISI level
    (dark if significant, light if not). [ \citep{KamGawGolLor13}. Copyright \textcopyright 2013 the
    authors. Used with permission.]}
  \label{fig:ESSR_fig2}
\end{figure}
Using a uni-axial, sensitive, contactless joystick, participants were
asked to control four external systems (zero, first, second order
stable, second order unstable) using visual feedback to track as fast
and accurately as possible the target which changes position
discretely and unpredictably in time and direction (Figures
\ref{fig:ESSR_fig2}A\&B).
For the zero, first and second order systems, joystick position
determines system output position, velocity and acceleration
respectively. The unstable second order system had a time-constant
equivalent to a standing human. Since the zero order system has no
dynamics requiring ongoing control, step changes in target produce
discrete responses i.e. sharp responses clearly separated from periods
of no response. The first and second order systems require sustained
ongoing control of the system output position: thus the step stimuli
test responsiveness during ongoing control. The thirteen participants
showed evidence of substantial open loop interval (refractoriness)
which increased with system order (0.2 to 0.5 s,
\ref{fig:ESSR_fig2}~C). For first and second order systems,
participants showed evidence of a sampling delay (0.2-0.25 s,
\ref{fig:ESSR_fig2}~C). This evidence of refractoriness discriminates
against continuous control.

\subsection{Refractoriness in whole body control }
\begin{figure}[htbp]
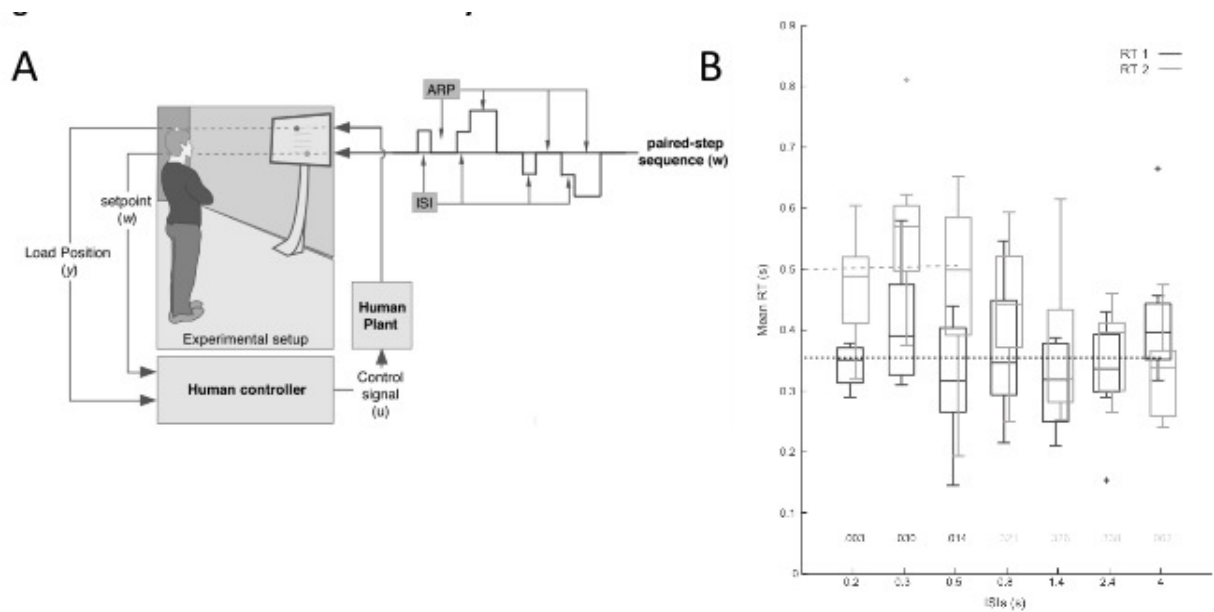

  \centering
  \Fig{ESSR_fig4}{1}
  \caption[Refractoriness in whole body control.]{Refractoriness in
    whole body control A. The participant receives visual feed-back of
    the Anterior-Posterior head position through a dot presented on an
    LCD screen mounted on a trolley. Without moving their feet,
    participants were asked to track the position of a second dot
    displayed on the screen. The four possible step sequence
    combinations (uni- and reversed-directional step up or down) of
    the pursuit target are illustrated by the solid line. First and
    second stimuli are separated by an inter-step interval (ISI). The
    participant experiences an unpredictable sequence of steps.
    B. Group results. Figure shows the inter-participant mean RT1
    (black) and RT2 (gray) against ISI combined across the eight
    participants. The P-values of the ANOVA's post-hoc test are
    display above each ISI level (black if <0.05, gray if not). The
    dotted line shows the mean RT1, the dashed line shows the
    regression linear fit between (interfered) RT2 and ISIs.
    [\citep{KamGawGolLakLor13}. Copyright \textcopyright 2013 the
    authors. Used with permission.]}
  \label{fig:ESSR_fig4}
\end{figure}
Control of the hand muscles may be more refined, specialised and more
intentional than control of the muscles serving the legs and
trunk \citep{KamGawGolLakLor13}. Using online visual feedback ($<$~100 ms delay) of a marker on
the head, participants were asked to track as fast and accurately as
possible a target which changes position discretely and unpredictably
in time and direction (Figure \ref{fig:ESSR_fig4}A). This required
head movements of 2cm along the anterior-posterior axis and while
participants were instructed not to move their feet, no other
constraints or strategies were requested. The eight participants
showed evidence of substantial open loop interval (refractoriness)
(~0.5 s) and a sampling delay (~0.3 s) (\ref{fig:ESSR_fig4}~B).
This result extends the evidence of intermittent control from
sustained manual control to integrated intentional control of the
whole body.

\section{Adaptive intermittent control}
\label{sec:aic}
The purpose of feedback control is to reduce uncertainty
\citep{Hor63,Jac74}.  Feedback control using fixed controllers can be
very successful in this role as long as sufficient bandwidth is
available and the system does not contain time delays or other
non-minimum phase elements
\citep{Hor63,GooGraSal01,SkoPos96}. However, when the system and
actuators do contain such elements, an adaptive controller can be used
to reduce uncertainly and thus, in time, improve controller
performance.
By its nature, intermittent control is a low-bandwidth controller and
so adaptation is particularly appropriate in this context. Conversely,
intermittent control frees computing resources that can be used for
this purpose.

As discussed in the textbooks by \citet{GooSin84} and by
\citet{AstWit89} adaptive control of engineering systems is well
established. Perhaps the simplest approach is to combine real-time
recursive parameter estimation with a simple controller design method
to give a so called ``self-tuning'' strategy
\citep{AstWit73,ClaGaw75a,ClaGaw79,ClaGaw81,AstWit89}. However, such
an approach ignores two things: the controller is based on initially
incorrect parameter estimates and the quality of the parameter
estimation is dependent on the controller properties. This can be
formalised using the concepts of \emph{caution} whereby the adaptive
controller takes account of parameter uncertainty and \emph{probing}
whereby the controller explicitly excites the system to improve
parameter estimation \citep{JacPat72,Bar81}. Adaptive controllers
which explicitly and jointly optimise controller performance and
parameter estimation have been called \emph{dual} controllers
\citep{Fel60,BarTse74}. Except in simple cases, for example that of
\citet{AstHel86}, the solution to the dual control problem is
impractically complex.

Since \citet{Wie65} developed the idea of cybernetics, there has been
a strong interest in applying both biologically-inspired and
engineering-inspired ideas to adaptive control in both humans and
machines; ideas arising from the biological and engineering
inspired fields have been combined in various ways.

One such thread is \emph{reinforcement learning}
\citep{SutBar98} which continues to be developed both theoretically
and through applications \citep{KhaHerLew12}. It can be argued that
that ``reinforcement learning is direct adaptive optimal control''
\citep{SutBarWil92}.
Artificial neural networks have been applied to engineering control
systems: see the survey of \citet{HunSbaZbiGaw92} and numerous textbooks
\citep{MilSutWer90,ZbiHun95,KalHunZbi97a}. Recent work is described by
\citet{VraLew09}.
Again, there are links between Artificial Neural Network (ANN) methods
such as back-propagation and engineering parameter estimation
\citep{GawSba90}.

Field robotics makes use of the concept of \emph{Simultaneous Location
  and Mapping} (SLAM) \citep{DurBai06,BaiDur06}. Roughly speaking,
location corresponds to state estimation and mapping to parameter
estimation and therefore concepts and techniques from SLAM are
appropriate to adaptive (intermittent) control. In particular, the
Extended Kalman Filter (EKF) and the Unscented Kalman Filter (UKF)
\citep{JulUhlDur00,Sch12} provide the basis for SLAM and hence adaptive
control.

In this section the simplest \emph{continuous-time} self-tuning approach
\citep{Gaw82b,Gaw87c,Gaw90} is used. As indicated in the Examples of
Sections \ref{sec:ex:balance} and \ref{sec:ex:reach}, this simple
approach has interesting behaviours; neverthless, it would be
interesting to investigate  more sophisticated
approaches based on, for example, the EKF and UKF.





\subsection{System Model}
\label{sec:aic:model}
Parameter estimation is much simplified if the system can be  transformed
into \emph{linear-in-the-parameters} (LIP) form; the resultant model
can be viewed as a \emph{non-minimal state-space} (NMSS) representation of the
system. The NMSS approach is given in discrete-time form by
\citet{YouBehWanCho87} and continuous-time form by \citet{TayChoYou98}.
Although a purely state-space approach to the NMSS representation is
possible \citep{GawWanYou07}, a polynomial approach is simpler and is
presented here.

The linear-time invariant system considered in this Chapter are given in
Laplace transform terms by:
\begin{equation}\label{eq:sys_lt}
  \ys = \frac{b(s)}{a(s)} \lb \us + \frac{b_\xi(s)}{a_\xi(s)}\zsu \rb
  + \frac{\ddd(s)}{a(s)a_\xi(s)}
\end{equation}
where $\ys$, $\us$, and $\zsu$ are the Laplace transformed system
output, control input and input disturbance. $\frac{b(s)}{a(s)}$ is
the transfer function relating $\ys$ and $\us$ and
$\frac{b_\xi(s)}{a_\xi(s)}$ provides a transfer function model of the
input disturbance. It is assumed that both transfer functions are
strictly proper. The overall system initial conditions are represented
by the polynomial $\ddd(s)$\footnote{Transfer function representations of
  continuous-time systems and initial conditions are discussed, for
  example, by \citet[Ch. 4]{GooGraSal01} }.
The polynomials $a(s)$, $b(s)$ and $\ddd(s)$ are of the form:
\begin{align}
  a(s) &= a_0s^n + a_1 s^{n-1} + \dots + a_n\label{eq:a(s)}\\
  b(s) &= b_1 s^{n-1} + \dots + b_n\label{eq:b(s)}\\
  \ddd(s) &= \ddd_1 s^{n-1} + \dots + \ddd_n\label{eq:dp(s)}\\
  a_\xi(s) &= \alpha_0s^{n_\xi} + \alpha_1 s^{n_\xi-1} + \dots + \alpha_{n_\xi}\label{eq:axi(s)}\\
  b_\xi(s) &= \beta_0s^{n_\xi} + \beta_1 s^{n_\xi-1} + \dots + \beta_{n_\xi}\label{eq:bxi(s)}
\end{align}

Finally, defining the Hurwitz polynomial $c(s)$ as:
\begin{align}
  c(s) &= c_0s^{N} + c_1 s^{N-1} + \dots + c_N  \label{eq:c(s)}\\
\text{where } N &= n + n_\xi + 1 \label{eq:NN}
\end{align}
Equation (\ref{eq:sys_lt}) may be rewritten as:
\begin{equation}\label{eq:sys_lt_c}
  \frac{a(s)a_\xi(s)}{c(s)}\ys = 
    \frac{b(s)a_\xi(s)}{c(s)} \us 
    + \frac{b(s)b_\xi(s)}{c(s)}\zsu 
    + \frac{\ddd(s)}{c(s)}
\end{equation}

For the purposes of this Chapter, the polynomials $a_\xi(s)$, $b_\xi(s)$ are
defined as:
\begin{align}
  a_\xi(s) &= s \label{eq:a_xi}\\
  b_\xi(s) &= 1 \label{eq:b_xi}
\end{align}
With this choice, Equation (\ref{eq:sys_lt_c}) simplifies to
\begin{equation}\label{eq:sys_lt_c_0}
  \frac{sa(s)}{c(s)}\ys = 
    \frac{sb(s)}{c(s)} \us 
    + \frac{b(s)}{c(s)}\zsu 
    + \frac{d(s)}{c(s)}
\end{equation}

In the special case that the input disturbance is a jump to a constant
value $d_\xi$ at time $t=0^+$, then this can be modelled using Equations
(\ref{eq:a_xi}) and (\ref{eq:b_xi}) and
\begin{align}
  \xi(t) &= d_\xi \delta(t)\\
  \text{and } \zsu &= d_\xi   \label{eq:xi_u}
\end{align}
where $\delta(t-t_k$ is the Dirac delta function.

Equation (\ref{eq:sys_lt_c}) then becomes:
\begin{align}
  \frac{sa(s)}{c(s)}\ys &= 
    \frac{sb(s)}{c(s)} \us 
    + \frac{d(s)}{c(s)} \label{eq:sys_lt_c_1}\\
    \text{where } d(s) &= \ddd(s) + d_\xi b(s)
\end{align}


Equation (\ref{eq:sys_lt_c_0}) can be rewritten in non-minimal
state-space form as:
\begin{align}
  \ddt{}\phi_y(t) &= A_s \phi_y(t) - B_s y(t) \label{eq:svf_y}\\
  \ddt{}\phi_u(t) &= A_s \phi_u(t) + B_s u(t) \label{eq:svf_u}\\
  \ddt{}\phi_{ic}(t) &= A_s \phi_{ic}(t),\; 
  \phi_{ic}(0) = \phi_{ic0}  \label{eq:svf_ic}\\
  \text{where } A_s & = \companc{-c}\\
  \text{and } B_s &=  \begin{bmatrix}
    1\\0\\ \dots\\0
  \end{bmatrix} 
\end{align}

It follows that:
\begin{align}
  \eee(t) &= \theta^T\phi(t)\label{eq:ee}\\
  \text{where } \theta &=
  \begin{bmatrix}
    \ba \\ \bb \\ \bd
  \end{bmatrix} 
  \text{ and } \phi(t) =
  \begin{bmatrix}
    \phi_y \\ \phi_u \\ \phi_{ic}
  \end{bmatrix}\\
  \text{and }\ba &=
  \begin{bmatrix}
    a_0 & a_1 & \dots & a_n & \bz
  \end{bmatrix}^T\label{eq:ba}\\
  \bb &=
  \begin{bmatrix}
    0 & b_1 & \dots & b_n & \bz
  \end{bmatrix}^T\label{eq:bb}\\
  \bd &=
  \begin{bmatrix}
     0 & d_1 & \dots & d_n & \dots & d_{n_c}
  \end{bmatrix}^T\label{eq:bd}
\end{align}
\subsection{Continuous-time Parameter Estimation}
\label{sec:aic:cte}
As discussed by \citet{You81a}, \citet{Gaw82b,Gaw87c},
\citet{UnbRao87,UnbRao90} and \cite{GarWan08}, least-squares parameter estimation can 
performed in the continuous-time domain (as opposed to the more usual
discrete-time domain as described, for example, by \citet{Lju99}). A
brief outline of the method used in the following examples is given in
this section.
\begin{align}
  \eh(t) &= \thh^T\phi(t)\label{eq:e}\\
  J(\thh) &= \frac{1}{2}\int_0^t e^{\lambda (t-t^\prime)} \eh(t^\prime)^2 dt^\prime\notag \\
  &= \thh^T \bS(t) \thh\notag \\
  &= \thh^T \bS_{uu}(t) \thh 
  + \thh^T \bS_{uk}(t) \thk 
  + \thk^T \bS_{kk}(t) \thk \\
  \text{where } \bS(t) &= \int_0^t 
  e^{\lambda (t-t^\prime)}
  \phi(t^\prime) \phi^T(t^\prime)dt^\prime\label{eq:S}\\
\end{align}
and the symmetrical matrix $\bS(t)$ has been partitioned as:
\begin{equation}
  \label{eq:S_part}
  \bS(t) = 
  \begin{bmatrix}
    \bS_{uu}(t) & \bS_{uk}(t)\\
    \bS^T_{uk}(t) & \bS_{kk}(t)
  \end{bmatrix}
\end{equation}

Differentiating the cost function $J$ with respect to the 
vector of unknown parameters $\thh$ gives:
\begin{equation}
  \label{eq:dJ}
  \frac{dJ}{d\thh} =  \bS_{uu}(t) \thh + \bS_{uk}(t) \thk
\end{equation}
Setting the derivative to zero gives the optimal solution:
\begin{equation}
  \label{eq:J_opt}
  \thh(t) = -\bS_{uu}^{-1}(t)\bS_{uk}(t) \thk
\end{equation}

Differentiating $\bS$ (\ref{eq:S}) with respect to time gives
\begin{equation}
  \label{eq:dS}
  \frac{d\bS}{dt} + \lambda\bS(t) = \phi(t)\phi^T(t)
\end{equation}

\subsection{Intermittent Parameter Estimation}
\label{sec:aic:ipe}
The incremental information matrix $\St_i$ from the $i$th intermittent
interval is defined as
\begin{equation}\label{eq:St}
  \St_i = \int_{t_{i-1}}^{t_i}
\phi(t^\prime) \phi^T(t^\prime)dt^\prime
\end{equation}
Equation (\ref{eq:St}) may be implemented using the differential
equation (\ref{eq:dS}) with zero initial condition at time $t_{i-1}$.
The intermittent information matrix $\bS_i$ at  the $i$th intermittent
interval is defined as:
\begin{equation}
  \label{eq:S_ic}
  \bS_i =   \lambda_{ic} \bS_{i-1} + \St_i 
\end{equation}
Partitioning  $\bS_i$ as Equation (\ref{eq:S_part}) gives the
parameter estimate of Equation (\ref{eq:J_opt}).

If there is a disturbance characterised by Equations (\ref{eq:a_xi}),
(\ref{eq:b_xi}) and (\ref{eq:xi_u}), the parameters corresponding to
$d(s)$ jump when the disturbance jumps. As such a jump will give rise
to an event, a new set of $d$ parameters should be estimated; this is
achieved by adding a diagonal matrix to the elements of $\bS_i$
corresponding to $d(s)$.


\section{Examples: adaptive human balance}
\label{sec:ex:balance}
\begin{figure}[htbp]
  \centering
  \SubFig{bal_adapt_0_db_0_y}{Output $y$. No adaption $\Delta b_0 = 0$}{0.47}
  \SubFig{bal_adapt_0_db_0_PP}{Phase-plane. No adaption $\Delta b_0 = 0$}{0.47}\\
  \SubFig{bal_adapt_0_db_1_y}{Output $y$. No adaption $\Delta b_0 = 0.1$}{0.47}
  \SubFig{bal_adapt_0_db_1_PP}{Phase-plane. No adaption $\Delta b_0 = 0.1$}{0.47}\\
  \SubFig{bal_adapt_1_db_1_y}{Output $y$. Adaption $\Delta b_0 = 0.1$}{0.47}
  \SubFig{bal_adapt_1_db_1_PP}{Phase-plane. Adaption $\Delta b_0 = 0.1$}{0.47}
  \caption[Adaptive Balance control.]{Adaptive Balance control. (a)\&(b) Correct parameters, no
    adaptation. (c)\&(d) Incorrect parameters, no adaptation.
    (e)\&(f) Incorrect parameters, with adaptation -- the initial
    behaviour corresponds to (c)\&(d) and the final behaviour
    corresponds to  (a)\&(b). For clarity, lines are coloured grey for
  $t<60$ and black for $t\ge60$.}
  \label{fig:balance}
\end{figure}

\begin{figure}[htbp]
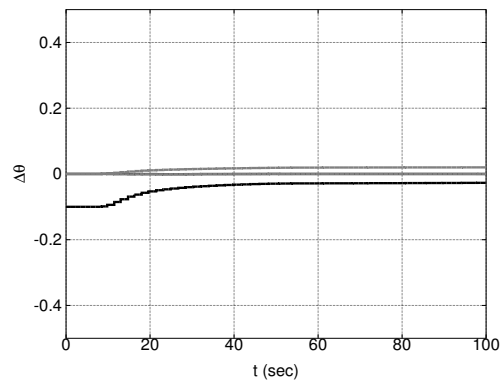

  \centering
  \Fig{bal_adapt_1_db_1_theta}{0.44}
  \caption[Adaptive Balance control: estimated Parameter.]  {Adaptive
    Balance control: estimated Parameter. The parameter estimate
    errors ($\hat{a}-a$ and $\hat{b}-b$) become smaller as time
    increases.}
  \label{fig:est_balance}
\end{figure}
As discussed by \citet{GawLorGolLak14}, it can be argued that the
human balance control system generates ballistic control trajectories
that attempt to place the unstable system at equilibrium; this leads
to homoclinic orbits \citep{HirSmaDev12}. However, such behaviour is
dependent on a good internal model. This section looks at the same
ballistic balance control system as that of \citet{GawLorGolLak14} but
in the context of parameter adaptation.

The controlled system is given by the transfer function:
\begin{equation}
  \label{eq:G(s)}
  G(s) = \frac{b}{s^2 + a}
\end{equation}
The actual system parameters are:
\begin{align}
  a &= -1 \label{eq:a}\\
  b &= 1.1 \label{eq:b}
\end{align}
The parameters $a$ and $b$ are estimated using the intermittent
parameter estimation method of Section \ref{sec:aic:ipe} with initial
values:
\begin{align}
  \hat{a} &= -1 \label{eq:a0}\\
  \hat{b} &= 1 \label{eq:b0}
\end{align}

Figures \ref{subfig:bal_adapt_0_db_0_y} and
\ref{subfig:bal_adapt_0_db_0_PP} show the non-adaptive controller with
correct parameters of Equations (\ref{eq:a}) and (\ref{eq:b}); the
behaviour approximates that of the ideal ballistic controller.

Figures \ref{subfig:bal_adapt_0_db_1_y} and
\ref{subfig:bal_adapt_0_db_1_PP} show the non-adaptive controller with
the incorrect parameters of Equations (\ref{eq:a0}) and (\ref{eq:b0});
the behaviour is now a limit cycle.

Figures \ref{subfig:bal_adapt_1_db_1_y} and
\ref{subfig:bal_adapt_1_db_1_PP} shows the adaptive controller with
the initial incorrect parameters of Equations (\ref{eq:a0}) and
(\ref{eq:b0}). Initially, the behaviour corresponds to that of Figures
\ref{subfig:bal_adapt_0_db_0_y} and \ref{subfig:bal_adapt_0_db_0_PP};
but after about 50sec the behaviour corresponds to that of Figures
\ref{subfig:bal_adapt_0_db_1_y} and \ref{subfig:bal_adapt_0_db_1_PP}.
The corresponding parameter estimate errors ($\hat{a}-a$ and
$\hat{b}-b$) are given in Figure \ref{fig:est_balance}.

\section{Examples: adaptive human reaching}
\label{sec:ex:reach}
\begin{figure}[htbp]
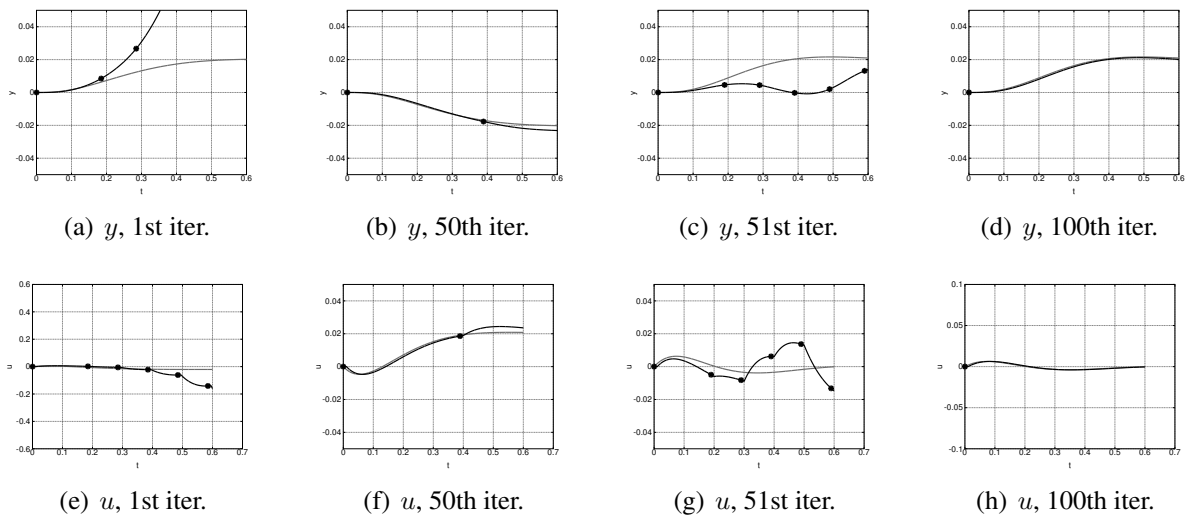

  \centering
  \SubFig{arm2_1_y}{$y$, 1st iter.}{0.22}
  \SubFig{arm2_50_y}{$y$, 50th iter.}{0.22}
  \SubFig{arm2_51_y}{$y$, 51st iter.}{0.22}
  \SubFig{arm2_100_y}{$y$, 100th iter.}{0.22}\\
  \SubFig{arm2_1_u}{$u$, 1st iter.}{0.22}
  \SubFig{arm2_50_u}{$u$, 50th iter.}{0.22}
  \SubFig{arm2_51_u}{$u$, 51st iter.}{0.22}
 \SubFig{arm2_100_u}{$u$, 100th iter.}{0.22}
  \caption[Reaching in a force-field: transverse position.]{Reaching
    in a force-field: transverse position. The additional transverse
    force field is applied throughout, but the initial parameters
    correspond to zero force field. The sample instants are denoted by
    the $\bullet$ symbol. The behaviour improves, and the intermittent
    interval increases, from iteration 1 to iteration 50.}
  \label{fig:iterative_yu_1}
\end{figure}


\begin{figure}[htbp]
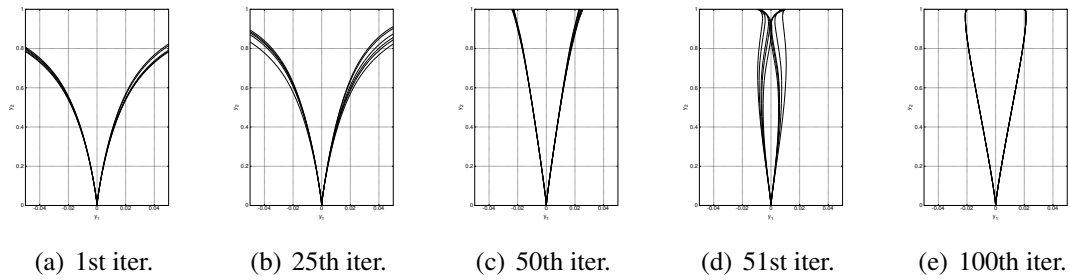

  \centering
  \SubFig{arm2_1_yy}{1st iter.}{0.15}
  \SubFig{arm2_20_yy}{25th iter.}{0.15}
  \SubFig{arm2_50_yy}{50th iter.}{0.15}
  \SubFig{arm2_51_yy}{51st iter.}{0.15}
  \SubFig{arm2_100_yy}{100th iter.}{0.15}
  \caption[Reaching in a force-field.]{Reaching in a force-field. The
    data from Figure \ref{fig:iterative_yu_1} are re-plotted against longitudinal
    position.}
  \label{fig:iterative_2}
\end{figure}

\begin{figure}[htbp]
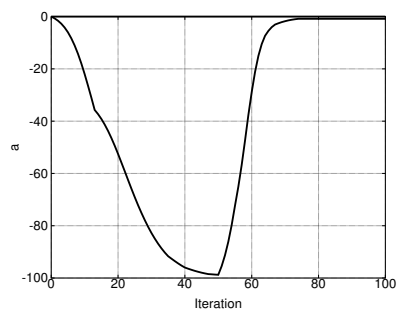

  \centering
  \SubFig{arm2_a}{$\hat{a}$}{0.35}  
  \caption[Reaching in a force-field: parameters.]{Reaching in a
    force-field: parameters. The transverse force field parameter is
    $a=-100$ for $0 \le \text{iteration} \le 50$ and $a=0$ for
    $\text{iteration} > 50$. The initial estimate is $\hat{a}=0$. }
  \label{fig:iterative_2_params}
\end{figure}
Repetitive reaching and pointing has been examined by a number of
authors including \citet{ShaMus94} (see also \citet{ShaMus12}),
\citet{BurTeeMar06} and \citet{TeeFraKaw10}. An iterative learning
control explanation of these results is given by
\citet{ZhoOetTanBurMar12}.

As discussed by \citet{BriThaAll06}, ``iterative learning control (ILC)
is based on the notion that the performance of a system that executes
the same task multiple times can be improved by learning from previous
executions (trials, iterations, passes).''. A number of survey papers
are available, including those of \citet{BriThaAll06},
\citet{AhnCheMoo07} and \citet{WanFurDoy09}, as well as a book
\citet{XuTan03}.
ILC is closely related to repetitive control \citep{CuiZhaZhu04} and
to multi-pass control \citep{Edw74,Owe77}.

This example shows how the intermittent parameter estimation method of
Section \ref{sec:aic:ipe} can be used in the context of iterative
learning.

The system similar to that described in Section IV, case 3 of the paper by
\citet{ZhoOetTanBurMar12} was used. 
The lateral motion of the arm in the force field was described the
transfer function of Equation (\ref{eq:G(s)}) with
\begin{align}
  a &=
  \begin{cases}
    -100 & i \le 50\\
    0 & i > 50
  \end{cases}
 \label{eq:a_arm}\\
  b &= 100 \label{eq:b_arm}
\end{align}
The lateral target position was randomly set to $\pm 0.01$m.

The parameters $a$ and $b$ are estimated using the intermittent
parameter estimation method of Section \ref{sec:aic:ipe} with initial
values:
\begin{align}
  \hat{a} &= 0 \label{eq:a0_arm}\\
  \hat{b} &= 200 \label{eq:b0_arm}
\end{align}

Figure \ref{fig:iterative_yu_1} shows
the system output (transverse position) $y$ and control input $u$
for five of the iterations; the sample instants are denoted by the
$\bullet$ symbol and the ideal trajectory by the grey line.
The initial behaviour (Figures \ref{subfig:arm2_1_y} and
\ref{subfig:arm2_1_u}) is unstable and sampling occurs at the minimum
interval of 100ms the behaviour at the 50th iteration (Figures \ref{subfig:arm2_50_y} and
\ref{subfig:arm2_50_u})just before the
parameter change is close to ideal even though the trajectory is open
loop for nearly 400ms. 
The behaviour at the 51st iteration (Figures \ref{subfig:arm2_51_y} and
\ref{subfig:arm2_51_u}) just after the parameter change is again poor
(although stable) but has become ideal and open loop by iteration 100
(Figures \ref{subfig:arm2_100_y} and \ref{subfig:arm2_100_u}).
The data is replotted in Figure \ref{fig:iterative_2} to show the
transverse position $y$ plotted against longitudinal position.

Figure \ref{fig:iterative_2_params} shows the evolution of the
estimated parameters with iteration number.

\section{Conclusion}
\label{sec:conc}
\begin{itemize}
\item This chapter has given an overview of the current
  state-of-the-art of the event-driven Intermittent Control 
  discussed, in the context of physiological systems, by
  \citet{GawLorLakGol11}.  In particular, Intermittent Control has
  been shown to provide a basis for the human control systems
  associated with balance and motion control.
\item Intermittent control arose in the context of applying control to
  systems and constraints which change through time
  \citep{RonArsGaw99}.  The intermittent control solution allows slow
  optimisation to occur concurrently with a fast control action.
  Adaptation is intrinsic to intermittent control and yet the formal
  relationship between adaptive and intermittent control remains to be
  established.  Some results of experiments with human subjects
  reported by
  \citet{LorGolLakGaw10}, together with the simulations of Sections
  \ref{sec:ex:balance} and \ref{sec:ex:reach}, support the intuition
  that the intermittent interval somehow simplifies the complexities
  of dual control .  A future challenge is to provide a theoretical
  basis formally linking intermittent and adaptive control.  This
  basis would extend applicability of time varying control and would
  enhance investigation of biological controllers which are adaptive
  by nature.
\item It is an interesting question as to where the event-driven
  intermittent control algorithm lies in the human nervous system. IC
  provides time within the feedback loop to use the current state to
  select new motor responses (control structure, law, goal,
  constraints). This facility provides competitive advantage in
  performance, adaptation and survival and is thus likely to operate
  through neural structures which are evolutionarily old as well as
  new \citep{Bre11}.  Refractoriness in humans is associated with
  a-modal response selection rather than sensory processing or motor
  execution \citep{PasJoh98}.  This function suggests plausible
  locations within premotor regions and within the slow
  striatal-prefrontal gating loops
  \citep{JiaKan03,DuxIvaAspMar06,HouBasFanFisFra07,Sei10,BatBuiCamFerLac14,LorKamLakGolGaw14}.
\item It seems plausible that Intermittent Control has
  applications within a broader biomedical context. Some possible
  areas are:
%
\begin{description}
\item [Rehabilitation] practice, following neuromuscular disease such
  as stroke and spinal cord injury, often uses passive closed loop
  learning in which movement is externally imposed by therapists or
  assistive technology (e.g. robotic assisted rehabilitation
  \citep{HuaKra09}). \citet{LorGolLakGaw10} have shown that adaptation
  to parameter changes during human visual-manual control can be
  facilitated by using an explicitly intermittent control strategy.
  For successful learning, active user input should excite the system,
  allowing learning from the observed intermittent open-loop behaviour
  \citep{LorGolLakGaw10}.

\item [Cellular control systems] in general and gene regulatory
  networks in particular seem to have a intermittent nature
  \citep{AlbBurSpeLau08,BalOudCol11,LiuJia12}. It would be interesting
  to examine whether the intermittent control approaches of this paper
  are relevant in the context of cellular control systems.
\end{description}
\item The particular Intermittent control algorithm discussed within
  this Chapter has roots and applications in control engineering
  \citep{RonArsGaw99,GawWan06,GawWan09,GawNeiWag12} and it is hoped
  that this chapter will lead to further cross-fertilisation of
  physiological and engineering research. Some possible areas are:
\begin{description}
\item [Decentralised control] \citep{SanVarAthSaf78,BakPap12} is a pragmatic
approach to the control of large-scale systems where, for reasons of
cost, convenience and reliability it is not possible to control the
entire system by a single centralised controller.
Fundamental control-theoretic principles arising from decentralised
control have been available for some time
\citep{Cle79,AndCle81,GonAld92}.
More recently, following the implementation of decentralised control
using networked control systems \citep{MoyTil07}, attention has focused
on the interaction of communication and control theory
\citep{BaiAnt07,NaiFagZamEva07} and fundamental results have appeared
\citep{NaiEva03,NaiEvaMarMor04,HesNagYon07}. It would be interesting
to apply the physiologically inspired approaches of this Chapter to
decentralised control as well as to reconsider Intermittent Control
in the context of decentralised control systems.
\item [Networked control systems] lead to the ``sampling period
jitter problem'' \citep{Sal07,MoyTil07} where uncertainties in
transmission time lead to unpredictable non-uniform sampling and
stability issues \citep{CloWouHeeNij09}. A number of authors have
suggested that performance may be improved by replacing the standard
zero-order hold by a generalised hold \citep{FeuGoo96,Sal05,Sal07} or
using a dynamical model of the system to interpolate between samples
\citep{ZhiMid03,MonAnt03}.  This can be shown to improve stability
\citep{MonAnt03,HesNagYon07}.  As shown 
by \citet{GawWan11}, the intermittent controller has a similar
feature; it therefore follows that the physiologically inspired form of
intermittent controller described in this chapter has application to
networked control systems.

\item [Robotics.] It seems likely that understanding the control mechanisms behind
  human balance and motion control
  \citep{LorLakGaw09,KamGawGolLakLor13,KamGawGolLor13} and stick
  balancing \citep{GawLeeHalODw13} will have applications in
  robotics. In particular, as discussed by \citet{KamGawGolLakLor13},
  robots, like humans contain redundant possibilities within a
  multi-segmental structure. Thus the multivariable constrained
  intermittent control methods illustrated in Section
  \ref{sec:ex:stand}, and the adaptive versions illustrated in Section
  \ref{sec:aic} may well be applicable to the control of autonomous robots.
\end{description}

\end{itemize}

\newpage
\bibliography{common}


\end{document}